
\documentclass[12pt,twoside]{article}
\pdfoutput=0
\usepackage{amssymb}
\usepackage{amsmath}
\usepackage[makeroom]{cancel}
\usepackage{latexsym}
\usepackage{longtable}
\usepackage{graphicx,bm,psfrag}
\usepackage{graphicx}
\graphicspath{{images/}}

\setlength{\textwidth}{160mm} \setlength{\textheight}{224mm}
\setlength{\topmargin}{-12mm} 
\setlength{\evensidemargin}{3mm} \setlength{\oddsidemargin}{3mm}
\newtheorem{theorem}{Theorem}[section]
\newenvironment{proof}{{\bf Proof}:}{\hspace*{\fill}$\Box$}

\renewcommand{\thetheorem}{\arabic{section}.\arabic{theorem}}

\newtheorem{lemma}[theorem]{Lemma}
\newtheorem{pro}[theorem]{Proposition} 
\newtheorem{cor}[theorem]{Corollary} 
\newtheorem{con}[theorem]{Conjecture} 

\begin{document}
\begin{titlepage}
\begin{center}
\vspace*{2cm} {\Large {\bf  Point-interacting Brownian motions in the KPZ universality class \bigskip\bigskip\\}}
{\large Tomohiro Sasamoto$^{1,2}$ and Herbert Spohn$^1$}\bigskip\bigskip\\
 $^1$Zentrum Mathematik and Physik Department, TU M\"unchen,\\
 Boltzmannstr. 3, D-85747 Garching, Germany\medskip\\
$^2$Department of Physics, Tokyo Institute of Technology\\ 
2-12-1 Ookayama, Meguro-ku, Tokyo, 152-8550, Japan\medskip
\\email:~{\tt sasamoto@phys.titech.ac.jp, spohn@ma.tum.de
}\end{center}
\vspace{5cm} \textbf{Abstract.} We discuss chains of interacting Brownian motions. Their time reversal invariance is broken because of asymmetry in the interaction strength between left and right neighbor. In the limit
of  a very steep and short range potential one arrives at Brownian motions with oblique reflections. For this model we prove a Bethe ansatz formula for the transition probability and self-duality. In case of half-Poisson initial data, duality is used to arrive at a Fredholm determinant for
the generating  function of the number of particles to the left of some reference point at any time $t > 0$.
A formal asymptotics for this determinant establishes the link to the Kardar-Parisi-Zhang universality class.
\end{titlepage}

\section{Nonreversible interacting diffusions}\label{sec1}
\setcounter{equation}{0}
Roughly fifteen years ago, K. Johansson established that the totally asymmetric simple exclusion process (TASEP) is in the Kardar-Parisi-Zhang (KPZ) universality class. More precisely, for step initial conditions he studied  $J_{0,1}(t)$, the particle current between sites 0 and 1,
integrated over the time span $[0,t]$, and proved that 
\begin{equation}\label{1.1}
J_{0,1}(t) = c_{\mathrm{d}}t + c_{\mathrm{f}}t^{1/3}\xi_\mathrm{{GUE}}
\end{equation}
in distribution for large $t$. The random amplitude $\xi_\mathrm{{GUE}}$ is GUE Tracy-Widom distributed. $c_{\mathrm{d}},
c_{\mathrm{f}}$ are explicitly known  constants, but  
to keep the notation light we do not display them here. These are model dependent, non-universal coefficients, which will reappear again
and may take different numerical values. (`d' stands for deterministic and `f' for fluctuations). The scaling exponent $1/3$ was predicted before by Kardar, Parisi, and 
Zhang \cite{KPZ}, see also \cite{HH85,vBKS85}. The most striking feature is the random amplitude, telling us that 
(\ref{1.1}) is not a central limit theorem. Many related results have been established since, for surveys see \cite{Co12,BG12,BP13,QR13}.
Most of them are for specific interacting stochastic particle systems in one dimension, which are discrete and have a dynamics governed by a Markov jump process. In this contribution we will explore interacting one-dimensional diffusion processes
in the KPZ universality class.

As a start we define a family of model systems, explain in more detail the conjectures related to the KPZ universality class, and recall the two major results available so far. The main part of our contribution concerns a 
singular limit, in which the Brownian motions  interact only when they are at the same location.  

To motivate our model system we start from the potential of a coupled chain,
\begin{equation}\label{1.0} 
V_\mathrm{tot}(x) = \sum_{j=1}^{n-1} V(x_{j+1} - x_{j})
\end{equation}
with $x = (x_1,...,x_n)$, $x_j \in \mathbb{R}$, and a twice differentiable nearest neighbor potential, $V$.  To construct a reversible diffusion process with invariant measure
\begin{equation}\label{1.7}
\mathrm{e}^{-V_\mathrm{tot}(x)} \prod _{j=1}^n \mathrm{d}x_j\,,
\end{equation}
the drift is taken to be the gradient of $V_\mathrm{tot}$, while the noise is white and  independent for each coordinate. Then
\begin{equation}\label{1.2}
\mathrm{d}x_j(t) = \big(\tfrac{1}{2}V'( x_{j+1}(t)- x_{j}(t)) - \tfrac{1}{2}V'(x_{j}(t) - x_{j-1}(t))\big)\mathrm{d}t + \mathrm{d}B_j(t)\,,
\end{equation}
$j = 1,....,n$, with the convention that $V'(x_{1}(t) - x_{0}(t))= 0 = V'(x_{n+1}(t) - x_{n}(t))$.  Here $x_j(t) \in \mathbb{R}$ and $\{B_j(t), j= 1,....,n\}$
is a collection of independent standard Brownian motions. Note that the measure in (\ref{1.7}) has infinite mass. 

The dynamics defined by (\ref{1.2}) is invariant under the shift $x_j \leadsto x_j +a$, which will be the origin for slow decay in time. Breaking this shift invariance, for example by adding an external, confining on-site potential $V_\mathrm{ex}$ as $-V_\mathrm{ex}'(x_j(t))\mathrm{d}t$ in (\ref{1.2}), would change the picture completely. 
Just to give one example, one could choose $V$ and $V_\mathrm{ex}$ to be quadratic. 
Then the dynamics governed by Eq. (\ref{1.2}) is an Ornstein-Uhlenbeck process, which has a unique invariant measure, a spectral gap independent of system size, and exponential 
space-time mixing. Setting $V_\mathrm{ex}= 0$, slow decay is regained. Because of shift invariance, we regard $x_j(t)$ as the height at lattice site $j$ at time $t$. In applications $x_j(t)$ could describe a one-dimensional interface which separates two bulk phases of a thin film of a binary liquid  mixture. $V$ then models  the surface free energy (surface tension) of this interface.

If in (\ref{1.2}) one introduces the stretch $r_j = x_j - x_{j-1}$ and adopts periodic boundary conditions, then 
\begin{equation}\label{1.4}
\mathrm{d}r_j(t) = \tfrac{1}{2}\Delta V'(r_j(t))\mathrm{d}t + \nabla \mathrm{d}B_j(t)\,,\quad j=1,...,n\,,
\end{equation}
where $\Delta$ denotes the lattice Laplacian and $\nabla$ the finite difference operator, both understood with periodic boundary conditions. 
Clearly, $r_j(t)$ is locally conserved and the sum $\sum_{j=1}^n r_j(t)$ is conserved. As a consequence the $r(t)$ process has a one-parameter family of invariant probability measures,  indexed by $\ell$, which is obtained by conditioning the measure
\begin{equation}\label{1.7a}
\prod _{j=1}^n \mathrm{e}^{-V(r_{j})} \mathrm{d}r_j\,,
\end{equation}
on the hyperplane $\{r\,|\,\sum_{j=1}^n r_j= n\ell\}$. 
In the infinite volume limit, the $\{r_j\}$ are  i.i.d. with the single site distribution
\begin{equation}\label{1.5}
Z^{-1}\mathrm{e}^{-V(r_j) - Pr_j} \mathrm{d} r_j\,,\quad Z = \int \mathrm{e}^{-V(u) - Pu} \mathrm{d}u\,,\quad
\mathbb{E}_P(r_j) = \ell\,,
\end{equation}
where $\mathbb{E}_P(\cdot)$ denotes expectation with respect to the product measure. The parameter $P$ controls the average value of $r_j$. To have  $Z < \infty$ for a nonempty interval of values of $P$, we require the potential $V$ to be bounded from below and to have at least a one-sided bound as $V(u) \geq c_1 + c_2|u|$, either for $u > 0$ or for $u < 0$,
with $c_2 > 0$. Note that 
\begin{equation}\label{1.6}
-\mathbb{E}_P(V'(r_j)) = P\,,
\end{equation}
which means that $P$ is the equilibrium pressure in the chain. The diffusive limit of (\ref{1.4}) has been studied in a famous work
by Guo, Papanicolaou, and Varadhan \cite{GPV88}, who prove that on a large space-time scale the random field $\{r_j(t), j = 1,...,n\}$ is well approximated by a deterministic nonlinear diffusion
equation.
The fluctuations relative to the deterministic space-time profile  are Gaussian as proved by Chang and Yau \cite{CY92}. 

KPZ universality enters the play, when the dynamics (\ref{1.2}) is modified to become nonreversible. 
In the physical picture of an interface, the breaking of time reversal invariance results from an imbalance between the two bulk phases which induces  a systematic motion. On a more abstract level there are many options.
One possibility is to start from a Gaussian process by setting $V(u) = u^2$
and adding nonlinearities such that shift invariance is maintained and  the stationary Gaussian measure  of the linear
equations remains stationary,
see \cite{SS09} for a worked out example. Here we take a different route by splitting the two drift  terms,
$\tfrac{1}{2}V'(x_{j+1}(t) - x_{j}(t))$ and $ - \tfrac{1}{2} V'( x_{j}(t) - x_{j-1}(t))$, not symmetrically but
asymmetrically with fraction $p$ to the right and fraction $q$ to the left, $p+q = 1$, $0 \leq p \leq 1$.
Then (\ref{1.2}) turns into
\begin{equation}\label{1.8}
\mathrm{d}x_j(t) = \big(pV'( x_{j+1}(t)- x_{j}(t)) - qV'(x_{j}(t) - x_{j-1}(t))\big)\mathrm{d}t + \mathrm{d}B_j(t)\,.
\end{equation}
The totally asymmetric limits correspond to $p = 0,1$. One easily checks that 
for all $p$ the measure (\ref{1.7}) is still invariant which, of course, is a good reason to break time reversal invariance in this particular way.  This property  is in analogy to the ASEP, where the Bernoulli measures are invariant independently of the choice of the right hopping rate $p$. 

If, as before, one switches to the stretches $r_j$, then 
\begin{equation}\label{1.8a}
\mathrm{d}r_j(t) = \tfrac{1}{2}\Delta_\mathrm{p} V'(r_j(t))\mathrm{d}t + \nabla \mathrm{d}B_j(t)\,,\quad j=1,...,n\,,
\end{equation}
with periodic boundary conditions and $\tfrac{1}{2}\Delta_\mathrm{p}f(j) = pf(j+1) + q f(j-1) - f(j)$.
Because of the asymmetry, the macroscopic scale is hyperbolic rather than diffusive.
We denote by $\ell(u,t)$ the macroscopic field for the local stretch $r_j(t)$, where $u$ is the continuum limit 
of the labeling by lattice sites $j$. Then, using the entropy method of Yau \cite{Y91}, it can be proved that the deterministic limit satisfies the hyperbolic conservation law
\begin{equation}\label{1.10}
\partial_t \ell + (p-q)\partial_u P(\ell) =0
\end{equation}
 with $P(\ell)$ the function inverse to $\mathbb{E}_P(r_0) = \ell$.
 Since $P'(\ell) < 0$,  the inverse is well defined. The limit result leading to (\ref{1.10})
holds for initial profiles which are slowly varying on the scale of the lattice and up to the first time when a shock 
is formed. 

At this point we can explain the striking difference between reversible and nonreversible systems. Let us impose the periodic initial configuration $x_j = \bar{\ell} j$, $j \in \mathbb{Z}$ and, assuming that the dynamics for the infinite system is well defined,
let us  focus on $x_0(t)$, the particle starting at the origin. For the symmetric model one expects
\begin{equation}\label{1.11}
x_0(t) = c_{\mathrm{f}} t^{1/4} \xi_\mathrm{G}
\end{equation}
as $t \to \infty$ with $\xi_\mathrm{G}$ a standard mean zero Gaussian random variable. We are not aware of a completely
written out proof, but the key elements can be found in \cite{CY92}. Harris \cite{Har65} considers independent Brownian motions,
such that the labeling is maintained according to their order. For the dynamics defined by (\ref{1.2}) this corresponds to the limit of a strongly repulsive
potential $V$ with its support shrinking to zero. In \cite{Har65} it is proved that $x_0(t)$ is well-defined and that the scaled process $\epsilon^{1/4}x_0(\epsilon^{-1}t)$ has a limit as $\epsilon \to 0$ which is a Gaussian process with an explicitly computed covariance.

In contrast, for the nonreversible system it is conjectured that
\begin{equation}\label{1.12}
x_0(t) = (p-q)P(\bar{\ell})t +  c_{\mathrm{f}}t^{1/3} \xi_\mathrm{GOE}
\end{equation}
in distribution as $t \to \infty$. The anticipated numerical value of $c_{\mathrm{f}}$ is explained in Appendix \ref{app.c}. 
Note that, in general,  there could be specific values of $\bar{\ell}$, for which $c_{\mathrm{f}} =0$. In particular, for the Gaussian process with
$V(u) = \tfrac{1}{2} u^2$, one obtains $P(\ell) = \ell$ and $c_{\mathrm{f}} =0$ for all $\ell$. The random amplitude
$ \xi_\mathrm{GOE}$ has the distribution function
\begin{equation}\label{1.13}
\mathbb{P}(\xi_\mathrm{GOE} \leq s) = \det(1 - P_sB_0P_s)\,.
\end{equation}
Here the determinant is over $L^2(\mathbb{R})$, $P_s$ projects onto the half-line $[s,\infty)$, and  $B_0$ is a Hermitean operator
with integral kernel $B_0(u,u') = \mathrm{Ai}(u+u')$, $\mathrm{Ai}$ being the standard Airy function. 
As proved by Tracy and Widom \cite{TW94}, the expression (\ref{1.13}) is also the 
distribution function of the largest eigenvalue of the Gaussian Orthogonal Ensemble (GOE) of  real symmetric $N\times N$ random matrices in the limit $N \to \infty$,  see \cite{S04,FS05} for the particular representation (\ref{1.13}).

As in the case of a reversible model, one can regard $x_0(t)$ as a stochastic process in $t$. No definite conjectures on its scaling limit are available. We refer to \cite{CQ11} for a discussion.

A proof of (\ref{1.12}) seems to be difficult with current techniques, except for the Harris limiting case with $q = 1$. Then the process
$\{x_j(t), j \in \mathbb{Z}\}$ is constructed in the following way:  for all $j$, $x_j(0) = j$ and $x_j(t)$ performs a Brownian motion being reflected at the Brownian particle $x_{j-1}(t)$. Because of collisions, $x_0(t)$ is pushed to the right and,
as proved in \cite{FSW13}, it holds that
\begin{equation}\label{1.14}
\lim_{t \to \infty}(2t)^{-1/3}(x_0(t) - t) = \xi_\mathrm{GOE} 
\end{equation}
in distribution.

There is a second example which can be analysed in considerable detail and again confirms anomalous fluctuations. 
As before the dynamics is totally asymmetric, $q = 1$, but the potential is smooth and given by $V(u) = \mathrm{e}^{-u}$. Then
\begin{equation}\label{1.15}
x_0(t) = x_0(0) + B_0(t)\,,\quad      \mathrm{d}x_j(t) =  \exp\!\big(-x_{j}(t) + x_{j-1}(t)\big)\mathrm{d}t +
     \mathrm{d} B_j(t)\,, \,j = 1,2,...\,. 
\end{equation}
The initial conditions are $x_0(0) = 0$ and, formally, $x_j(0)= - \infty$ for $j \geq 1$. As proved in \cite{MO07},
there is a law of large numbers which states that
\begin{equation}\label{1.16}
\lim_{t \to \infty} t^{-1}x_{\lfloor ut \rfloor}(t) = \phi(u)\,\,\mathrm{a.s.}
\end{equation}
 for $u > 0$ with $\lfloor \cdot \rfloor$ denoting integer part. The limit function $\phi$ can be guessed by realizing that on the macroscopic scale the slope satisfies Eq. (\ref{1.10}). 
 First note that $\ell = - \psi(P)$ with $\psi = \Gamma'/\Gamma$, the Digamma function. Hence
 \begin{equation}\label{1.16a}
 \phi(u) = \inf_{s \geq 0} \big(s - u \psi(s)\big)\,,
 \end{equation}
 see \cite{Sp12} for details. $\phi(0) = 0$, $\phi''<0$, and $\phi$ has a single strictly positive maximum
before dropping to $-\infty$ as $u\to\infty$. Thus $t\phi(u/t)$ reproduces the required singular initial conditions 
as $t\to 0$.

 Even more remarkable, one has a limit result \cite{BC13,BCF13} for the fluctuations, 
 \begin{equation}\label{1.17}
\lim_{t \to \infty} t^{-1/3}\big(x_{\lfloor ut \rfloor}(t)  - t \phi(u)\big) = \kappa(u)^{1/3} \xi_{\mathrm{GUE}} \,.
\end{equation}
The non-universal coefficient $\kappa(u)$ will be discussed in Appendix \ref{app.c}. Note that the proper rule is to subtract the asymptotic mean value and not the more obvious mean at time $t$. In fact $\mathbb{E}( \xi_{\mathrm{GUE}} ) = -1.77.$

In our contribution we will study interacting diffusions with \textit{partial} asymmetry and random initial data. As in the previous example, the index $j \in \mathbb{Z}_+$. But we  
have to resort to point interactions. The precise definition of the dynamics will be given in the following section.
As initial conditions we assume that $\{x_0, x_{j+1} - x_j, j \geq 0\}$ are independent exponentially distributed random variables with mean $1$. Hence at $t=0$ the macroscopic profile  is $\phi(u) = u$, $u\geq 0$. For point interactions,
one has $V=0$ in Eq. (\ref{1.5}) and thus $P(\ell) = \ell^{-1}$. The integrated version of  Eq. (\ref{1.10}) reads 
\begin{equation}\label{1.18a}
\partial_t \phi + (p-q)(\partial_u \phi)^{-1} = 0\,,  
\end{equation}
which for our initial conditions has the self-similar solution
\begin{equation}\label{1.18}
\phi(u,t/\gamma) = 2 \sqrt{ut}\,\,\,\mathrm{for}\,\,\,0 \leq u \leq t\,, \quad \phi(u,t) = u+t \, \,\,\mathrm{for}\,\,t \leq u
\end{equation}
with $p < 1/2$, $\gamma = q - p$. Anomalous fluctuations are expected to be seen in the window $0 < u < \gamma t$ not
too close to the boundary points.

The three examples discussed above require distinct techniques in their analysis. The first example  uses that, upon judiciously choosing
dummy variables, there is an embedding signed determinantal process.
In the second example one derives
a Fredholm determinant for the generating function $\mathbb{E}\big(\exp[-\zeta\mathrm{e}^{x_j(t)}]\big)$ with $\zeta \in \mathbb{C}$, $\Re{\zeta} > 0$. In contrast our analysis is based on self-duality
of the particle system. $x_j(t)$ is replaced by  $N(u,t)$, which is the number of particles to the left of $u$ at time $t$, 
\textit{i.e.} the largest $j$ such that $x_j(t) \leq u$. $\mathrm{e}$ is replaced by $\tau = p/q <1$ and $\exp$ 
by the  $\tau$-deformed  exponential $e_\tau$.
Following the strategy in \cite{BCS14}, we arrive at a Fredholm determinant for the expectation  $\mathbb{E}\big(e_\tau(
\zeta\tau^{N(u,t)})\big)$. This is our main result. To establish the
connection to KPZ universality, we add a heuristic discussion of a saddle point analysis for this Fredholm determinant.
To prove duality we need some information on the transition probability, which will be provided in a form following from the Bethe ansatz. Such a formula could be of use also in other applications.\\\\ 
\textbf{Acknowledgements}. We thank for the warm  hospitality at the Institute for Advanced Study at Princeton, where the major part of our work was completed. We thank Tadahisa Funaki for advice concerning \ref{app.a}. 
HS thanks Jeremy Quastel for most constructive discussions and Thomas Weiss for helping with the figures. TS is grateful for the support from KAKENHI 22740054 and Sumitomo Foundation.

\section{Brownian motions with point interactions,\\ self-duality}\label{sec2}
\setcounter{equation}{0}

We consider $n$ interacting Brownian particles governed by the asymmetric dynamics of Eq. (\ref{1.8}). Point interactions are realized through a sequence of potentials, $V_\epsilon$, which are repulsive, diverge sufficiently rapidly as $|u|\to 0$, and whose range shrinks to zero as $\epsilon \to 0$. More precisely, we start from a reference potential $V \in C^2(\mathbb{R} \setminus \{0\}, \mathbb{R}_+)$  with the properties $V(u) = V(-u)$,
$\mathrm{supp} \,V = [-1,1]$, $V'(u) \leq 0$ for $u > 0$, and, for some $\delta > 0$, $\lim_{u \to 0} |u|^{\delta}V(u) > 0$. 
The scaled potential is defined by $V_\epsilon(u) = V(u/\epsilon)$ and the corresponding diffusion process is denoted by $y^\epsilon(t)$. Since the potential is entrance - no exit \cite{MK}, the positions can be ordered as $y^\epsilon_1(t)\leq ...
\leq y^\epsilon_m(t)$. Hence $y^\epsilon(t) \in \mathbb{W}_m^+$, the Weyl chamber in $\mathbb{R}^m$ such that the left-right order is according to increasing index. Since the
particle order is preserved, we deviate slightly from the viewpoint of the introduction and regard the positions of particles  as a point configuration in $\mathbb{R}$. As will be proved in Appendix \ref{app.a}, there exists a limit process, $y(t) \in \mathbb{W}_m^+$,
such that $\lim_{\epsilon \to 0} y^\epsilon(t) = y(t)$. Presumably the limit holds a.s. in the sup norm, but for our purposes it suffices to 
prove that $\lim_{\epsilon \to 0}\mathbb{E}\big((y^\epsilon(t)- y(t))^2\big) = 0$. The limit process $y(t)$ is Brownian motion with point interaction, also known as  Brownian motion  with oblique reflection.

$y(t)$ is a semi-martingale satisfying
\begin{equation}\label{2.1}
y_j(t) = y_j + B_j(t) - p \Lambda^{(j,j+1)}(t) +q \Lambda^{(j-1,j)}(t)\,, 
\end{equation}
$t\geq 0$, $j = 1,...,m$. Here $p+q = 1$, $0 \leq p \leq 1$, and by definition $\Lambda^{(0,1)}(t) = 0 =  \Lambda^{(m,m+1)}(t)$.
\begin{equation}\label{2.2}
\Lambda^{(j,j+1)}(\cdot) = L^{y_{j+1} -y_j}(\cdot,0) 
\end{equation}
is the right-sided local time accumulated at the origin by the nonnegative martingale $y_{j+1}(\cdot) -y_j(\cdot)$.
So $y_j(t)$ is pushed to the left with fraction $p$ of the local time whenever $y_j(t) = y_{j+1}(t)$ and it is pushed
to the right with fraction $q$ of the local time whenever $y_j(t) = y_{j-1}(t)$, which implies that the drift always pushes  towards the interior of $\mathbb{W}_m^+$. If $q=1$,  $y_{j+1}(t)$ is reflected at $y_j(t)$.
In particular, $y_1(t)$ is Brownian motion. If $q = 1/2$, the dynamics corresponds to independent Brownian motions with ordering of labels maintained. In \cite{KPS12} it is proved that (\ref{2.1}) has a unique strong solution. Furthermore, triple collisions,
\textit{i.e.} the sets $\{y_j(t) = y_{j+1}(t) = y_{j+2}(t) \,\,\mathrm{for\,\, some}\,\, t\}$, have probability 0.
 
 Let $f:  \mathbb{W}_m^+ \to \mathbb{R}$ be a $C^2$-function and define
 \begin{equation}\label{2.3}
f(y,t) = \mathbb{E}_y\big(f(y(t)\big)
 \end{equation} 
with $\mathbb{E}_y$ denoting expectation of the $y(t)$ process of (\ref{2.1}) starting at $y \in \mathbb{W}_m^+$. As proved in Section \ref{sec6}, it holds
\begin{equation}\label{2.4}
\partial_t f = \tfrac{1}{2} \Delta_y f
\end{equation} 
for $ y \in (\mathbb{W}_m^+)^\circ$ and
\begin{equation}\label{2.5}
(p\partial_j - q\partial_{j+1})f \big |_{y_j = y_{j+1}} = 0\,,
\end{equation} 
the directional derivative being taken from the interior of $\mathbb{W}_m^+$.
$q = 1/2$ corresponds to normal reflection at $\partial \mathbb{W}_m^+$. With this boundary condition
$\Delta_y$ is a self-adjoint operator. $q\neq 1/2$ is also referred to as oblique reflection at $\partial \mathbb{W}_m^+$
\cite{VW84,HW87}. 

In addition to the $y$-particles we introduce $n$ dual particles denoted by $(x_1(t),...,x_n(t))$ $ = x(t)$. They are ordered as
$x_n \leq ... \leq x_1$, hence $x \in \mathbb{W}_n^-$, the Weyl chamber in $\mathbb{R}^n$ such that the left-right order is according to decreasing index. For the dual particles the role of $q$ and $p$ is interchanged. Thus their dynamics is still governed by (\ref{2.1}) with
$\Lambda^{(j,j+1)}(\cdot) = L^{x_{j} -x_{j+1}}(\cdot,0)$. Also the boundary condition (\ref{2.5}) remains valid, 
the directional derivative being taken from the interior of $\mathbb{W}_n^-$.

The main goal of this section is to establish that the $x(t)$ process is dual to the $y(t)$ process. The duality function is defined by
\begin{equation}\label{2.6}
H(x,y) = \prod_{j=1}^n \prod_{i=1}^m \tau^{\theta(x_j - y_i)}\,,
\end{equation} 
where $\tau = p/q$ and throughout we  restrict to the case $0< \tau < 1$. $\theta(u) = 0$ for $u \leq 0$ and $\theta(u) = 1$
for $u >0$. Such type of duality is known also for other stochastic particle systems \cite{JK14}, in particular for the ASEP
\cite{BCS14}. 
\begin{theorem}\label{th1}
Pointwise on $\mathbb{W}_n^- \times \mathbb{W}_m^+$ it holds
\begin{equation}\label{2.8}
\mathbb{E}_x\big(H(x(t),y)\big)
 = \mathbb{E}_y\big(H(x,y(t))\big)  \,.             
\end{equation} 
\end{theorem}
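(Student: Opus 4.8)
The plan is to follow the usual route for proving a Markov duality: reduce the integrated identity (\ref{2.8}) to an infinitesimal one at the level of the generators and then propagate it by a semigroup interpolation. Write $\mathcal{L}^x$ for the generator of the $x(t)$-process on functions of $x\in\mathbb{W}_n^-$ and $\mathcal{L}^y$ for that of the $y(t)$-process on $\mathbb{W}_m^+$; on the respective interiors these are $\tfrac12\Delta_x$ and $\tfrac12\Delta_y$ with the oblique boundary conditions (\ref{2.5}) (with $p,q$ interchanged in the $x$-case). For fixed $x,y$ and $0<s<t$ put
\begin{equation*}
F(s)=\mathbb{E}_x\mathbb{E}_y\big(H(x(s),y(t-s))\big),
\end{equation*}
the two processes being run independently, so that $F(0)=\mathbb{E}_y\big(H(x,y(t))\big)$ and $F(t)=\mathbb{E}_x\big(H(x(t),y)\big)$. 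Since, at least formally, $\tfrac{d}{ds}F(s)=\mathbb{E}_x\mathbb{E}_y\big((\mathcal{L}^xH-\mathcal{L}^yH)(x(s),y(t-s))\big)$ with $\mathcal{L}^x$ acting on the first and $\mathcal{L}^y$ on the second argument of $H$, the theorem reduces to the generator identity $\mathcal{L}^xH=\mathcal{L}^yH$ on $\mathbb{W}_n^-\times\mathbb{W}_m^+$ — the continuum analogue of the ASEP computation in \cite{BCS14}.

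To verify the generator identity, write $H(x,y)=\prod_j\tau^{N_y(x_j)}=\prod_i\tau^{\bar N_x(y_i)}$ with $N_y(u)=\#\{i:y_i<u\}$ and $\bar N_x(v)=\#\{j:x_j>v\}$. Off the coincidence hyperplanes $\{x_j=y_i\}$ the function $H$ is locally constant; in particular it is locally constant near the Weyl walls (which, outside a set of codimension two, avoid the coincidences), so (\ref{2.5}) holds there trivially and $\Delta_xH=\Delta_yH=0$. Hence both $\mathcal{L}^xH$ and $\mathcal{L}^yH$ are supported on $\bigcup_{i,j}\{x_j=y_i\}$. Because $\prod_{l\neq j}\tau^{N_y(x_l)}$ does not depend on $x_j$, a one-variable computation represents $\tfrac12\Delta_xH$ as a sum of terms $\tfrac12(\tau-1)\tau^{i-1}\big(\prod_{l\neq j}\tau^{N_y(x_l)}\big)\delta'(x_j-y_i)$, and symmetrically for $\tfrac12\Delta_yH$ with $\tau^{i-1}$ replaced by $\tau^{j-1}$ and $\prod_{l\neq j}\tau^{N_y(x_l)}$ by $\prod_{k\neq i}\tau^{\bar N_x(y_k)}$. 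On $\{x_j=y_i\}$ the Weyl orderings force $N_y(x_j)=i-1$ and $\bar N_x(y_i)=j-1$, so both coefficients equal $\tfrac12(\tau-1)\tau^{M}$ with $M=\#\{(i,j):y_i<x_j\}$, and moreover $M$ is constant as the coincidence point slides within the chamber; this is exactly what is needed to conclude $\Delta_xH=\Delta_yH$.

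The real obstacle is that $H$ is discontinuous and therefore in the domain of neither generator, so the interpolation — and even the generator identity — must be read in a weak sense, and the passage between them justified by approximation. I would regularize: replace $\theta$ by a monotone $\theta_\delta\in C^\infty$ and work also with the pre-limit diffusions $y^\epsilon$ of Appendix \ref{app.a} and their analogue $x^\epsilon$, whose smooth-potential generators admit $H_\delta=\prod_{i,j}\tau^{\theta_\delta(x_j-y_i)}$ in their domains. For fixed $\epsilon,\delta$ one obtains the interpolation identity with a computable defect $\mathcal{L}^{x,\epsilon}H_\delta-\mathcal{L}^{y,\epsilon}H_\delta$, concentrated near configurations where some $x_j\approx y_i$ occurs together with a near-collision $x_j\approx x_{j\pm1}$ (or $y_i\approx y_{i\pm1}$). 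One then lets $\delta\to0$ and $\epsilon\to0$: the defect, pushed through the semigroups and integrated in $s$, is shown to vanish — here the absence of triple collisions \cite{KPS12}, the bound $0<\tau<1$, and the precise matching of the two asymmetries through $\tau=p/q$ enter — while $\mathbb{E}_x\big(H_\delta(x^\epsilon(t),y)\big)\to\mathbb{E}_x\big(H(x(t),y)\big)$ by dominated convergence and the $L^2$-convergence $x^\epsilon\to x$ of Appendix \ref{app.a}, and likewise on the $y$-side, giving (\ref{2.8}). Alternatively, the explicit Bethe-ansatz representation of the transition density established later in the paper could be used to supply directly the regularity needed to justify the formal steps. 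The delicate control of the defect after its passage through the semigroups is, I expect, the hard part.
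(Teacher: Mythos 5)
Your core strategy coincides with the paper's: a semigroup interpolation in $s$ between $\mathrm{e}^{L_x t}\otimes 1$ and $1\otimes\mathrm{e}^{L_y t}$, reduced to the distributional identity $\Delta_x H=\Delta_y H$, which you verify correctly (your coefficient $\tau^{M}$ is the paper's $\tau^{\alpha-1}\tau^{\beta-1}\prod_{j\neq\alpha}\prod_{i\neq\beta}\tau^{\theta(x_j-y_i)}$ in (\ref{2.17})). The divergence is in how the formal steps are justified, and here your primary route leaves a genuine gap. You propose a double regularization ($V_\epsilon$ for the dynamics, $\theta_\delta$ for the duality function) and concede that controlling the resulting defect "after its passage through the semigroups" is the hard part — but that control is exactly the mathematical content of the theorem, and it is not supplied. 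In particular, your claim that the boundary condition (\ref{2.5}) holds "trivially" because $H$ is locally constant near the Weyl walls misses the essential point: what must lie in the domain of $L_x$ is not $H$ itself but the $y$-smoothed function $x\mapsto\int P^+_y(y',s)H(x,y')\,\mathrm{d}y'$, and after smoothing in $y$ the codimension-two set $\{x_j=x_{j+1}=y_\beta\}$ contributes a nontrivial $\delta$-term to $(\tau\partial_j-\partial_{j+1})$ on the wall. The paper's Lemma \ref{le1} is precisely this verification: from (\ref{2.9}) the coefficient at $x_j=x_{j+1}=y_\beta$ is $\tau^{j-1}$ for $\partial_j$ and $\tau^{j}$ for $\partial_{j+1}$, so $\tau\cdot\tau^{j-1}-\tau^{j}=0$ — this is the one place where the specific ratio $\tau=p/q$ between the two obliquenesses is used, and it is a clean finite computation, not an asymptotic defect estimate.

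The second ingredient the paper uses, which you mention only as an "alternative," is in fact the chosen route: the smoothness of the transition densities $P^\pm$ away from triple collisions, established via the Bethe ansatz in Section \ref{sec6} (Theorem \ref{th7}), is what allows $L_xH-L_yH$ to be paired distributionally inside the interpolation integral (\ref{2.16}) without any regularization of $\theta$ or of the dynamics. So the recommended repair is not to pursue the $(\epsilon,\delta)$-defect analysis but to (i) prove the boundary-condition statement of Lemma \ref{le1} by the explicit computation above, and (ii) invoke the regularity of the transition kernels to make the interpolation derivative rigorous. With those two pieces your argument becomes the paper's proof; without them, the proposal identifies the right obstacle but does not overcome it.
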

\begin{proof}
We first compute the distributional derivative of $H$.  Setting $\partial_{x_\alpha} = \partial/\partial x_\alpha$ for $\alpha = 1,...,n$ one obtains
\begin{eqnarray}\label{2.9}
&&\hspace{-58pt}\partial_{ x_\alpha}H(x,y)  = -(1-\tau) \sum_{\beta = 1}^m \delta(x_\alpha - y_\beta) \prod_{i'= 1}^m 
\tau^{\theta(x_\alpha - y_{i'})} \mathop{\prod^n_{j = 1}}_{j\neq\alpha} \prod_{i=1}^m \tau^{\theta(x_j - y_{i}) }\nonumber\\
&&\hspace{0pt}= -(1 - \tau) \sum_{\beta = 1}^m \delta(x_\alpha - y_\beta) \mathop{\prod_{j' =1}^n}_{j'\neq\alpha}
\tau^{\theta(x_{j'} - x_{\alpha})}\mathop{\prod_{i'=1}^m}_{i'\neq\beta} \tau^{\theta(x_{i'} - y_{\beta})}
\mathop{\prod_{j = 1}^n}_{ j\neq\alpha} \mathop{\prod_{i=1}^m}_{i^{}\neq\beta} \tau^{\theta(x_j - y_{i}) }\nonumber\\
&&\hspace{0pt}= -(1-\tau)  \sum_{\beta = 1}^m \delta(x_\alpha - y_\beta)\tau^{\beta -1}\tau^{\alpha -1}
\mathop{\prod_{j = 1}^n}_{ j\neq\alpha} \mathop{\prod_{i=1}^m}_{ i\neq\beta} \tau^{\theta(x_j - y_{i}) }
 \end{eqnarray}
and correspondingly
\begin{equation}\label{2.10}
\partial_{y_{\beta}}H(x,y)  = (1-\tau)  \sum_{\alpha = 1}^n \delta(x_\alpha - y_\beta)\tau^{\beta -1}\tau^{\alpha -1}
\mathop{\prod_{j = 1}^n}_{ j\neq\alpha} \mathop{\prod_{i=1}^m}_{ i\neq\beta} \tau^{\theta(x_j - y_{i}) }\,.
 \end{equation}
Let us set $\mathcal{D}(L_x) = C^2_{0,\mathrm{bc}}(\mathbb{W}^-_n, \mathbb{R})$, the set of all twice continuously differentiable functions vanishing rapidly at infinity and
with boundary conditions
\begin{equation}\label{2.11}
(p\partial_j - q\partial_{j+1})f \big |_{x_j = x_{j+1}} = 0\,.
\end{equation} 
As will be discussed in Section \ref{sec6}, the generator $L_x$ of the diffusion process $x(t)$ is given by  $L_x = \tfrac{1}{2} \Delta_x$ on the domain $\mathcal{D}(L_x)$ and correspondingly for $L_y$. The integral kernel of $\mathrm{e}^{L_x t}$, denoted by $P_{x}^-(\mathrm{d}x',t)$,
is the transition probability for $x(t)$. It has a density, $P^-_x(\mathrm{d}x',t)  = P^-_x(x',t)\mathrm{d}x' $.
$ P_x^-(x',t)$ is $C^{\infty}$ in both $x,x'$ when restricted to the set $\big(\mathbb{W}^-_n\setminus \{ x\,|\,x_j = x_{j+1} = x_{j+2}, j = 1,...,n-2\}\big)^{\times 2}$. 
\begin{lemma}\label{le1}
Let $f \in  C^2_0(\mathbb{W}^+_m, \mathbb{R})$ and define
\begin{equation}\label{2.12}
F(x) = \int_{\mathbb{W}^+_m} H(x,y)f(y)\mathrm{d}y\,.
\end{equation} 
Then $F \in  \mathcal{D}(L_x)$.
\end{lemma}
\begin{proof} Since $H$ is a product of convolutions, $F \in C^2_0(\mathbb{W}^-_n, \mathbb{R})$. We use (\ref{2.9}) for $\alpha = j,j+1$. Then 
\begin{equation}\label{2.13}
(\tau\partial_j - \partial_{j+1})F\big |_{x_j = x_{j+1}} = 0\,.
\end{equation}
\end{proof} 

By the fundamental theorem of calculus, for $0< \epsilon < t - \epsilon$,
\begin{eqnarray}\label{2.14}
&&\hspace{-80pt}\big(\mathrm{e}^{L_x(t-\epsilon)}\otimes
\mathrm{e}^{L_y\epsilon}H\big)(x,y) - \big(\mathrm{e}^{L_x\epsilon}\otimes
\mathrm{e}^{L_y(t-\epsilon)}H\big)(x,y)\nonumber\\
&&\hspace{30pt}
= \int_\epsilon^{t-\epsilon} \mathrm{d}s \frac{\mathrm{d}}{\mathrm{d}s}\big(\mathrm{e}^{L_x s}\otimes
\mathrm{e}^{L_y(t-s)}H\big)(x,y)\,.
\end{eqnarray} 
By Lemma \ref{le1} and for $\epsilon \leq s \leq t -\epsilon$ the function
\begin{equation}\label{2.15}
x \mapsto \int_{\mathbb{W}^+_m} \mathrm{d}y' P^+_y(y',s)H(x,y') \in \mathcal{D}(L_x)
\end{equation} 
and correspondingly for $y$. Hence one can differentiate in (\ref{2.14}) and obtains
\begin{eqnarray}\label{2.16}
&&\hspace{-40pt}\big(\mathrm{e}^{L_x(t-\epsilon)}\otimes
\mathrm{e}^{L_y\epsilon}H\big)(x,y) - \big(\mathrm{e}^{L_x\epsilon}\otimes
\mathrm{e}^{L_y(t-\epsilon)}H\big)(x,y)\nonumber\\
&&\hspace{-20pt}
= \int_\epsilon^{t-\epsilon} \!\!\!\mathrm{d}s 
\int_{\mathbb{W}^-_n}  \mathrm{d}x' \int_{\mathbb{W}^+_m}\mathrm{d}y' P^-_x(x',s)  P^+_y(y',t-s) 
\big(L_xH(x',y') - L_yH(x',y')\big)\,.
\end{eqnarray} 
Since the transition probabilities are smooth, $L_xH$ and $L_yH$ can be obtained as distributional derivatives. Hence
\begin{equation}\label{2.17}
\Delta_x H(x,y) = -(1-\tau) \sum_{\alpha = 1}^n \sum_{\beta = 1}^m \delta'(x_\alpha - y_\beta)\tau^{\beta -1}\tau^{\alpha -1}
\mathop{\prod_{j = 1}^n}_{ j\neq\alpha} \mathop{\prod_{i=1}^m}_{ i\neq\beta} \tau^{\theta(x_j - y_{i}) }
= \Delta_y H(x,y)
\end{equation}
and 
\begin{equation}\label{2.18}
\big(\mathrm{e}^{L_x(t-\epsilon)}\otimes
\mathrm{e}^{L_y\epsilon}H\big)(x,y) = \big(\mathrm{e}^{L_x\epsilon}\otimes
\mathrm{e}^{L_y(t-\epsilon)}H\big)(x,y)\,.
\end{equation} 

We integrate Eq. (\ref{2.18}) against the smooth function $f_1(x)f_2(y)$. By continuity we can take the limit $\epsilon \to
0$. The integrand of the resulting identity  is continuous in $x,y$ and the identity (\ref{2.8}) holds pointwise. \bigskip
\end{proof}\\
\textit{Remark}. An alternative proof, based on ASEP duality, is discussed in Appendix \ref{app.b}.


\section{Half-line Poisson as initial conditions, contour integrations}\label{sec3}
\setcounter{equation}{0}
We assume that initially the particles are Poisson distributed with density profile $\rho(u) = \theta(u)$. By space-time scaling, 
the density $1$ on the half-line can be changed to any other value. Let us denote by $N(u;y)$ the number of particles in the configuration $y$ located in $(-\infty,u]$ and set $N(u,t) = N(u;y(t))$ as a random variable. We average the duality function over the Poisson distribution,
\begin{eqnarray}\label{1}
&&\hspace{-30pt}\mathbb{E}_{\mathrm{poi}}\big(H(x,\cdot)\big) = \mathbb{E}_{\mathrm{poi}}\big(\prod_{j=1}^n 
\prod_{i=1}^\infty 
\tau^{\theta(x_j - y_i)}\big) = \mathbb{E}_{\mathrm{poi}}\big(\prod_{j=1}^n \tau^{N(x_j;y)}\big)\nonumber\\
&&\hspace{41pt}= \exp\Big[ \int_0^\infty \!\!\mathrm{d}u\big( \prod_{j=1}^n  \tau^{\theta(x_j - u)} -1\big)\Big] = F_n(x)\,.
\end{eqnarray} 
Next the duality relation (\ref{2.8}) is averaged  over the Poisson distribution with the result \begin{equation}\label{2}
F_n(x,t) = \mathbb{E}_x\big(F_n(x(t))\big) =\mathbb{E} \big(\prod_{j=1}^n \tau^{N(x_j,t)}\big)\,.
\end{equation} 
Here $\mathbb{E}$ refers to the particle process with half-line Poisson as initial measure. We regard the right-hand generating function as defined through the left-hand side.
It can be obtained by first considering a Poisson measure with density $\rho(u) = 1$ in the interval $[0,L]$
and zero outside. Then the Poisson average in (\ref{2}) is well defined. Taking the limit $L\to \infty$ yields the left-hand side of (\ref{2}).

More ambitiously, one should first define the $y(t)$ process for an infinite number of particles, in such a way that it supports the Poisson measure. Thereby the random variable $N(u,t)$ would be well-defined. In particular, for our initial measure,
$\mathbb{P}\big(\{N(u,t) = \infty\}\big) = 0$.  

The next step is to arrive at a contour integration formula for $F_n$.
\begin{theorem}\label{th3}
With $F_n$ from (\ref{2}) one has
\begin{equation}\label{3}
F_n(x,t) =  \tau^{n(n-1)/2} \int_{\mathcal{C}} \mathrm{d}z_1 ... \mathrm{d}z_n
\prod_{j=1}^n \frac{1}{z_j}\cdot \frac{\tau -1}{z_j + (1-\tau)}\, \mathrm{e}^{x_jz_j + \frac{1}{2}tz_j^2}
\prod_{1\leq A < B \leq n} \frac{z_B - z_A}{z_B - \tau z_A}\,.
\end{equation} 
The contours are $\mathcal{C}_j = \{a_j + \mathrm{i}\varphi, \varphi \in \mathbb{R}\}$ and  nested as
$-(1-\tau) < a_1< ...< a_n < 0$ such that $\tau a_j < a_{j+1}$.
\end{theorem}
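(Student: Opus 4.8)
The plan is to characterize $F_n(\cdot,t)$ as the unique bounded solution of a linear initial--boundary value problem and then to check that the right-hand side of (\ref{3}) solves that same problem. Write $F_n(\cdot,t)=\mathrm{e}^{tL_x}F_n$ with $F_n$ the bounded continuous function of (\ref{1}). Since $F_n$ is a (limit of) functions of the form $\int H(x,y)f(y)\mathrm{d}y$, the boundary-derivative computation leading to (\ref{2.13}) gives $(\tau\partial_j-\partial_{j+1})F_n\big|_{x_j=x_{j+1}}=0$; one can also see this directly from (\ref{1}), where the two one-sided $\delta$-contributions to $\partial_{x_j}F_n$ and $\partial_{x_{j+1}}F_n$ at $u=x_j$ cancel after multiplying the first by $\tau$. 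Hence $u(x,t):=F_n(x,t)=\mathbb{E}_x\big(F_n(x(t))\big)$ solves $\partial_t u=\tfrac12\Delta_x u$ on $(\mathbb{W}^-_n)^\circ$, satisfies $(p\partial_j-q\partial_{j+1})u\big|_{x_j=x_{j+1}}=0$ (the dual-process analogue of (\ref{2.4})--(\ref{2.5})), is bounded, and has $u(\cdot,0)=F_n$. By the maximum principle for this oblique-Neumann problem on the Weyl chamber — equivalently, by the probabilistic representation together with the well-posedness of (\ref{2.1}) from \cite{KPS12} — such a solution is unique.

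Next I would show that $G_n(x,t)$, the right-hand side of (\ref{3}), is also such a solution. Each factor $\mathrm{e}^{x_jz_j+\frac12 tz_j^2}$ solves the one-dimensional heat equation and, for $t>0$, decays like a Gaussian along the vertical contours $\mathcal{C}_j$, so differentiation under the integral is legitimate and $\partial_t G_n=\tfrac12\Delta_x G_n$ on all of $\mathbb{R}^n$. For the boundary condition, applying $\tau\partial_j-\partial_{j+1}$ (which is $q^{-1}(p\partial_j-q\partial_{j+1})$) multiplies the integrand by $\tau z_j-z_{j+1}$. On $x_j=x_{j+1}$ the exponential factor in $z_j,z_{j+1}$ is symmetric under $z_j\leftrightarrow z_{j+1}$, so the integrand may be symmetrized in these two variables; the nesting $\tau a_j<a_{j+1}$ is precisely what ensures that this swap crosses no pole of $(z_{j+1}-z_j)/(z_{j+1}-\tau z_j)$, so no residue is produced. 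The Bethe kernel $\prod_{A<B}(z_B-z_A)/(z_B-\tau z_A)$ is built exactly so that $(\tau z_j-z_{j+1})$ times the symmetrized remainder vanishes identically, giving $(\tau\partial_j-\partial_{j+1})G_n\big|_{x_j=x_{j+1}}=0$.

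It remains to match the initial data: $\lim_{t\downarrow 0}G_n(x,t)=F_n(x)$ for $x$ in the interior of $\mathbb{W}^-_n$. At $t=0$ the $z_j$-integrals are evaluated by residues, closing $\mathcal{C}_j$ to the left when $x_j>0$ — picking up the pole at $z_j=-(1-\tau)$ and the poles $z_j=\tau z_i$ of the Bethe kernel — and to the right when $x_j<0$, where only $z_j=0$ contributes (the poles $z_j=\tau z_i$ lying to the left, by the nesting). Carrying out this bookkeeping, organized according to which coordinates exceed $0$ and to the pattern of poles picked up, the powers of $\tau$ from the prefactor $\tau^{n(n-1)/2}$ and from the residues of $\prod_{A<B}(z_B-z_A)/(z_B-\tau z_A)$ combine, large cancellations occur — already for $n=2$, two of the three residue terms cancel — and the sum collapses to $\exp\big[\int_0^\infty(\prod_{j}\tau^{\theta(x_j-u)}-1)\mathrm{d}u\big]=F_n(x)$. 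Together with boundedness of $G_n$ — the apparent growth as some $x_j\to-\infty$ being removable, since pushing $\mathcal{C}_j$ rightward exhibits $G_n\to G_{n-1}$ there — the uniqueness established above forces $G_n=F_n(\cdot,t)$, which is (\ref{3}).

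The two steps I expect to be delicate are the boundary-condition check and the $t\downarrow 0$ evaluation. The former is where the exact shape of the Bethe kernel and the precise nesting $-(1-\tau)<a_1<\dots<a_n<0$, $\tau a_j<a_{j+1}$, are forced, and one must be scrupulous that the symmetrization is genuinely residue-free. The latter is a nontrivial residue identity — the emergence of the exponential $\exp\big[\int_0^\infty(\cdots)\mathrm{d}u\big]$ out of a finite sum of rational residues is the analytic heart of the statement — and is most cleanly handled by induction on $n$, peeling off the outermost variable $z_n$ and using the recursions $F_n\to F_{n-1}$ and $G_n\to G_{n-1}$ as the limiting coordinate is sent to $-\infty$. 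An alternative to the whole scheme, if a Bethe-ansatz formula for the transition density $P^-_x(x',t)$ is in hand, is to insert it into $F_n(x,t)=\int_{\mathbb{W}^-_n}P^-_x(x',t)F_n(x')\mathrm{d}x'$ and perform the $x'$-integration directly, trading the uniqueness argument for a (still nontrivial) integral computation.
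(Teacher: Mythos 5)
Your proposal is correct and follows essentially the same route as the paper: verify that the contour integral satisfies the heat equation, the oblique boundary condition $(\tau\partial_j-\partial_{j+1})\tilde F_n|_{x_j=x_{j+1}}=0$ via cancellation of the $(j,j{+}1)$ Bethe denominator plus antisymmetry after merging the nested contours, recover the initial data $F_n$ sector by sector through iterated residue evaluation, and conclude by uniqueness. The paper implements your uniqueness step in precisely the ``probabilistic representation'' form you mention as an equivalent alternative, namely Warren's argument of applying It\^o's formula to $\tilde F_n(x(t),T+\epsilon-t)$ so that both the drift and the local-time terms vanish.
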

\textit{Remark}. It is understood throughout that the contour integration includes the prefactor $1/2\pi\mathrm{i}$.\\\\
\begin{proof}
Let us denote the right hand side of (3.3) by $\tilde{F}_n(x,t)$. We have to show that $F_n(x,t) = \tilde{F}_n(x,t)$.\medskip\\
\textit{(i) evolution equation.} By inspection
\begin{equation}\label{4}
\partial_t \tilde{F}_n(x,t) = \tfrac{1}{2}\Delta_x \tilde{F}_n(x,t)           
\end{equation} 
for $x \in (\mathbb{W}^-_n)^\circ$. We consider the boundary condition (\ref{2.13}) with directional derivative taken from
$(\mathbb{W}^-_n)^\circ$. One has 
\begin{eqnarray}\label{5}
&&\hspace{-40pt}(\partial_{\ell+1} - \tau\partial_\ell )\tilde{F}(x,t)\big |_{x_\ell = x_{\ell+1}} =
 \tau^{n(n-1)/2} \int_{\mathcal{C}} \mathrm{d}z_1 ... \mathrm{d}z_n
\prod_{j=1}^n \frac{1}{z_j}\cdot \frac{\tau -1}{z_j + (1-\tau)}\, \mathrm{e}^{ \frac{1}{2}tz_j^2}\nonumber\\
&&\hspace{0pt}
\times\Big(\mathop{\prod^n_{j=1}}_{j\neq\ell,\ell+1}\mathrm{e}^{x_jz_j}\Big) (z_{\ell +1} - z_{\ell})\,\mathrm{e}^{x_\ell(z_\ell + z_{\ell+1})}
\mathop{\prod_{1\leq A < B \leq n}}_{(A,B) \neq (\ell,\ell+1)} \frac{z_B - z_A}{z_B - \tau z_A}\,.
\end{eqnarray} 
The integrand has no poles in the strip bordered by $\mathcal{C}_\ell$ and $\mathcal{C}_{\ell +1}$.
Hence $\mathcal{C}_\ell$ can be moved on top of  $\mathcal{C}_{\ell+1}$. The integrand is odd under interchanging $z_\ell$
and $z_{\ell+ 1}$ and the right hand side of (\ref{5}) vanishes.\medskip\\
\textit{(ii) initial conditions.} We have to show that $\lim_{t \to 0}\tilde{F}_n(x,t) = F_n(x)$. Note that the integrand in
(\ref{3}) has an integrable bound at infinity uniformly in $t$ and hence one can set $t =0$. We define the sector $S_\ell$ by
\begin{equation}\label{6}
x_n < ... < x_{\ell+1} < 0 < x_\ell < ...< x_1           
\end{equation} 
with $\ell = 1,...,n$. Then
\begin{equation}\label{7}
F_n(x) |_{S_\ell} = \exp{\big[-(1-\tau) \sum_{j=1}^\ell \tau^{j-1}x_j\big]}          
\end{equation} 
and $\tilde{F}_n(x,0)$ will be computed for the sector $S_\ell$. Since $0<x_\ell<...<x_1$, $\exp(x_jz_j)$
 decays exponentially as $\Re{z_j} \to -\infty$, $j = 1,...,\ell$, and the contours $\mathcal{C}_1,...,\mathcal{C}_\ell$
can be deformed to circles around $z = -(1-\tau)$, maintaining the nesting condition.
Correspondingly, since $x_n<...<x_{\ell+1} < 0$, the contours $\mathcal{C}_{\ell+1},...,\mathcal{C}_n$
can be deformed to circles around $z = 0$, maintaining the nesting condition.  

We integrate first over $z_1$. Then on $S_\ell$, denoting the deformed contours by $\tilde{\mathcal{C}}_j$,
\begin{eqnarray}\label{8}
&&\hspace{-40pt}\tilde{F}_n(x,0) 
\\
&&\hspace{-10pt}
=  \tau^{n(n-1)/2} \int_{\mathcal{\tilde{C}}} \mathrm{d}z_2 ... \mathrm{d}z_n\int_{\mathcal{\tilde{C}}_1} \mathrm{d}z_1
\prod_{j=1}^n \frac{1}{z_j}\cdot \frac{\tau -1}{z_j + (1-\tau)}\, \mathrm{e}^{x_jz_j }
\prod_{1\leq A < B \leq n} \frac{z_B - z_A}{z_B - \tau z_A}\nonumber\\
&&\hspace{-10pt}=\mathrm{e}^{-(1 - \tau)x_1}
 \tau^{n(n-1)/2} \int_{\mathcal{\tilde{C}}} \mathrm{d}z_2 ... \mathrm{d}z_n
 \prod_{j=2}^n \frac{1}{z_j} \cdot\frac{\tau -1}{z_j + \tau(1-\tau)}\, \mathrm{e}^{x_jz_j }
\prod_{2\leq A < B \leq n} \frac{z_B - z_A}{z_B - \tau z_A}\,.\nonumber
\end{eqnarray} 
Iterating the integrations over $z_2,...,z_\ell$ yields
\begin{eqnarray}\label{8a}
&&\hspace{-54pt}\tilde{F}_n(x,0) =  \tau^{n(n-1)/2}
\tau^{-1}...\,\tau^{-(\ell -1)} \mathrm{e}^{-(1-\tau)x_1} ...\, 
\mathrm{e}^{-\tau^{(\ell-1)}(1-\tau)x_\ell} 
\nonumber\\
&&\hspace{0pt}\times\int_{\mathcal{\tilde{C}}} \mathrm{d}z_{\ell+1} ...\, \mathrm{d}z_n\prod_{j=\ell+1}^n \frac{1}{z_j} 
\cdot\frac{\tau -1}{z_j + \tau^\ell(1-\tau)}\, \mathrm{e}^{x_jz_j }
\prod_{\ell +1\leq A < B \leq n} \frac{z_B - z_A}{z_B - \tau z_A}\,.
\end{eqnarray} 
Next we integrate successively over $z_n$ up to $z_{\ell + 1}$. Abbreviating
\begin{equation}\label{9a}
\varpi = \sum_{j=1}^{\ell-1}j + \sum_{j=\ell}^{n-2}(n-j-1) + \ell(n-\ell)\,,
\end{equation}
one obtains
\begin{equation}\label{9}
\tilde{F}_n(x,0) = \tau^{n(n-1)/2}
 \mathrm{e}^{-(1-\tau)x_1} ...\, 
\mathrm{e}^{-\tau^{(\ell-1)}(1-\tau)x_\ell} \tau^{-\varpi} = F_n(x)\,.\medskip
\end{equation} 
\textit{(iii) uniqueness.} To show that necessarily $F_n(x,t) = \tilde{F}_n(x,t)$, we adopt an argument of Warren in a similar context \cite{War07}.
Let us consider $ \tilde{F}_n(x(t),T +\epsilon - t)$, $0\leq t\leq T$. By Ito's formula
\begin{eqnarray}\label{10}
&&\hspace{-50pt}\mathrm{d} \tilde{F}_n(x(t),T +\epsilon - t) = \big( -\partial_t \tilde{F}_n(x(t),T +\epsilon - t) +\tfrac{1}{2}\Delta_x 
\tilde{F}_n(x(t),T +\epsilon - t)\big)\mathrm{d}t \nonumber\\
&&\hspace{11pt}+ \sum _{j=1}^n \partial_{x_j} \tilde{F}_n(x(t),T +\epsilon - t)\big(\mathrm{d}B_j(t) -p\mathrm{d}\Lambda^{(j,j+1)}(t)+q \mathrm{d} \Lambda^{(j-1,j)}(t)\big)\,.
\end{eqnarray} 
The $\mathrm{d}t$ term vanishes because of (\ref{4}) and the Skorokhod term vanishes, because $\tilde{F}_n$ satisfies the boundary condition (\ref{2.13}). Hence
\begin{equation}\label{11}
\mathbb{E}_x\big(\tilde{F}_n(x(\epsilon),T)\big) = \mathbb{E}_x\big(\tilde{F}_n(x(T),\epsilon)\big)
\end{equation} 
and, taking the limit $\epsilon \to 0$,
\begin{equation}\label{12}
\tilde{F}_n(x,T) = \mathbb{E}_x\big(\tilde{F}_n(x(0),T)\big) = \mathbb{E}_x\big(F_n(x(T))\big)= 
F_n(x,T)\,,
\end{equation} 
as claimed. 
\end{proof}

\section{From moments to a Fredholm determinant}
\setcounter{equation}{0}
At the level of multi-point generating functions it is difficult to proceed any further and we concentrate on a single point by setting $x_j = u$ for all $j=1,...,n$. Then
\begin{equation}\label{13}
\mathbb{E}\big(\tau^{nN(u,t)}\big) = (-1)^n \tau^{n(n-1)/2} \int_{\mathcal{C}} \mathrm{d}z_1 ... \mathrm{d}z_n
\prod_{j=1}^n \frac{1}{z_j} f(z_j;u,t)
\prod_{1\leq A < B \leq n} \frac{z_B - z_A}{z_B - \tau z_A}
\end{equation}
with
\begin{equation}\label{14}
f(z;u,t) =
\frac{1 - \tau}{z + (1-\tau)}\, \mathrm{e}^{uz+ \frac{1}{2}tz^2} \,.
\end{equation}
The goal of this section is to obtain a Fredholm determinant for the $\tau$-deformed generating function 
of $\zeta \tau^{N(u,t)}$, i.e.
\begin{equation}\label{16}
\mathbb{E}\big(e_\tau(\zeta \tau^{N(u,t)})\big) = \mathbb{E}\big(\frac{1}{(\zeta \tau^{N(u,t)};\tau)_\infty}\big)\,.
\end{equation}
The required definitions for $\tau$-deformed objects are well summarized in Appendix A of
\cite{BCS14}. We will use the method developed in \cite{BC13} with the adaptation  \cite{BCF13} in case the defining contour is 
unbounded. We follow rather closely   \cite{BCF13} and do not repeat the full details.


\begin{figure}
\begin{center}
\psfrag{zj-1}[lb]{$z_{j-1}$}
\psfrag{-(1-t)}[lb]{$-(1-\tau)$}
\psfrag{contour j}[lb]{\hspace{-5pt}contour $j$}
\includegraphics[height=5cm]{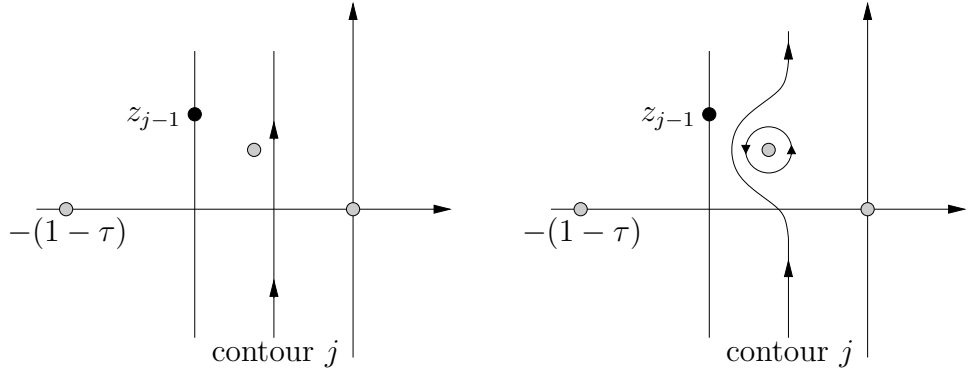}
\caption{A single move in unnesting the contours. Displayed is only the move of contour $j$ across the singularity generated by a fixed point on contour $j-1$.}
\end{center}
\end{figure}


The first step is to remove the nesting constraint by moving the contours. In Fig. 1 we display a single move.
$z_{j-1}$ is fixed and the integration is over $z_j$. The singularity for $z_j$ is at $\tau z_{j-1}$. We deform the $z_j$ contour
across the singularity and thereby pick up a pole contribution, which is evaluated by the residue theorem and identical to the expression in \cite{BCF13}, Proposition 4.11. After all contours have been moved, no singularities remain and one can further deform to a  common contour, which is is denoted by $\mathcal{C}_0 = \{ -\delta +\mathrm{i}\varphi, \varphi\in\mathbb{R}\}$ with $0 < \delta < 1 - \tau$. 
The resulting combinatorial structure is identical to the one  \cite{BCF13}.  The integrand for the $n$-th 
 moment in \cite{BCF13}, Lemma 4.10, is replaced by the expression from (\ref{13}). 

Using the $\tau$-binomial theorem and rearranging terms, one arrives at a Fredholm 
determinant of the $\tau$-deformed generating function (\ref{16}). 
\begin{pro}\label{prop5}
There exists a positive constant $C\geq 1$ s.t. for all $|\zeta|<C^{-1}$, 
\begin{equation}\label{14a}
 \mathbb{E} \left[ \frac{1}{(\zeta \tau^{N(u,t/\gamma)};\tau)_\infty} \right]
 =
 \det(1+K)_{L^2(\mathbb{Z}_{>0}\times \mathcal{C}_0)}\,, 
\end{equation}
where the kernel $K$ is given by 
\begin{equation}\label{15a}
K(n_1,w_1;n_2,w_2)
=
\frac{\zeta^{n_1} f(w_1;u,t)f(\tau w_2;u,t) \cdots f(\tau^{n_1-1}w_1;u,t)}{\tau^{n_1-1} w_1-w_2}\,.
\end{equation}
\end{pro}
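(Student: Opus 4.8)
The plan is to start from the moment formula (\ref{13}) and transform the series $\sum_{n\geq 0} \tfrac{\zeta^n}{(\tau;\tau)_n}\mathbb{E}(\tau^{nN(u,t)})$ into a Fredholm determinant, following the scheme of \cite{BC13,BCF13,BCS14}. First I would recall the $\tau$-binomial identity $\frac{1}{(\zeta;\tau)_\infty} = \sum_{n\geq 0}\frac{\zeta^n}{(\tau;\tau)_n}$, so that $\mathbb{E}\big(e_\tau(\zeta\tau^{N(u,t)})\big) = \sum_{n\geq 0}\frac{\zeta^n}{(\tau;\tau)_n}\mathbb{E}(\tau^{nN(u,t)})$, provided one can justify interchanging expectation and summation; this is where the smallness condition $|\zeta|<C^{-1}$ enters, via a uniform bound $|\mathbb{E}(\tau^{nN(u,t)})| \leq C^n$ coming from the contour integral representation (\ref{13}) and the fact that $0<\tau<1$ makes $\tau^{nN}$ bounded by $1$. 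Then I would substitute (\ref{13}) into each term of the series.

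Next I would carry out the \emph{unnesting} of contours described just before the proposition. In (\ref{13}) the contours $\mathcal{C}_1,\dots,\mathcal{C}_n$ are nested with $\tau a_j < a_{j+1}$, so that the pole of $\frac{z_B-z_A}{z_B-\tau z_A}$ at $z_B = \tau z_A$ lies between consecutive contours. Moving $\mathcal{C}_j$ one at a time across the singularities at $\tau z_{j-1}, \tau z_{j-2},\dots$ produces residue contributions; each such move is exactly the computation in \cite{BCF13}, Proposition 4.11, adapted with $f(z;u,t)$ from (\ref{14}) in place of their integrand. After all moves the integrand is free of the $z_B=\tau z_A$ singularities and all contours may be collapsed onto the common line $\mathcal{C}_0 = \{-\delta + \mathrm{i}\varphi\}$ with $0<\delta<1-\tau$; the pole of $f$ at $z=-(1-\tau)$ lies to the left of $\mathcal{C}_0$ and the pole $1/z_j$ at $z=0$ lies to the right, so $\mathcal{C}_0$ is admissible. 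The key point is that the combinatorial bookkeeping of which residues are picked up — organized by set partitions of $\{1,\dots,n\}$ into strings — is \emph{identical} to that in \cite{BCF13}, because it depends only on the rational structure $\frac{z_B-z_A}{z_B-\tau z_A}$ and the factors $1/z_j$, not on the specific analytic form of $f$. I would therefore invoke that combinatorial identity verbatim, noting only the substitution of $f$.

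Then I would resum the resulting partition expansion. After unnesting, $\mathbb{E}(\tau^{nN})$ becomes a sum over compositions $\lambda = (\lambda_1,\dots,\lambda_k)$ of $n$ of $k$-fold integrals over $\mathcal{C}_0$ of a product of ``string'' factors, each string of length $\lambda_i$ contributing $\prod_{\ell=0}^{\lambda_i-1} f(\tau^\ell w_i;u,t)$ together with the cross terms $\frac{w_B-w_A}{\tau^{\lambda_A}w_B - w_A}$ or similar. Dividing by $(\tau;\tau)_n$, summing over $n$ and over compositions, and recognizing the geometric-type resummation, one gets a Fredholm determinant $\det(1+K)$ on $L^2(\mathbb{Z}_{>0}\times\mathcal{C}_0)$ with $K$ having the claimed form: the discrete variable $n_1$ records the string length, the factor $\zeta^{n_1}$ comes from $\zeta^{\lambda_i}$, the numerator is $\prod_{\ell=0}^{n_1-1}f(\tau^\ell w_1;u,t)$, and the denominator $\tau^{n_1-1}w_1 - w_2$ is the residual interaction between strings. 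The formula (\ref{15a}) as written appears to contain a typo (the second factor should presumably read $f(\tau w_1;u,t)$ rather than $f(\tau w_2;u,t)$), and I would write the numerator unambiguously as $\zeta^{n_1}\prod_{\ell=0}^{n_1-1} f(\tau^{\ell}w_1;u,t)$.

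The main obstacle is twofold. Analytically, the delicate point is convergence: one must show the Fredholm series converges and equals the resummed moment series for $|\zeta|$ small, which requires trace-class type bounds on $K$ — here $f(z;u,t)$ contains the Gaussian factor $\mathrm{e}^{\frac{1}{2}tz^2}$, which on the vertical line $\mathcal{C}_0$ is $\mathrm{e}^{\frac{1}{2}t(-\delta+\mathrm{i}\varphi)^2} = \mathrm{e}^{\frac{1}{2}t(\delta^2-\varphi^2)}\mathrm{e}^{-\mathrm{i}t\delta\varphi}$, decaying like $\mathrm{e}^{-t\varphi^2/2}$, so the Hilbert–Schmidt norm of $K$ is controlled and summable in $n_1$ once $|\zeta|$ is below the radius governed by this decay — this is precisely the role of the unbounded-contour adaptation of \cite{BCF13}, and I would follow their estimates. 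Combinatorially, the obstacle is purely notational: tracking signs, powers of $\tau$, and the $\tau^{n(n-1)/2}$ prefactor through the unnesting so that they assemble correctly into $K$; since the structure matches \cite{BCF13} exactly apart from $f$, I would not reproduce it but cite it, verifying only that our $f$ satisfies the decay hypotheses used there.
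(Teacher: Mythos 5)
Your proposal follows essentially the same route as the paper: unnest the nested contours one move at a time picking up residues as in \cite{BCF13}, Proposition 4.11, invoke the identical combinatorial structure with $f$ from (\ref{14}) substituted into \cite{BCF13}, Lemma 4.10, resum via the $\tau$-binomial theorem, and justify convergence of the Fredholm expansion for small $|\zeta|$ using the Gaussian decay of $f$ on $\mathcal{C}_0$ together with the lower bound $|\tau^{n}w_1 - w_2|\geq(1-\tau)\delta$. Your observation that the second factor in (\ref{15a}) should read $f(\tau w_1;u,t)$ rather than $f(\tau w_2;u,t)$ is also correct.
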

\begin{proof}
Our contour $\mathcal{C}_0$ differs  from the one in  \cite{BCF13}. But clearly, for all $n \geq1$,
\begin{equation}\label{16a}
|\tau^{n} w_1-w_2| \geq (1 - \tau) \delta\,.
\end{equation}
Also $f(w;u,t)$ is bounded and decays as a Gaussian for large
$|w|$, which ensure the convergence of the Fredholm expanded determinant
for small enough $|\zeta|$. 
\end{proof}\\

The above Fredholm determinant is not yet suitable for asymptotics and one has to replace the sum over $n$
by a contour integral.
We introduce $g$  as the solution of $f(z,t) = g(z,t)/g(\tau z,t)$. Then 
\begin{equation}\label{15}
g(z,t) = \exp\big[\big( u (1-\tau)^{-1}z + \frac{1}{2}\gamma t (1-\tau)^{-2}z^2\big)\big] \frac{1}{(-(1-\tau)^{-1}z; \tau)_{\infty}}
\end{equation}
with $\gamma = q -p$. Clearly the natural units are $\gamma t$ and $w= (1-\tau)^{-1}z$ and we set
\begin{equation}\label{16b}
g((1-\tau)w,t/\gamma) = \tilde{g}(w,t) = \mathrm{e}^{uw  + \frac{1}{2}t w^2} \frac{1}{(-w; \tau)_{\infty}}. 
\end{equation}
\begin{theorem}\label{th6}
 Let $\zeta \in \mathbb{C}\setminus\mathbb{R}_+$. Then 
\begin{equation}\label{17}
\mathbb{E}\big( \frac{1}{(\zeta \tau^{N(u,t/\gamma)};\tau)_\infty} \big)= \det(1+K_\zeta)\,.
\end{equation}
The kernel $K_\zeta$ is given by
\begin{equation}\label{18}
K_\zeta(w,w') = \int _{\mathcal{C}_w} \mathrm{d}s \Gamma(-s)\Gamma(1+ s)(-\zeta)^s
\frac{\tilde{g}(w,t)}{\tilde{g}(\tau^sw,t)} \,\frac{1}{\tau^sw - w'}\,.
\end{equation}
Here $w,w'\in\mathcal{C}_0$ and the $s$-contour $\mathcal{C}_w$
is explained below. The kernel $K_\zeta(-\delta +\mathrm{i}\varphi, -\delta +\mathrm{i}\varphi')$ depends smoothly on
$\varphi,\varphi'$ and satisfies the bound
\begin{equation}\label{19}
|K_\zeta(-\delta +\mathrm{i}\varphi, -\delta +\mathrm{i}\varphi') |\leq c_0 \mathrm{e}^{-c_1\varphi^2}
\end{equation}
with a suitable choice of $c_0,c_1 > 0$.
\end{theorem}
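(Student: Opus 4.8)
The plan is to derive Theorem~\ref{th6} from Proposition~\ref{prop5} by the standard trick of converting the Fredholm determinant on $L^2(\mathbb{Z}_{>0}\times\mathcal{C}_0)$ into one on $L^2(\mathcal{C}_0)$ via a Mellin--Barnes representation of the sum over $n_1$. First I would rewrite the kernel $K$ of (\ref{15a}) using $f(z,t)=g(z,t)/g(\tau z,t)$: the telescoping product $f(w_1)f(\tau w_1)\cdots f(\tau^{n_1-1}w_1)$ collapses to $g(w_1)/g(\tau^{n_1}w_1)$, and after passing to the natural variables $w=(1-\tau)^{-1}z$, $t\mapsto t/\gamma$ this becomes $\tilde g(w_1,t)/\tilde g(\tau^{n_1}w_1,t)$ with $\tilde g$ as in (\ref{16b}). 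Thus
\[
K(n_1,w_1;n_2,w_2)=\zeta^{n_1}\,\frac{\tilde g(w_1,t)}{\tilde g(\tau^{n_1}w_1,t)}\,\frac{1}{\tau^{n_1}w_1-w_2}.
\]
One then observes that the dependence on $(n_1,w_1)$ and on $(n_2,w_2)$ is not separated, so the sum over $n_1$ cannot be done naively; instead one writes, for $\zeta\in\mathbb{C}\setminus\mathbb{R}_+$,
\[
\sum_{n\geq 1}\zeta^{n}\,h(\tau^{n})=\int_{\mathcal{C}_w}\mathrm{d}s\,\Gamma(-s)\Gamma(1+s)(-\zeta)^{s}\,h(\tau^{s}),
\]
the Mellin--Barnes / $\tau$-analogue used in \cite{BC13,BCF13}: the poles of $\Gamma(-s)$ at $s=1,2,\dots$ have residues $(-1)^{s+1}$, and $\Gamma(1+s)(-\zeta)^{s}$ evaluated there reproduces $-\zeta^{n}$ times the correct sign, so closing the contour to the right recovers the sum. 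Applying this with $h(\tau^{s})=\tilde g(w,t)\tilde g(\tau^{s}w,t)^{-1}(\tau^{s}w-w')^{-1}$ gives exactly the kernel $K_\zeta$ of (\ref{18}) on $L^2(\mathcal{C}_0)$, and the determinantal identity $\det(1+K)_{L^2(\mathbb{Z}_{>0}\times\mathcal{C}_0)}=\det(1+K_\zeta)_{L^2(\mathcal{C}_0)}$ follows because the two operators are conjugate — this is the content of \cite[Proposition~3.6 and Lemma~3.7 or the analogous statements]{BCF13}, which I would invoke rather than reprove.

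The contour $\mathcal{C}_w$ must be chosen so that (a) it separates the poles of $\Gamma(-s)$ at $s\in\{0,1,2,\dots\}$ (which must lie to its right, except that $s=0$ is excluded — we want $n\geq 1$, so $\mathcal{C}_w$ passes to the right of $s=0$) from those of $\Gamma(1+s)$ at $s\in\{-1,-2,\dots\}$ (to its left); (b) it avoids the poles of $s\mapsto 1/(\tau^{s}w-w')$, i.e. points where $\tau^{s}w=w'$ — since $w,w'\in\mathcal{C}_0=\{-\delta+\mathrm{i}\varphi\}$ with $|\tau^{n}w-w'|\geq(1-\tau)\delta$ for integer $n\geq1$ by (\ref{16a}), one keeps $\mathrm{Re}\,s$ in a band bounded away from those real integers; and (c) the contour stays in a region where $\tilde g(w,t)/\tilde g(\tau^{s}w,t)$ decays, which is where $\mathrm{Re}\,s>0$ is bounded above (the Gaussian factor $\mathrm{e}^{\frac{1}{2}t(w^2-\tau^{2s}w^2)}$ controls the tails) and where the $q$-Pochhammer $(-w;\tau)_\infty/(-\tau^{s}w;\tau)_\infty$ is harmless. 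The concrete choice mirroring \cite{BCF13} is a contour consisting of a small vertical segment near $\mathrm{Re}\,s=1/2$ closed off into the right half-plane, or equivalently the $1/2$-line deformed to remain uniformly bounded; I would state this explicitly as: $\mathcal{C}_w$ is the positively oriented boundary of a wedge-shaped region to the right of $s=0$ containing all of $s=1,2,\dots$, lying in $\{\mathrm{Re}\,s\in(0,R)\}$ for $R$ large, and staying a fixed distance from $\mathbb{Z}$, which can be taken independent of $w\in\mathcal{C}_0$.

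The analyticity and smoothness claim — that $K_\zeta(-\delta+\mathrm{i}\varphi,-\delta+\mathrm{i}\varphi')$ is smooth in $\varphi,\varphi'$ — follows by differentiating under the $s$-integral, which is justified once the Gaussian bound (\ref{19}) and its derivatives are in place. For the bound itself I would estimate the integrand of (\ref{18}) pointwise: on $\mathcal{C}_w$ one has $|\Gamma(-s)\Gamma(1+s)|=\pi/|\sin(\pi s)|$, which decays exponentially in $|\mathrm{Im}\,s|$ as long as $\mathrm{Re}\,s$ avoids $\mathbb{Z}$ (uniform by construction of $\mathcal{C}_w$); $|(-\zeta)^{s}|$ is bounded on the bounded-$\mathrm{Re}\,s$ contour; $|1/(\tau^{s}w-w')|$ is bounded by $C/\delta$ again by the separation property; and the ratio $\tilde g(w,t)/\tilde g(\tau^{s}w,t)$ contributes $|\mathrm{e}^{uw+\frac12 tw^2}|\cdot|\mathrm{e}^{-u\tau^{s}w-\frac12 t\tau^{2s}w^2}|$ times a bounded $q$-Pochhammer ratio; with $w=-\delta+\mathrm{i}\varphi$ the dominant factor is $|\mathrm{e}^{\frac12 tw^2}|=\mathrm{e}^{\frac12 t(\delta^2-\varphi^2)}$, giving the $\mathrm{e}^{-c_1\varphi^2}$ decay, while the compensating factor $|\mathrm{e}^{-\frac12 t\tau^{2s}w^2}|$ grows at most like $\mathrm{e}^{c\,\tau^{2\mathrm{Re}\,s}\varphi^2}$ which is subdominant since $\tau^{2\mathrm{Re}\,s}<1$ for $\mathrm{Re}\,s>0$; integrating the resulting exponentially-decaying-in-$|\mathrm{Im}\,s|$ bound over $\mathcal{C}_w$ yields (\ref{19}). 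The main obstacle — and the one point where genuine care is needed rather than routine bookkeeping — is the contour deformation argument establishing $\det(1+K)=\det(1+K_\zeta)$: one must check that pushing the $s$-contour out to infinity (to sum the residue series) picks up no spurious contributions from the poles of $1/(\tau^{s}w-w')$ and from the zeros/poles of the $q$-Pochhammer in $\tilde g(\tau^{s}w,t)^{-1}$, and that the Fredholm expansions on both sides converge and match term by term for $|\zeta|<C^{-1}$ before being continued in $\zeta$ to all of $\mathbb{C}\setminus\mathbb{R}_+$; this is exactly the step where following \cite{BCF13} closely pays off, and I would lean on their Propositions~3.6–3.8 (or the corresponding lemmas in \cite{BC13}) to supply the analytic-continuation and convergence estimates, verifying only that the replacement of their integrand by $\tilde g$ from (\ref{16b}) does not disturb any of the required decay hypotheses.
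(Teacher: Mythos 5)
Your overall architecture --- pass from Proposition \ref{prop5} to the theorem by a Mellin--Barnes representation of the sum over $n_1$, borrow the convergence and term-by-term matching from \cite{BC13,BCF13}, and then prove the Gaussian bound (\ref{19}) by pointwise estimates on the $s$-integrand --- is the same as the paper's. The telescoping $f(w_1)\cdots f(\tau^{n_1-1}w_1)=\tilde g(w_1,t)/\tilde g(\tau^{n_1}w_1,t)$, the role of the poles of $\Gamma(-s)$ at the positive integers, and the observation that $\pi/|\sin\pi s|$ supplies the decay in $|\Im s|$ are all correct.

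The genuine gap is in your description of $\mathcal{C}_w$. You assert the contour can be taken inside a bounded strip $\{\Re s\in(0,R)\}$ with $R$ a large constant, ``staying a fixed distance from $\mathbb{Z}$, which can be taken independent of $w\in\mathcal{C}_0$''. This cannot work. The estimate (\ref{16a}) protects you only at integer $s$; for non-integer $s$ the factor $1/(\tau^s w-w')$ has poles at $s=\log(w'/w)/\log\tau$, whose real parts equal $\log|w'/w|/\log\tau$ and sweep through every value in $(0,\infty)$ as $w,w'$ range over the unbounded line $\mathcal{C}_0$ (take $|w|$ large and $|w'|=\tau^{\sigma}|w|\geq\delta$ to place a pole at $\Re s=\sigma$). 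Moreover, when $\varphi$ and $\varphi'$ are both large and of the same sign one has $\arg(w'/w)=O(1/|\varphi|)$, so these poles approach the real axis: no fixed contour threading the positive integers avoids them for all $w,w'$. This is precisely why the paper makes $\mathcal{C}_w$ depend on $w$: the vertical rays are pushed out to $\Re s=R$ with $\tau^R|\varphi|=\delta/2$, so that $|\tau^s w|\leq\delta/2<|w'|$ there, which forces $R\propto\log|\varphi|$, and the horizontal segments sit at height $d=c_4/|\varphi|$ to slip past the low-lying poles. The $w$-dependence then has consequences your proof of (\ref{19}) does not account for: $|(-\zeta)^s|\leq|\zeta|^{R}\mathrm{e}^{|\theta|\,|\Im s|}$ with $R\propto\log|\varphi|$ is not bounded but grows polynomially in $|\varphi|$ (and the factor $\mathrm{e}^{|\theta|\,|\Im s|}$ must be beaten by $\pi/|\sin\pi s|$, which is where $|\theta|<\pi$, i.e.\ $\zeta\notin\mathbb{R}_+$, is used quantitatively); likewise the $q$-Pochhammer ratio is only bounded by a constant times $|w|$, not by a constant. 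The paper absorbs the resulting $\log(1+|\varphi|)$ from the $s$-integration and the linear factor $|w|$ into the Gaussian coming from $\exp\bigl(\tfrac12 t w^2(1-\tau^{2s})\bigr)$ with $1-\tau^{2\Re s}\geq 1-\tau$ on $\Re s\geq\tfrac12$; without the correct contour and these two absorptions the argument does not close.
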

\begin{proof}
The $\mathcal{C}_w$-contour is shown in Fig. 2. The contour is reflection symmetric relative to the real axis and piecewise linear
starting from $\tfrac{1}{2}$, to $\tfrac{1}{2} +\mathrm{i} d$, to $R + \mathrm{i} d$, to $R +\mathrm{i} \infty$, $d> 0, R \geq \tfrac{1}{2}$. The parameters $d,R$ depend on $w$. For small $|\varphi|$ we set $R = 1/2$, while for
large $|\varphi|$ we choose $d = c_4/|\varphi|$ and $|\varphi|\tau^R = \delta/2$.
 \medskip\\
\textit{Gaussian bound}. For $w = -\delta + \mathrm{i}\varphi$ and $s \in \mathcal{C}_w$ it holds
\begin{equation}\label{20}
\big|\frac{\tilde{g}(w,t)}{\tilde{g}(\tau^sw,t)} \big| = \big| \exp \big( u(1 - \tau^s)w + \frac{1}{2}t w^2 (1 - \tau^{2s})\big) \big|
\cdot\big|
\frac{(-\tau^sw;\tau)_\infty}{(-w;\tau)_\infty} \big| \leq c_0 \mathrm{e}^{-c_1\varphi^2}\,.
\end{equation}
The first factor is estimated as $|\exp(\cdot)| \leq \exp\big(b_1 + b_2|\varphi|  - b_3\varphi^2\big)$, where 
$b_3 = 1- (\Re \tau^s)^2 + (\Im \tau^s)^2 \geq 1-\tau$.  Therefore the $\varphi^2$ term dominates the linear
term and provides the Gaussian bound.

The second factor is written as 
\begin{equation}\label{21}
\frac{(-\tau^sw;\tau)_\infty}{(-w;\tau)_\infty} = \frac{1+w\tau^s}{1+w}\cdot\prod_{n=1}^\infty\big(1 + (\tau^s - 1) \frac{1}{1 + \tau^{-n}w^{-1}}\big)\,.
\end{equation}
Since $|w+\tau^{-n}| \geq \tau^{-n}|1- \tau(1 -\tau)| \geq \tfrac{1}{2}\tau^{-n}$, one arrives at
\begin{equation}\label{22}
\big| (\tau^s - 1) \frac{1}{1 + \tau^{-n}w^{-1}}\big| \leq |(\tau^s - 1)| |w|2 \tau^{n}\,,
\end{equation}
which implies that the second product converges uniformly in $s \in
\mathcal{C}_w$ with a bound proportional to $|w|$.\medskip\\
\textit{Integration along $\mathcal{C}_w$}. We will show that
\begin{equation}\label{23}
\int _{\mathcal{C}_w} \mathrm{d}s\big| \Gamma(-s)\Gamma(1+ s)(-\zeta)^s
\frac{1}{\tau^sw - w'} \big|\leq c_3 (1+ \log(1+|\varphi|))
\end{equation}
with $c_3$ depending only on $\zeta$.

Considering the third factor,  the contour $\mathcal{C}_w$ has been constructed such that
$|\tau^s w' - w| \geq a_0 > 0$ for all $w,w'$ and $s\in \mathcal{C}_w$,  uniformly in $s$.

For the second factor we set $-\zeta = |\zeta|\mathrm{e}^{\mathrm{i} \theta}$, which implies $|\theta| < \pi$
by assumption. The contributing part of the contour  integration is $\{s = R + \mathrm{i} r, d \leq r < \infty\}$ and its mirror image. 
Along this part it holds
\begin{equation}\label{24}
|(-\zeta)^s| \leq |\zeta|^R \mathrm{e}^{|\theta| r}\,.
\end{equation}

For the first factor we use the identity $\Gamma(-s)\Gamma(s) = \pi/ \sin(\pi s)$. Inserting the previous bounds
 \begin{equation}\label{25}
\int _{\mathcal{C}_w} \mathrm{d}s\big| \Gamma(-s)\Gamma(1+ s)(-\zeta)^s
\frac{1}{\tau^sw - w'} \big|\leq c_5(1 + |\zeta|^R (\pi- |\theta|)^{-1})\,.
\end{equation}
Since $R \propto \log |\varphi|$ for large $|\varphi |$, the bound (\ref{23}) follows. 

The logarithmic divergence (\ref{23}) and the linear bound in (\ref{22}) can 
be absorbed into the Gaussian bound (\ref{20}) and the bound (\ref{19}) is established. 
\end{proof}

\begin{figure}
\begin{center}
\psfrag{Cw}[lb]{$\mathcal{C}_w$}
\psfrag{2d}[lb]{$2d$}
\psfrag{R}[lb]{$R$}
\psfrag{1}[lb]{$1$}
\includegraphics[height=5cm]{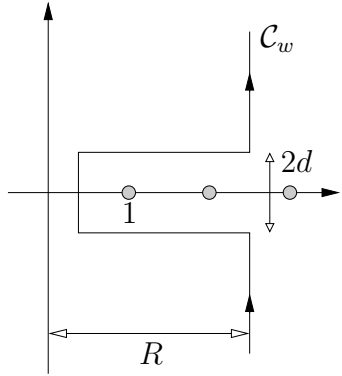}
\caption{Complex $s$-plane and the integration contour $\mathcal{C}_w$. Poles of the integrand are located at the positive integers.}
\end{center}
\end{figure}


\section{Formal asymptotics}\label{sec5}
\setcounter{equation}{0}
To obtain the long time asymptotics of $N(u,t/\gamma)$ requires a steepest decent analysis of the kernel $K_\zeta$
of (\ref{18}). Here we only identify the saddle point and its expansion close to the saddle. Thereby the GUE asymptotics becomes visible.  For a complete proof a more detailed analysis of the steepest decent path would have to be carried out. 

One first has to figure out the law of large numbers for $N(u,t)$. The quick approach is to use the ASEP, $\tfrac{1}{2} < q\leq1$, with step initial conditions. On the macroscopic scale the density, $\rho$,  is governed by
\begin{equation}\label{5.11}
 \partial_t \rho - \gamma \partial_u (\rho - \rho^2) = 0\,.
\end{equation}
To match with the Brownian motions, one has to shift $\rho$ to $\tilde{\rho}(u,t) = \rho(u+\gamma t,t)$. Then $\tilde{\rho}$ satisfies
\begin{equation}\label{5.12}
 \partial_t \tilde{\rho} + \gamma\partial_u \tilde{\rho}^2 = 0\,.
\end{equation}
The solution with initial data $\tilde{\rho}(u,0) = \theta(u)$ reads 
$\tilde{\rho}(u,t/\gamma) = u/2t$ for $0 \leq u \leq 2t$. We scale $u = at$ with $a > 0$ and eventually $t \to\infty$. Then
to leading order
\begin{equation}\label{5.13}
 N(at,t/\gamma) = \tfrac{1}{4}a^2 t\,,\,\,0 \leq a \leq 2\,,\quad N(at,t/\gamma) = (a-1)t\,,\,\, 2 \leq a\,.
\end{equation}
For $a > 2$ one expects to have Gaussian fluctuations of size $\sqrt{t}$, while for $a <2$ the fluctuations should be KPZ like of size $t^{1/3}$. In the following we restrict to $0< a < 2$. The same law of large numbers can be obtained from
Eq. (\ref{1.10}) for
the macroscopic stretch $\ell$, by noting that $P(\ell) = \ell^{-1}$ for point interactions.
 
 We substitute $z = \tau^s w$, $s \log \tau = \log z - \log w$ and set
\begin{equation}\label{5.1}
-\zeta = \tau^{-\frac{1}{4}a^2 t + rt^{1/3}}\,,\quad (-\zeta)^s = \exp\big((-\tfrac{1}{4}a^2 t + rt^{1/3})(\log z - \log w)\big)\,.
\end{equation}
Inserting on the left hand side of (\ref{17}), it follows, see \cite{BCS14}, Lemma 4.1.39, that
\begin{equation}\label{5.2}
\lim_{t \to \infty} \mathbb{E}\big( \frac{1}{(\zeta \tau^{N(u,t/\gamma)};\tau)_\infty} \big)= \lim_{t \to \infty}
\mathbb{P}\big(t^{-1/3}(N(u,t/\gamma) -\tfrac{1}{4}a^2t)\geq -r\big)\,.
\end{equation}

Thus we have to study the corresponding limit on the right hand side of (\ref{18}). 
In the new coordinates the kernel reads
\begin{eqnarray}\label{5.2a}
&&\hspace{-42pt}K_\zeta(w,w') =  \frac{1}{\log \tau}\int _{\mathcal{C}_w}dz\frac{1}{z}\cdot \frac{1}{z - w'}\cdot
\frac{{\pi}}{\sin(\pi(\log \tau)^{-1}
(\log w - \log z))} \nonumber\\
&&\hspace{20pt}\exp\big(t(G(z) - G(w) ) + r t^{1/3} (\log z - \log w)\big) \cdot \frac{(-z;\tau)_\infty}{(-w;\tau)_\infty}\,,
\end{eqnarray}
where
\begin{equation}\label{5.3}
G(z) = -\tfrac{1}{2}z^2 -az -\tfrac{1}{4}a^2 \log z\,.
\end{equation}
Note that 
\begin{equation}\label{5.4}
G'(z) = - \frac{1}{z}(z + \tfrac{1}{2}a)^2\,,\quad z_{\mathrm{c}} = -\tfrac{1}{2} a\,,\quad G''(z_{\mathrm{c}}) =0\,,\quad G'''(z_{\mathrm{c}}) = -\frac{2}{z_{\mathrm{c}}}\,.
\end{equation}
We expand the kernel at the saddle by setting $z = z_{\mathrm{c}}(1 + t^{-1/3}\tilde{z})$,
$w = z_{\mathrm{c}}(1+ t^{-1/3} \tilde{w})$, $w' = z_{\mathrm{c}}(1+ t^{-1/3} \tilde{w}')$.  Then, in the limit $t \to \infty$,
\begin{eqnarray}\label{5.5}
\hspace{0pt}\frac{1}{z} \mathrm{d}z \simeq t^{-1/3} \mathrm{d}\tilde{z}\,,\quad\frac{1}{z - w'} &=& \frac{z_{\mathrm{c}}t^{1/3}}{\tilde{z}- \tilde{w}'}\,,\\
\hspace{0pt} \frac{1}{\log \tau} \cdot \frac{{\pi}}{\sin(\pi(\log \tau)^{-1}(\log w - \log z))} 
 &\simeq& \frac{t^{1/3}}{\tilde{w} -\tilde{z}}\,,
\\
\hspace{0pt} t \big(G(z_{\mathrm{c}}(1+ t^{-1/3} \tilde{z})) - G(z_{\mathrm{c}}(1+ t^{-1/3} \tilde{w}))\big)
&\simeq&  - \frac{1}{3}z_{\mathrm{c}}^2\big( \tilde{z}^3  -   \tilde{w}^3\big)\,, \\
\hspace{0pt}  r t^{1/3} \big( \log (z_{\mathrm{c}}(1 + t^{-1/3} \tilde{z})) -   \log (z_{\mathrm{c}}(1 + t^{-1/3} \tilde{w}))\big)  
&\simeq& r(\tilde{z} - \tilde{w})\,,\\
\hspace{0pt}  \frac{(-z;\tau)_\infty}{(-w;\tau)_\infty}& \simeq& 1\,.
\end{eqnarray}
There is an extra factor $(z_{\mathrm{c}}t^{1/3})^{-1}$ from the volume element due to the  change in  $w,w'$.

We substitute $\tilde{z},\tilde{w},\tilde{w}'$ by $(a/2)^{-2/3}z,(a/2)^{-2/3}w,(a/2)^{-2/3}w'$ 
and thereby arrive at the limiting kernel
\begin{equation}\label{5.6}
K_r(w,w') =  \int \mathrm{d} z \exp\big(-\tfrac{1}{3}z^3 + \tfrac{1}{3}w^3 + 
(a/2)^{-2/3}r (z-w) \big)\frac{1}{w-z}\cdot \frac{1}{z - w'}\,.
\end{equation}
The $w$ contour is now given by two rays departing at 1 at angles $\pm \pi/3$, oriented with increasing imaginary part,
and the $z$ contour is given by two infinite rays starting at $0$ at angles $\pm 2\pi/3$, oriented with decreasing imaginary part. The Fredholm determinant with this kernel is identical to the Fredholm determinant of the Airy kernel, see \cite{TW09}
Lemma 8.6. Hence one concludes that
\begin{equation}
\lim_{t \to \infty}\mathbb{P}\big(t^{-1/3}(N(u,t/\gamma) -\tfrac{1}{4}a^2t)\geq -(a/2)^{2/3}r\big) = F_{\mathrm{GUE}}(r)
\end{equation}
with $F_{\mathrm{GUE}}(r) = \mathbb{P}(\xi_\mathrm{{GUE}} \leq r)$, under the assumption that the contribution from the 
remainder of the steepest decent path vanishes as $t \to \infty$.

\section{The Bethe ansatz transition probability}\label{sec6}
\setcounter{equation}{0}
The goal of this section is to establish that the dynamics with point interactions has a ``smooth" transition probability, as used in 
Section \ref{sec2} for the proof of duality. While there should be a more abstract approach, we will use the 
Bethe ansatz construction of the transition probability, as pioneered by Tracy and Widom \cite{TW08,TW09}
in the context of  the ASEP. To make the comparison transparent,
we follow closely their notation, which in part deviates from earlier notations. The particle process is denoted by 
$x(t) \in \mathbb{W}_N^+$ with initial condition $x(0) = y$. As explained before $x(t)$ is the semi-martingale
determined by
\begin{equation}\label{6.1}
x_j(t) = y_j + B_j(t) - p \Lambda^{(j,j+1)}(t) +q \Lambda^{(j-1,j)}(t)\,, 
\end{equation}
$t\geq 0$, $j = 1,...,N$. By definition $\Lambda^{(0,1)}(t) = 0 =  \Lambda^{(N,N+1)}(t)$, where
\begin{equation}\label{6.2}
\Lambda^{(j,j+1)}(\cdot) = L^{x_{j+1} -x_j}(\cdot,0) 
\end{equation}
 is the right-sided local time accumulated at the origin by the nonnegative martingale $x_{j+1}(\cdot) -x_j(\cdot)$.
 
 Let $f:  \mathbb{W}_N^+ \to \mathbb{R}$ be a $C^2$-function and define
 \begin{equation}\label{6.3a}
f(y,t) = \mathbb{E}_y\big(f(x(t)\big)
 \end{equation} 
with $\mathbb{E}_y$ denoting expectation of the $x(t)$ process of (\ref{6.1}) starting at $y \in \mathbb{W}_N^+$. 
As to be shown, $f$ satisfies the backwards equation
\begin{equation}\label{6.4a}
\partial_t f = \tfrac{1}{2} \Delta_y f
\end{equation} 
for $ y \in (\mathbb{W}_N^+)^\circ$ and
\begin{equation}\label{6.5a}
(p\partial_j - q\partial_{j+1})f \big |_{y_j = y_{j+1}} = 0\,,
\end{equation} 
the directional derivative being taken from the interior of $\mathbb{W}_N^+$.

Let us define the standard decomposition
\begin{equation}\label{6.3}
\mathbb{P}\big(x(t) \in \mathrm{d}x\big|x(0) = y\big) = P_y(x,t) \mathrm{d}x + P_y^{\mathrm{sing}}(\mathrm{d}x,t)\,. 
\end{equation}
In spirit $P_y(x,t) $ should be the solution to the backwards equation. We follow Bethe \cite{B35}
and start from  an ansatz for the solution of (\ref{6.4a}), (\ref{6.5a}) given by
\begin{equation}\label{6.6f}
 Q_y(x,t) = \sum_{\sigma \in S_N} \int_{\Gamma_a}  \mathrm{d}z_1\cdots\int_{\Gamma_a} \mathrm{d}z_N\,A_\sigma(\underline{z}) \prod_{j=1}^N\mathrm{e}^{z_{\sigma(j)}(x_j - y_{\sigma(j)})}\mathrm{e}^{\frac{1}{2}z_j^2t}
 = \sum_{\sigma \in S_N} I_\sigma(y;x,t)\,,
  \end{equation}
where the sum is over all permutations $\sigma$ of order $N$. The Gaussian 
factor ensures that Eq. (\ref{6.4a}) is satisfied. The expansion coefficients $A_\sigma$ are determined  
through the boundary condition (\ref{6.5a}). We define
the ratio of scattering amplitudes 
\begin{equation}\label{6.4}
S(z_\alpha,z_\beta) = - \frac{\tau z_\alpha -  z_\beta}{\tau z_\beta - z_\alpha}
\end{equation}
for wave numbers $z_\alpha, z_\beta \in \mathbb{C}$.
The expansion coefficient $A_\sigma$ can be written as
\begin{equation}\label{6.5}
A_\sigma(\underline{z}) = \prod_{\{\alpha,\beta\} \in \mathrm{In}(\sigma)}S(z_\alpha,z_\beta)\,.
\end{equation}
$\underline{z}$ stands for $(z_1,...,z_N)$. $\mathrm{In}(\sigma)$ denotes the set of all inversions in $\sigma$, where an inversion in $\sigma$ means an ordered pair $\{\sigma(i),\sigma(j)\}$ such that $i < j$ and $\sigma(i) > \sigma(j)$.  The contour of integration is $\Gamma_a = \{a +\mathrm{i}\varphi, \varphi \in \mathbb{R}\}$  with positive orientation. 
\begin{theorem}\label{th7}
Let $0 <\tau < 1$ and $a >0$. For $t>0$ and every $y \in \mathbb{W}_N^+$ the transition probability for $x(t)$ is absolutely continuous, $\mathbb{P}\big(x(t) \in \mathrm{d}x\big|x(0) = y\big) =  P_y(x,t) \mathrm{d}x$. Its density has a continuous version on $\mathbb{W}_N^+$ given by  
\begin{equation}\label{6.6}
P_y(x,t) = Q_y(x,t) \quad\mathrm{a.s.}\,. \smallskip
\end{equation}
\end{theorem}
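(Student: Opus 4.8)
The plan is to follow the Bethe-ansatz strategy of Tracy and Widom \cite{TW08,TW09}: exhibit $Q_y(x,t)$ explicitly, verify that it solves the backward problem (\ref{6.4a})--(\ref{6.5a}) with $\delta$-initial data, and then identify it with the transition density by a martingale (uniqueness) argument of the type already used in the proof of Theorem~\ref{th3}. I would begin with the analytic preliminaries. Rewriting each summand of (\ref{6.6f}) as $A_\sigma(\underline z)\prod_j\mathrm e^{z_{\sigma(j)}x_j}\cdot\prod_k\mathrm e^{-z_ky_k}\cdot\mathrm e^{\frac12\sum_jz_j^2t}$, the Gaussian factors $\mathrm e^{\frac12z_j^2t}$, $t>0$, produce super-exponential decay along each line $\Gamma_a$, while the singularities of the scattering factor (\ref{6.4}), located at $z_\alpha=\tau z_\beta$ and, in the reciprocals, at $z_\alpha=\tau^{-1}z_\beta$, are kept off $\Gamma_a^{\,N}$ by the choice $a>0$ since $\Re(\tau z_\beta)=\tau a<a<\tau^{-1}a=\Re(\tau^{-1}z_\beta)$. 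Hence the integral converges absolutely, is $C^\infty$ in $(x,y,t)$ with all derivatives decaying rapidly in the spatial variables, locally uniformly in $t>0$; in particular $Q_y(\cdot,t)\in C^\infty(\mathbb W_N^+)$, and its restriction to the complement of the triple-collision set is jointly smooth, which is precisely the regularity invoked in Section~\ref{sec2}.

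Next I would verify the PDE and the boundary condition. That $\partial_tQ_y=\tfrac12\Delta_xQ_y$ on $(\mathbb W_N^+)^\circ$ is immediate, since $\partial_t$, $\tfrac12\Delta_x$ and $\tfrac12\Delta_y$ each act on every summand by multiplication with $\tfrac12\sum_jz_j^2$. For (\ref{6.5a}) one pairs $\sigma$ with $\sigma\cdot(j\,j{+}1)$: on the face $\{x_j=x_{j+1}\}$ the exponentials carried by $I_\sigma$ and $I_{\sigma\cdot(j\,j{+}1)}$ coincide, while $p\partial_j-q\partial_{j+1}$ multiplies them by $pz_{\sigma(j)}-qz_{\sigma(j+1)}$ and $pz_{\sigma(j+1)}-qz_{\sigma(j)}$ respectively; the amplitude ratio prescribed by (\ref{6.5}) is, by the definition (\ref{6.4}) of $S$ and $\tau=p/q$, exactly the value for which the sum of these two integrands vanishes. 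That (\ref{6.5}) defines $A_\sigma$ consistently, independently of the chosen reduced word for $\sigma$, follows from $S(z_\alpha,z_\beta)S(z_\beta,z_\alpha)=1$ together with the --- here automatically satisfied --- braid relation. Thus $Q_y(\cdot,t)\in\mathcal D(L_x)$ for every $y$ and $t>0$, and so does the mollified family appearing in (\ref{2.15}).

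The one genuinely delicate step is the initial condition $Q_y(\cdot,t)\to\delta_y$ as $t\downarrow0$ in the distributional sense. The identity permutation contributes $I_{\mathrm{id}}(y;x,t)=\prod_j(2\pi t)^{-1/2}\mathrm e^{-(x_j-y_j)^2/2t}$, a product of one-dimensional heat kernels, which tends to $\delta_y$ for $y$ in the open chamber; every other permutation must be shown to contribute nothing in the limit. Following Tracy--Widom I would deform the $N$ contours to mutually distinct vertical lines, collect the residues produced by the poles of the $S$-factors, and argue by induction on $N$ that each resulting term is a lower-dimensional object of the same type which vanishes, for $x,y$ in the open chamber, as $t\downarrow0$; equivalently, one establishes the pointwise identity $\sum_\sigma I_\sigma(y;x,0)=\delta_y(x)$ on $\mathbb W_N^+\times\mathbb W_N^+$. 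I expect this contour-deformation and residue bookkeeping to be the main obstacle, exactly as in \cite{TW08,TW09}; a rigorous version also needs uniform-in-$t$ integrable majorants for the residue terms.

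Finally, to prove (\ref{6.6}) I would run Warren's argument, as in the uniqueness part (iii) of the proof of Theorem~\ref{th3} (see also \cite{War07}), now with the singular initial datum $\delta_y$. Apply Itô's formula to $s\mapsto Q_y(x(s),T+\epsilon-s)$, $0\le s\le T$, along the semimartingale (\ref{6.1}) with an arbitrary starting point. By (\ref{6.4a}) the $\mathrm dt$-term cancels, and by (\ref{6.5a}) the Skorokhod (local-time) contributions drop out because $Q_y(\cdot,t)\in\mathcal D(L_x)$; hence $s\mapsto Q_y(x(s),T+\epsilon-s)$ is a martingale, and letting $\epsilon\downarrow0$ --- using continuity of $t\mapsto Q_y(\cdot,t)$ and of the paths on one side, and on the other side testing against a smooth function and invoking the distributional initial condition together with the uniform bounds from the first step --- one concludes that the law of $x(T)$ is absolutely continuous with density the continuous function (\ref{6.6f}). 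Nonnegativity and unit mass are then automatic, the density is smooth off the triple collisions by the first step, and it is therefore the continuous version of $P_y(\cdot,T)$, which is the assertion (\ref{6.6}).
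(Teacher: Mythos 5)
Your overall architecture is the same as the paper's: Bethe-ansatz formula, verification of the backward equation and boundary condition by pairing $\sigma$ with $\sigma\cdot(j\,j{+}1)$, the delta initial condition, and Warren's It\^o/martingale argument to identify $Q_y$ with the transition kernel. However, there is a genuine gap at the very end. The martingale step only yields $\mathbb{E}_y\big(f(x(t))\big)=Q_y(f,t)$ for test functions $f$ supported in $\mathbb{W}_{N,\epsilon}^+$, i.e.\ away from the faces $\{x_j=x_{j+1}\}$, because the distributional initial condition (the vanishing of $I_\sigma$ for $\sigma\neq\mathrm{id}$) is only controlled there. This identifies the absolutely continuous part of the law of $x(T)$ on the open chamber, but it leaves open the possibility of a singular component $P_y^{\mathrm{sing}}(\mathrm{d}x,t)$ concentrated on $\partial\mathbb{W}_N^+$ where particles coincide. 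Your assertion that ``unit mass is then automatic'' is exactly the point that is not automatic: the paper devotes Lemma \ref{le10} to it, proving $Q_y(1\!\!1,t)=1$ by computing the distribution function $g_N(u)$ of the rightmost particle with the first Tracy--Widom combinatorial identity and showing $\lim_{u\to\infty}g_N(u)=1$ by an induction on $N$ after a contour deformation around $z=0$. Without this (or an independent argument that $\mathbb{P}(x_j(t)=x_{j+1}(t))=0$ for every fixed $t>0$), absolute continuity of the full transition probability is not established.

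A secondary, smaller issue: for the initial condition with $\sigma\neq\mathrm{id}$ you propose the ASEP-style route of unnesting contours, collecting residues and using cancellations. The paper's mechanism is different and simpler here: after the substitution $z_0=z_1+\cdots+z_N$ and choosing $\ell=\min B$, \emph{each} $I_\sigma$ vanishes individually as $t\to0$, because all poles of the $z_\ell$-integrand lie to the right of $\Gamma_a$ while the relevant phase $x_{\sigma^{-1}(\ell)}-x_n+y_N-y_\ell\geq\epsilon>0$; the limit then follows from the convolution identities (\ref{6.12})--(\ref{6.13}) with dominated convergence (Appendix \ref{app.d}). No inter-permutation cancellation is needed, which is precisely where the continuum model is easier than ASEP. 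Your sketch is not wrong in spirit, but as written it is a plan rather than a proof of the key estimate, and the residue bookkeeping you anticipate is avoidable.
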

For $1 < \tau < \infty$, Eq.  (\ref{6.6}) still holds, but one has to impose $a < 0$. The limiting cases $\tau = 1$ and $\tau \to 0$ will be discussed below.

We first investigate properties of  $Q_y(x,t)$ and set 
\begin{equation}\label{6.6a}
Q_y(f,t) =  \int_{\mathbb{W}_N^+} \mathrm{d}x Q_y(x,t)f(x) \,,\quad I_\sigma(y;f,t)  = \int_{\mathbb{W}_N^+} \mathrm{d}x  I_\sigma(y;x,t)f(x)
\end{equation}
with properties of the test function $f$ to be specified later on.
\begin{lemma}\label{le8}
Let $y \in  \mathbb{W}_N^+$ and let $f \in \mathcal{D}_\epsilon$, which consists of smooth functions with compact support contained in 
$ \mathbb{W}_{N,\epsilon}^+ = \{x \in  \mathbb{W}_N^+|\, |x_{j+1} - x_j| \geq \epsilon, \,\mathrm{all}\,j\}$. Then for $x \in  (\mathbb{W}_N^{+})^\circ$ it holds 
\begin{equation}\label{6.6b}
\partial_t Q_y(x,t) = \tfrac{1}{2} \Delta_y Q_y(x,t)\,, 
\end{equation}
\begin{equation}\label{6.6bb}
(\tau\partial_j -\partial_{j+1})Q_y(x,t)\big|_{y_j = y_{j+1}} =0\,. 
\end{equation}
For $\sigma = \mathrm{id}$, the identity permutation,
\begin{equation}\label{6.6c}
\lim_{t \to 0}I_\mathrm{id}(y;f,t)  = f(y)
\end{equation}
and for $\sigma \neq \mathrm{id}$
\begin{equation}\label{6.6d}
\lim_{t \to 0}I_\sigma(y;f,t) = 0\,. 
\end{equation}
\end{lemma}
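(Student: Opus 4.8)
The four assertions are read off the explicit integrand in (6.6f). Before starting, note that $a>0$ and $0<\tau<1$ ensure that no pole of $A_\sigma$ meets $\Gamma_a^N$ — $z_\alpha=\tau z_\beta$ with $\Re z_\alpha=\Re z_\beta=a$ would force $a=\tau a$ — while $|A_\sigma|$ is at most polynomially large along $\Gamma_a^N$ and each $\mathrm{e}^{\frac12 z_j^2 t}$ supplies the Gaussian decay $\mathrm{e}^{-\frac12 t\varphi_j^2}$ on $\Gamma_a$. Hence, for $t>0$, every $I_\sigma$-integral converges absolutely and may be differentiated under the integral sign, and $Q_y(x,t)$, a finite sum, is smooth in $(y,t)$. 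The routine parts are (6.6b) and (6.6c); the content of (6.6bb) is the two-body Bethe relation; I expect (6.6d) to be the main obstacle.

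For (6.6b): in the exponent of $I_\sigma$ the variable $y_m$ occurs only through $-z_m y_m$, so $\partial_{y_m}$ inserts $-z_m$, $\Delta_y$ inserts $\sum_m z_m^2$, and $\partial_t$ inserts $\tfrac12\sum_m z_m^2$; thus $\partial_t Q_y=\tfrac12\Delta_y Q_y$ term by term. For (6.6c): $A_{\mathrm{id}}=1$, so $I_{\mathrm{id}}(y;x,t)$ factorizes over $j$, and completing the square together with a horizontal translation of each contour — legitimate because $\mathrm{e}^{\frac12 z^2 t}$ kills the connecting arcs at $\pm\mathrm{i}\infty$ — gives $I_{\mathrm{id}}(y;x,t)=\prod_j (2\pi t)^{-1/2}\mathrm{e}^{-(x_j-y_j)^2/2t}$, the free heat kernel. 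Extending $f\in\mathcal D_\epsilon$ by zero to $\mathbb R^N$ (its support lies in $\mathbb W_N^+$), $I_{\mathrm{id}}(y;f,t)=\mathbb E\big[f\big(y+\sqrt t\,Z\big)\big]$ with $Z$ a standard $N$-dimensional Gaussian, which tends to $f(y)$ as $t\to0$ by continuity and boundedness of $f$.

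Claim (6.6bb) is the two-body relation. Since $(\tau\partial_{y_j}-\partial_{y_{j+1}})$ inserts the factor $z_{j+1}-\tau z_j$ into the $I_\sigma$-integrand, I would pair each $\sigma$ with $s\sigma$, where $s=(j,j+1)$ acts on values, and change variables $z_j\leftrightarrow z_{j+1}$ in the $I_\sigma$-integral (legitimate, both contours being $\Gamma_a$). On the diagonal $y_j=y_{j+1}$ the exponential part of the transformed $I_\sigma$-integrand coincides with that of the $I_{s\sigma}$-integrand, the inserted factor becomes $z_j-\tau z_{j+1}$, and $A_\sigma(\underline z)$ becomes $A_\sigma^{(j)}(\underline z)$, the value of $A_\sigma$ with $z_j$ and $z_{j+1}$ interchanged. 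A short inspection of how $\mathrm{In}(s\sigma)$ differs from $\mathrm{In}(\sigma)$ — all pairwise orderings of values are unchanged except that of the pair $\{j,j+1\}$ — yields $A_{s\sigma}(\underline z)=S(z_{j+1},z_j)\,A_\sigma^{(j)}(\underline z)$, and since $S(z_{j+1},z_j)=(\tau z_{j+1}-z_j)/(z_{j+1}-\tau z_j)$ one gets $A_\sigma^{(j)}(z_j-\tau z_{j+1})+A_{s\sigma}(z_{j+1}-\tau z_j)=0$, so the two contributions cancel; summing over the $N!/2$ pairs $\{\sigma,s\sigma\}$ proves (6.6bb).

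Claim (6.6d) is the delicate one. I would follow the initial-condition analysis of Tracy and Widom \cite{TW08,TW09}. Doing the $x$-integration first (valid for $t>0$ by absolute convergence) turns $I_\sigma(y;f,t)$ into $\int_{\Gamma_a^N}\prod_m\mathrm{d}z_m\,A_\sigma(\underline z)\,\mathrm{e}^{-\sum_m z_m y_m}\,\mathrm{e}^{\frac12 t\sum_m z_m^2}\,\hat f\big(z_{\sigma(1)},\dots,z_{\sigma(N)}\big)$, where $\hat f(\xi)=\int f(x)\mathrm{e}^{\xi\cdot x}\mathrm{d}x$ is entire with rapid Paley--Wiener decay along vertical lines; dominated convergence then gives $\lim_{t\to0}I_\sigma(y;f,t)=\int_{\Gamma_a^N}\prod_m\mathrm{d}z_m\,A_\sigma(\underline z)\,\mathrm{e}^{-\sum_m z_m y_m}\,\hat f\big(z_{\sigma(1)},\dots,z_{\sigma(N)}\big)$, so it remains to show this $t=0$ integral vanishes for $\sigma\neq\mathrm{id}$. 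Reading the $z$-integral back as a tempered distribution in $x$, this amounts to showing that distribution has no support in $\mathbb W_{N,\epsilon}^+$. I would argue by induction on $N$: expanding $S(z_\alpha,z_\beta)=\tau+(\tau^2-1)\,z_\beta/(z_\alpha-\tau z_\beta)$ and integrating out one variable at a time, the purely "$\tau$" term reconstructs a constant multiple of the Dirac mass at the permuted point $x=\sigma(y)$, while each term keeping a factor $z_\beta/(z_\alpha-\tau z_\beta)$ is evaluated by the residue at $z_\alpha=\tau z_\beta$, which pins one variable to $\tau$ times another and produces an integral of the same type in fewer dimensions, disposed of by the inductive hypothesis; the base case $N=1$ is vacuous since $S_1=\{\mathrm{id}\}$. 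The point that kills everything against $f\in\mathcal D_\epsilon$ — and where $0<\tau<1$ and $y\in\mathbb W_N^+$ are used — is that each resulting (derivative of a) Dirac mass is supported on an affine subspace disjoint from $\mathbb W_{N,\epsilon}^+$: a pinning $z_\alpha=\tau z_\beta$, together with the sign constraint coming from the residue evaluation and the fact that the $y_i$ are weakly increasing with $\tau<1$, forces a linear relation among the $x$-coordinates incompatible with all gaps $x_{i+1}-x_i\ge\epsilon$; likewise $\sigma(y)\notin\mathbb W_{N,\epsilon}^+$ whenever $\sigma\neq\mathrm{id}$. The genuinely technical part is making the contour-shifting and residue bookkeeping at $t=0$ rigorous, the $z$-integrals being only conditionally convergent there; this is handled exactly as in \cite{TW08,TW09}, e.g. by retaining a Gaussian regulator or by working throughout with tempered distributions in the $x$-variable.
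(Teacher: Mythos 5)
Your treatment of (\ref{6.6b}) and (\ref{6.6c}) is fine, and your proof of (\ref{6.6bb}) --- pairing $\sigma$ with $s\sigma$, swapping $z_j\leftrightarrow z_{j+1}$, and using $A_{s\sigma}=S(z_{j+1},z_j)A_\sigma^{(j)}$ --- is correct and is exactly the general form of the computation the paper only displays for $N=2$ in (\ref{6.10a}). You also correctly single out (\ref{6.6d}) as the real content of the lemma. But your argument for (\ref{6.6d}) has a genuine gap. The proposed induction ``expand $S=\tau+(\tau^2-1)z_\beta/(z_\alpha-\tau z_\beta)$, take residues at $z_\alpha=\tau z_\beta$ one variable at a time, obtain an integral of the same type in fewer dimensions'' does not close: after pinning $z_\alpha=\tau z_\beta$, the surviving factors $S(z_\alpha,z_\gamma)$ become $S(\tau z_\beta,z_\gamma)=(\tau^2 z_\beta-z_\gamma)/\bigl(\tau(z_\beta-z_\gamma)\bigr)$, which has a pole at $z_\beta=z_\gamma$ \emph{on} the contour $\Gamma_a\times\Gamma_a$ (the numerator does not vanish there), so the reduced integral is not of the same type and the individual residue terms are not even separately well defined. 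Moreover, the decisive claim --- that every residue term is supported on an affine set disjoint from $\mathbb{W}_{N,\epsilon}^+$ because of ``the sign constraint coming from the residue evaluation'' --- is asserted rather than proved, and it is precisely the point at which the analogous term-by-term argument is known to fail: the paper remarks that for the ASEP the individual $I_\sigma$ do \emph{not} vanish and cancellations between residue terms are required. Deferring this to ``exactly as in Tracy--Widom'' therefore does not discharge the obligation.

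What the paper actually does is structurally different and avoids residues altogether. Writing $\sigma(n)=N$ with $n<N$ and $B=\{\sigma(n+1),\dots,\sigma(N)\}$, one substitutes $z_N$ by $z_0=z_1+\cdots+z_N$ and sets $\ell=\min B$. A short case analysis ((\ref{6.18c})--(\ref{6.20})) shows that \emph{every} pole of the integrand, viewed as a function of $z_\ell$, lies strictly to the right of $\Gamma_a$: the poles coming from $S(z_N,z_j)$ move to real part $\tfrac{2a}{1+\tau}$ or $(2-\tau)a$, and $\ell=\min B$ forces $z_\ell$ to occur only as the second argument in the remaining $S$-factors, whose poles sit at $\tau^{-1}z_\alpha$. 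Simultaneously the phase conjugate to $z_\ell$ is $w=x_{\sigma^{-1}(\ell)}-x_{\sigma^{-1}(N)}+y_N-y_\ell\geq\epsilon$ on $\mathrm{supp}\,f$. The identity $\int_{\Gamma_a}\mathrm{d}z\,\mathrm{e}^{(z-b)u}/(z-b)=-\theta(-u)$ for $\Re b>a$ then shows (Lemma \ref{le15}) that the $z_\ell$-integral is the heat kernel convolved with distributions supported on $(-\infty,0]$, hence vanishes against a test function supported in $\{w\geq\epsilon\}$ as $t\to0$; a polynomial bound in the remaining variables justifies dominated convergence. If you want to complete your proof you must either reproduce this single-variable argument or supply the full cancellation bookkeeping that your residue expansion requires --- the latter is substantially harder than your sketch suggests.
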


We illustrate the method by means $N = 2$, for which
\begin{eqnarray}\label{6.10}
&&\hspace{-26pt}Q_y(x,t) = \int_{\Gamma_a} \mathrm{d}z_1 \int_{\Gamma_a} \mathrm{d}z_2\big( \mathrm{e}^{z_1(x_1-y_1)
+z_2(x_2 - y_2)} - \frac{\tau z_2 -  z_1}{\tau z_1 -  z_2} \mathrm{e}^{z_2(x_1-y_2)
+z_1(x_2 - y_1)}\big) \,\mathrm{e}^{\frac{1}{2}(z_1^2 +z_2^2)t} \nonumber\\
&&\hspace{17pt}= I_{12}(y;x,t) + I_{21}(y;x,t)\,.
\end{eqnarray}
The validity of Eq. (\ref{6.6b}) is easily checked. For the boundary condition we note
\begin{eqnarray}\label{6.10a}
&&\hspace{-36pt}(\tau \partial_1 - \partial_2) Q_y(x,t)\big |_{y_1 = y_{2}} =\int_{\Gamma_a} \mathrm{d}z_1 \int_{\Gamma_a} \mathrm{d}z_2
 \big( (-\tau z_1 + z_2)\mathrm{e}^{z_1(x_1-y_1)
+z_2(x_2 - y_1)} \nonumber \\
&&\hspace{-16pt}
-  \frac{\tau z_2 -  z_1}{\tau z_1 -  z_2}(-\tau z_1 +z_2)\mathrm{e}^{z_2(x_1-y_1)
+z_1(x_2 - y_1)} \big) \mathrm{e}^{\frac{1}{2}(z_1^2 +z_2^2)t} 
=0\,.
\end{eqnarray}
Clearly, $I_{12}(y;x,t)$ satisfies (\ref{6.6c}). Thus we still have show that $\lim_{t \to 0}I_{21}(y;x,t)$ vanishes for  $y \in \mathbb{W}_2^+$ and $x \in (\mathbb{W}_2^{+})^\circ$.
For this purpose, we introduce a new variable, $z_0$,
by $z_0 = z_1 +z_2$ and substitute $z_2$ by $z_0$. Then 
\begin{eqnarray}\label{6.11}
&&\hspace{-43pt} I_{21}(y;x,t) = - \int_{\Gamma_{2a}} \mathrm{d}z _0\int_{\Gamma_a} \mathrm{d}z_1 \frac{\tau (z_0 - z_1) -z_1 }{\tau z_1 -(z_0-z_1)} \,\mathrm{e}^{(z_0 - z_1)(x_1-y_2)
+z_1(x_2 - y_1)} \mathrm{e}^{\frac{1}{2}(z_1^2 + (z_0-z_1)^2)t}\nonumber\\
&&\hspace{-30pt}=  \int_{\Gamma_{2a}} \mathrm{d}z_0 \int_{\Gamma_a} \mathrm{d}z_1 \frac{z_1 - \tau(1+\tau)^{-1}z_0}{z_1 - (1+\tau)^{-1}z_0} \,\mathrm{e}^{z_1(x_2 - x_1 +y_2 - y_1)+z_0(x_1-y_2) 
} \mathrm{e}^{\frac{1}{2}(z_1^2 + (z_0-z_1)^2)t}\,.
\end{eqnarray}
The pole of $z_1$ is at $z_1 = (1+\tau)^{-1}z_0$ and hence to the right of $\Gamma_a$. Under our assumptions
one has $x_2 -x_1 +y_2 - y_1 > 0$. For the limit $t \to 0$ the following distributional identities will be used.\medskip\\
For $a\in \mathbb{R}$, $b\in \mathbb{C}$ with $\Re b<a$ it holds
\begin{equation}\label{6.12}
\int_{\Gamma_a}\mathrm{d}z \frac{\mathrm{e}^{(z-b)u}}{z-b} = \theta(u)
\end{equation}
and  with $\Re b> a$
\begin{equation}\label{6.13}
\int_{\Gamma_a}\mathrm{d}z \frac{\mathrm{e}^{(z-b)u}}{z-b} = - \theta(-u)\,,
\end{equation}
where $ \theta(u) = 1$ for $u >0$ and $ \theta(u) = 0$ for $u <0$. Using (\ref{6.13}) implies $\lim_{t \to 0}I_{21}(y;x,t) = 0$
for $x \in (\mathbb{W}_2^{+})^\circ$.\\\\
\textbf{Proof of Lemma \ref{le8}}:  The properties (\ref{6.6b}) and (\ref{6.6bb}) are easily checked. Also Property (\ref{6.6c}) 
follows directly from the definition. The difficult part is (\ref{6.6d}). In fact, for the ASEP the analogue of 
$I_\sigma$ is not necessarily equal to $0$ and one has to use cancellations. In this respect the contour integral for  Brownian motions with oblique reflections  has a somewhat simpler pole structure than its lattice gas version. 

We choose subsets $A,B \subset [1,...,N-1]$, such that $A\cap B = \emptyset$, $A\cup B = [1,...,N-1]$, $0\leq |A| \leq N-2$,
$1\leq |B| \leq N-1$. For $1 \leq n < N$ we set $\sigma(n) = N$, $A = \{i_1,...,i_{n-1}\}$ and  $B = \{i_{n+1},...,i_{N}\}$.
A generic permutation then reads 
\begin{equation}\label{6.16}
\sigma =
\begin{pmatrix}
1&2&\cdots& n-1 & n &n+1& \cdots&N\\
i_1& i_2&\cdots& i_{n-1} & N&i_{n+1} &\cdots& i_N
\end{pmatrix}\,.
\end{equation}
If $n = N$, one falls back onto the case $N-1$. Thus without loss of generality one can restrict to $n< N$.

By separating the factors corresponding to the inversions $(N,j)$ with $j\in B$, the integrand of $I_\sigma$ can be written as
\begin{equation}\label{6.17}
\prod_{j\in B} S(z_N,z_j)
\prod_{\{\alpha,\beta\} \in \mathrm{In}(\sigma), \,\alpha \neq N} S(z_\alpha,z_\beta) 
 \prod_{j=1}^N \mathrm{e}^{z_j(x_{\sigma^{-1}(j)} - y_j)} \mathrm{e}^{\frac{1}{2}z_j^2t}\,.
\end{equation}
We set
\begin{equation}\label{6.18}
z_0 = z_1 +...+z_{N}
\end{equation}
and substitute $z_N$ by $z_0$. Hence $z_0 \in \Gamma_{Na}$. The phase factor transforms to
\begin{equation}\label{6.18a}
 \prod_{j=1}^N \mathrm{e}^{z_j(x_{\sigma^{-1}(j)} - y_j)} 
 =  \prod_{j=1}^{N-1}  \mathrm{e}^{z_j(x_{\sigma^{-1}(j)} -x_{\sigma^{-1}(N)}+ y_N  - y_j)}
 \mathrm{e}^{z_0(x_{\sigma^{-1}(N)} -y_N)}\,.
\end{equation}
Since $\sigma^{-1}(N)= n$ and $ n< \sigma^{-1}(j)$ for $j\in B$, one concludes
\begin{equation}\label{6.18b}
x_{\sigma^{-1}(j)} -x_{\sigma^{-1}(N)}+ y_N  - y_j \geq \epsilon\,,\quad j\in B\,.
\end{equation}

We set $\ell = \mathrm{min}\,B$ and first integrate over $z_\ell$. Poles may arise from $S(z_N,z_j)$ $[case\,\,1]$ and $S(z_\alpha,z_\beta)$ $[case\,\,2]$.
In the first case, if $j = \ell$, the denominator reads
\begin{equation}\label{6.18c}
(1 +\tau)z_\ell - (z_0 -z_1-... - \cancel{z_\ell} -... - z_{N-1})\,. 
\end{equation}
Since $\tau <1$, the pole for the $z_\ell$ integration lies to the right of $\Gamma_a$. Furthermore, if $j \neq \ell$, the denominator reads
\begin{equation}\label{6.19}
z_\ell + (1 +\tau)z_j - (z_0 -z_1-... - \cancel{z_j}-\cancel{z_\ell} -... - z_{N-1})\,. 
\end{equation}
As before, the pole for the $z_\ell$ integration lies to the right of $\Gamma_a$.
In the second case 
a generic factor reads 
\begin{equation}\label{6.20}
S(z_\alpha,z_\beta) = - \frac{\tau z_\alpha -  z_\beta}{\tau z_\beta - z_\alpha}
\end{equation}
with $\alpha > \beta$. If $\ell = \alpha$, then $\ell >\beta$. Since $\ell = \mathrm{min}\,B$, one must have $\beta \in A$.
But then $(\ell,\beta)$ is not an inversion. Hence
$\ell = \beta$ and the pole for the $z_\ell$ integration is at $\tau^{-1}z_\alpha$ for some $\alpha \in [1,...,N-1]$ and hence to the right of $\Gamma_a$.
Thus the $z_\ell$ integration has no poles to the left of $\Gamma_a$. With this information Property (\ref{6.6d}) can be proved.
We leave the details for Appendix \ref{app.d}.
\begin{lemma}\label{le9}
For $f\in \mathcal{D}_\epsilon$ it holds
\begin{equation}\label{6.21}
\mathbb{E}_y\big(f(x(t))\big) = Q_y(f,t)\,. 
\end{equation}
\end{lemma}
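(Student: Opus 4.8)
The plan is to run the martingale argument of Warren \cite{War07}, already used in the proof of Theorem \ref{th3} part (iii), with $\tilde F_n$ replaced by $Q_y(f,\cdot)$ and the evolution properties used there replaced by those supplied by Lemma \ref{le8}. Write $u(y,s):=Q_y(f,s)$ as in (\ref{6.6a}). Integrating the three conclusions of Lemma \ref{le8} against $f$ gives: $\partial_s u=\tfrac{1}{2}\Delta_y u$ on $(\mathbb{W}_N^+)^\circ\times(0,\infty)$ from (\ref{6.6b}); the boundary relation $(\tau\partial_j-\partial_{j+1})u\big|_{y_j=y_{j+1}}=0$ from (\ref{6.6bb}); and $\lim_{s\to0}u(y,s)=f(y)$ for every $y\in\mathbb{W}_N^+$ from (\ref{6.6c})--(\ref{6.6d}). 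Moreover, since $f\in\mathcal{D}_\epsilon$ is smooth and compactly supported, integrating the contour representation (\ref{6.6f}) against $f$ shows that $u(\cdot,s)$ extends to a $C^\infty$ function on $\mathbb{R}^N$ for every $s\ge0$, whose $y$-derivatives are bounded by a constant times $\mathrm{e}^{-a\sum_k y_k}$ uniformly for $s$ in bounded sets; here the superpolynomial decay in the integration variables inherited from $f$ dominates the at most polynomial growth of the amplitudes $A_\sigma$ on the contours, and the Gaussian weight $\mathrm{e}^{\frac{1}{2} z_k^2 s}$ only helps.

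Fix $t>0$ and $\epsilon>0$ and apply Itô's formula to $s\mapsto u(x(s),t+\epsilon-s)$ for $0\le s\le t$ along (\ref{6.1}); the computation is that of (\ref{10}) with $\tilde F_n$ replaced by $u$. The $\mathrm{d}s$-term $\big(-\partial_s u+\tfrac{1}{2}\Delta_y u\big)(x(s),t+\epsilon-s)$ vanishes by (\ref{6.6b}), and the Skorokhod increments collect into $\sum_{j=1}^{N-1}\big(-p\,\partial_j u+q\,\partial_{j+1}u\big)(x(s),t+\epsilon-s)\,\mathrm{d}\Lambda^{(j,j+1)}(s)$, which also vanishes because $\mathrm{d}\Lambda^{(j,j+1)}$ is carried by $\{x_j(s)=x_{j+1}(s)\}$ and on $\{y_j=y_{j+1}\}$ one has $-p\,\partial_j u+q\,\partial_{j+1}u=-q(\tau\partial_j-\partial_{j+1})u=0$ by (\ref{6.6bb}). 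The surviving stochastic integral $\sum_j\int_0^s\partial_j u(x(r),t+\epsilon-r)\,\mathrm{d}B_j(r)$ is a genuine (not merely local) martingale: by the bound above $|\partial_j u(x(r),t+\epsilon-r)|\le C\,\mathrm{e}^{-a\sum_k x_k(r)}$, while from (\ref{6.1}) the net collision drift acting on $\sum_k x_k$ equals $(q-p)\sum_{j}\Lambda^{(j,j+1)}(r)\ge0$, hence $\sum_k x_k(r)\ge\sum_k y_k+\sum_k B_k(r)$ and $\mathbb{E}_y\!\int_0^t|\partial_j u(x(r),t+\epsilon-r)|^2\,\mathrm{d}r\le C^2\,\mathrm{e}^{-2a\sum_k y_k}\int_0^t\mathrm{e}^{N a^2 r}\,\mathrm{d}r<\infty$. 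Taking expectations and comparing the values at $s=0$ and $s=t$ gives \[u(y,t+\epsilon)=\mathbb{E}_y\big[u(x(t),\epsilon)\big].\]

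Finally let $\epsilon\to0$. The left-hand side tends to $u(y,t)=Q_y(f,t)$ by continuity of $s\mapsto Q_y(f,s)$ on $(0,\infty)$. On the right-hand side, $x(t)\in\mathbb{W}_N^+$ holds $\mathbb{P}_y$-almost surely, so $u(x(t),\epsilon)\to f(x(t))$ along each path by (\ref{6.6c})--(\ref{6.6d}); dominated convergence applies with the $\mathbb{P}_y$-integrable majorant $C\,\mathrm{e}^{-a\sum_k x_k(t)}\le C\,\mathrm{e}^{-a\sum_k y_k}\mathrm{e}^{-a\sum_k B_k(t)}$, and therefore $\mathbb{E}_y[u(x(t),\epsilon)]\to\mathbb{E}_y[f(x(t))]$. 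Hence $Q_y(f,t)=\mathbb{E}_y[f(x(t))]$, which is (\ref{6.21}).

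Beyond Lemma \ref{le8}, the proof reproduces the skeleton of Theorem \ref{th3} part (iii) and uses no new idea. The only points that need care are the growth estimate $|\partial_j u(y,s)|\lesssim\mathrm{e}^{-a\sum_k y_k}$ invoked twice above, together with the remark that this blow-up in the direction $y_1\to-\infty$ is harmless since the interaction only ever pushes the centre of mass of $x(\cdot)$ to the right. I note also that the scheme is set up on purpose so that the spatial slot of $u$ stays frozen at the path-dependent point $x(t)$ as $\epsilon\to0$; thus only the pointwise limits (\ref{6.6c})--(\ref{6.6d}) are required, not a locally uniform strengthening of them.
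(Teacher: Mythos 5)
Your proposal is correct and follows essentially the same route as the paper: the paper's proof of Lemma \ref{le9} likewise just transplants the Warren/It\^o argument of part \textit{(iii)} of Theorem \ref{th3}, with the evolution equation, boundary condition and initial data supplied by Lemma \ref{le8}, obtaining $\mathbb{E}_y\big(Q_{x(\epsilon)}(f,T)\big)=\mathbb{E}_y\big(Q_{x(T)}(f,\epsilon)\big)$ and letting $\epsilon\to 0$. The only difference is that you spell out the integrability estimates (the $\mathrm{e}^{-a\sum_k y_k}$ bound on $\partial_j u$, the sign of the collision drift on the centre of mass, and the dominated-convergence majorant) which the paper leaves implicit; these details are accurate.
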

\begin{proof}
Let us denote $P_y(f,t) =  \mathbb{E}_y\big(f(x(t))\big)$. We have to show that $P_y(f,t)= Q_y(f,t)$,
which corresponds to Theorem \ref{th3} upon identifying $P_y(f,t)$ with $F_n$ and  $Q_y(f,t)$ with
$\tilde{F}_n$. We have established already that $Q_y(f,t)$ satisfies the properties \textit{(i)} and \textit{(ii)} in the proof of Theorem
\ref{th3}. So we merely have to copy part \textit{(iii)} with the result
\begin{equation}\label{6.21a}
\mathbb{E}_y\big(Q_{x(\epsilon)}(f,T)\big) = \mathbb{E}_y\big(P_{x(T)}(f,\epsilon)\big)\,. 
\end{equation}
Continuously in $\epsilon$, $x(\epsilon) \to y$ and $P_{x(T)}(f,\epsilon) \to f(x(T))$. Hence
\begin{equation}\label{6.21b}
Q_y(f,T) = \mathbb{E}_y\big(f(x(T))\big) \,. 
\end{equation}
\end{proof}

The lemma tells  us that a possible singular contribution to the transition probability has to be concentrated on $\partial
\mathbb{W}_N^+$. 
\begin{lemma}\label{le10}
For $t>0$ and $y \in \mathbb{W}_N^+$,  
 \begin{equation}\label{6.22}
P_y^{\mathrm{sing}}(\mathrm{d}x,t) = 0\,. 
\end{equation}
\end{lemma}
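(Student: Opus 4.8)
The plan is to show that, for each fixed $t>0$, the law of $x(t)$ puts no mass on $\partial\mathbb{W}_N^+=\bigcup_{j=1}^{N-1}\{x:x_j=x_{j+1}\}$; by Lemma \ref{le9} this is all that is missing. Indeed, since $\mathcal{D}_\epsilon$-functions for varying $\epsilon$ exhaust $C_c((\mathbb{W}_N^+)^\circ)$ and $(\mathbb{W}_N^+)^\circ=\bigcup_{\epsilon>0}\mathbb{W}_{N,\epsilon}^+$, Lemma \ref{le9} says that the restriction of the law of $x(t)$ to $(\mathbb{W}_N^+)^\circ$ has density $Q_y(\cdot,t)$; hence for every Borel $A\subset\mathbb{W}_N^+$,
\begin{equation*}
\mathbb{P}_y\big(x(t)\in A\big)=\int_A Q_y(x,t)\,\mathrm{d}x+\mathbb{P}_y\big(x(t)\in A\cap\partial\mathbb{W}_N^+\big),
\end{equation*}
so $P_y^{\mathrm{sing}}(\mathrm{d}x,t)$ is carried by $\partial\mathbb{W}_N^+$ and has total mass $\mathbb{P}_y(x(t)\in\partial\mathbb{W}_N^+)$. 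By a union bound it suffices to prove, for each $j\in\{1,\dots,N-1\}$ and each fixed $t>0$, that $\mathbb{P}_y(x_j(t)=x_{j+1}(t))=0$.

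\textbf{A Skorokhod decomposition for a gap.} Fix such a $j$ and set $D_j=x_{j+1}-x_j\ge 0$. Subtracting (\ref{6.1}) for index $j$ from (\ref{6.1}) for index $j+1$ and using $p+q=1$ gives $D_j(t)=\Xi_j(t)+\Lambda^{(j,j+1)}(t)$ with
\begin{equation*}
\Xi_j(t):=D_j(0)+\tilde B_j(t)-p\Lambda^{(j+1,j+2)}(t)-q\Lambda^{(j-1,j)}(t),
\end{equation*}
where $\tilde B_j=B_{j+1}-B_j$ (a Brownian motion with $\langle\tilde B_j\rangle_t=2t$) and the convention $\Lambda^{(0,1)}=\Lambda^{(N,N+1)}=0$ is in force. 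Since $\Xi_j(0)=D_j(0)\ge 0$, while $\Lambda^{(j,j+1)}=L^{D_j}(\cdot,0)$ is continuous, nondecreasing and increases only on $\{s:D_j(s)=0\}$, this is a Skorokhod decomposition of the nonnegative semimartingale $D_j$ with input path $\Xi_j$; by Skorokhod's lemma $\Lambda^{(j,j+1)}(t)=\max\big(0,\sup_{0\le s\le t}(-\Xi_j(s))\big)$, and therefore
\begin{equation*}
D_j(t)=0\;\Longrightarrow\;\Xi_j(t)=\min_{0\le s\le t}\Xi_j(s).
\end{equation*}

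\textbf{The core argument.} On the event $E:=\{D_j(t)=0\}$ the absence of triple collisions (Section \ref{sec2}) gives $D_{j-1}(t)>0$ and $D_{j+1}(t)>0$ a.s., so by continuity of the paths there is a random $\eta>0$ with $D_{j-1}>0$ and $D_{j+1}>0$ throughout $[t-\eta,t]$; hence $\Lambda^{(j-1,j)}$ and $\Lambda^{(j+1,j+2)}$ are constant on $[t-\eta,t]$ and $\Xi_j(s)-\Xi_j(t-\eta)=\tilde B_j(s)-\tilde B_j(t-\eta)$ there. Combining with $\Xi_j(t)=\min_{[0,t]}\Xi_j\le\Xi_j(s)$ we get $\tilde B_j(t)\le\tilde B_j(s)$ for all $s\in[t-\eta,t]$, i.e.\ $E$ is contained in the event that the Brownian path $\tilde B_j$ has a one-sided local minimum at the deterministic time $t$. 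Reversing time at $t$, the process $W(r):=\tilde B_j(t-r)-\tilde B_j(t)$, $r\in[0,t]$, is again a Brownian motion started at $0$, and this event reads $\bigcup_{k\ge 1}\{\inf_{0\le r\le 1/k}W(r)\ge 0\}$, a countable union of events of probability zero by Blumenthal's $0$--$1$ law. Hence $\mathbb{P}_y(E)=0$, and the lemma follows.

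\textbf{Where the difficulty lies.} The easy half of this argument is the occupation times formula applied to $D_j$: it gives $\int_0^t\mathbf{1}_{\{D_j(s)=0\}}\,\mathrm{d}s=0$ a.s., whence $\int_0^t\mathbb{P}_y(D_j(s)=0)\,\mathrm{d}s=0$ by Tonelli and $\mathbb{P}_y(D_j(s)=0)=0$ for Lebesgue-a.e.\ $s$. The one delicate point is the upgrade from \emph{almost every} $t$ to \emph{every} $t>0$; this is exactly where the Skorokhod representation and the no-triple-collision property are used, reducing matters to the elementary fact that a Brownian path almost surely has no one-sided local extremum at a prescribed time. (For $j\in\{1,N-1\}$ one term in $\Xi_j$ is simply absent, and for $N=2$ one has $\Xi_1=D_1(0)+\tilde B_1$ with no reduction needed; the argument is unchanged.)
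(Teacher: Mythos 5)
Your proof is correct, but it follows a genuinely different route from the paper's. The paper argues analytically: since $Q_y(x,t)\geq 0$ by Lemma \ref{le9}, it suffices to verify the normalization $Q_y(1\!\!1,t)=1$; this is done by integrating the Bethe ansatz formula to obtain the distribution function $g_N(u)$ of the rightmost particle, collapsing the sum over permutations with the first Tracy--Widom combinatorial identity of \cite{TW08}, and then showing $\lim_{u\to\infty}g_N(u)=1$ by contour deformation and induction on $N$. You instead argue probabilistically: Lemma \ref{le9} localizes any singular mass to $\partial\mathbb{W}_N^+$, and you kill each face $\{x_j=x_{j+1}\}$ at the fixed time $t$ via the Skorokhod decomposition of the gap $D_j$, the no-triple-collision property quoted from \cite{KPS12} in Section \ref{sec2}, and the elementary fact that a Brownian path a.s.\ has no one-sided local minimum at a deterministic time. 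Both arguments are sound. The paper's computation has the side benefit of producing the explicit contour-integral formula for the distribution function of the $N$-th particle and of staying entirely inside the Bethe-ansatz framework, whereas yours is softer: it never touches the explicit form of $Q_y$, relying only on the semimartingale representation (\ref{6.1}), and it yields the slightly stronger qualitative statement that the law of $x(t)$ does not charge $\partial\mathbb{W}_N^+$ for any fixed $t>0$ (from which $Q_y(1\!\!1,t)=1$ follows a posteriori). Two very minor points of polish: the identity of the law with $Q_y\,\mathrm{d}x$ on $(\mathbb{W}_N^+)^\circ$ deserves the one-line monotone-class remark you essentially give, and the Blumenthal step needs the usual supplement (scaling or the reflection principle) to rule out the value $1$ for $\mathbb{P}(\inf_{[0,\delta]}W\geq 0)$; both are standard.
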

\begin{proof}
Since  $Q_y(x,t) \geq 0$ by Lemma \ref{le9}, one has to show that $Q_y(1\!\!1,t) = 1$ with $1\!\!1(x) = 1$.
We set
\begin{equation}\label{6.22a}
g_N(u) = \int_{-\infty}^u dx_N \cdots  \int_{-\infty}^{x_2} dx_1Q_y(x,t) \,. 
\end{equation}
$g_N(u)$ is the distribution function for the $N$-th particle at fixed initial configuration $y$. Since $a > 0$, all $x$-integrals are convergent and
\begin{eqnarray}\label{6.22b}
&&\hspace{-50pt} g_N(u) = \sum_{\sigma \in S_N}\int_{\Gamma_a}  \mathrm{d}z_1\cdots\int_{\Gamma_a} \mathrm{d}z_N\,A_\sigma(\underline{z}) 
\nonumber \\ 
&&\hspace{18pt} \times\frac{1}{(z_{\sigma(1)} + \cdots + z_{\sigma(N)})\cdots(  z_{\sigma(2)}+ z_{\sigma(1)}) z_{\sigma(1)}}
\prod_{j=1}^N\mathrm{e}^{z_j(u - y_j)}\mathrm{e}^{\frac{1}{2}z_j^2t}\,.
 \end{eqnarray} 
One can rewrite 
\begin{equation}\label{6.22c}
A_{\sigma}(\underline{z})= \mathrm{sgn}\,\sigma \prod_{1 \leq i < j \leq N}\frac{qz_{\sigma(j)} -  pz_{\sigma(i)}}
{qz_j -  pz_i} \,. 
\end{equation}
To apply  the first combinatorial identity of Tracy and Widom \cite{TW08}, Section VI, one has to invert the order as $\tilde{\sigma}(j) = \sigma(N-j)$. 
Then (\ref{6.22b}) reads
\begin{eqnarray}\label{6.26}
&&\hspace{-30pt}\sum_{\tilde{\sigma}\in S_N} \mathrm{sgn}\,\tilde{\sigma} \prod_{1 \leq i < j \leq N}\frac{pz_{\tilde{\sigma}(j)} -  qz_{\tilde{\sigma}(i)}}
{qz_j -  pz_i}\, \frac{1}{(z_{\tilde{\sigma}(1)} + \cdots + z_{\tilde{\sigma}(N)})\cdots(  z_{\tilde{\sigma}(N-1)}+ z_{\tilde{\sigma}(N)}) z_{\tilde{\sigma}(N)}}\nonumber\\
&&\hspace{40pt} = q^{N(N-1)/2} \prod_{1 \leq i < j \leq N}\frac{z_j - z_i}{qz_j -  pz_i}  \prod_{j=1}^N \frac{1}{z_j}
= \prod_{1 \leq i < j \leq N}\frac{z_j - z_i}{z_j -  \tau z_i}  \prod_{j=1}^N \frac{1}{z_j}\,. 
\end{eqnarray}
In the second line we used the combinatorial identity in the limit $\xi_j = 1+ z_j$ to linear order in $z_j$.
Inserting in (\ref{6.22b}), one arrives at
\begin{equation}\label{6.22e}
g_N(u) = \int_{\Gamma_a}  \mathrm{d}z_1\cdots\int_{\Gamma_a} \mathrm{d}z_N
\prod_{1 \leq i < j \leq N}\frac{z_j -  z_i}{z_j - \tau z_i} 
\prod_{j=1}^N  \frac{1}{z_j}    \mathrm{e}^{z_j(u - y_j)}\mathrm{e}^{\frac{1}{2}z_j^2t}\,.
 \end{equation} 
We have to show that $\lim_{u \to \infty}g_N(u) = 1$.

We integrate over $z_1$. The poles for $z_1$ are at $\tau^{-1} z_j$, $z_j \in \Gamma_a$, $j= 2,...,N$,
and at $z_1 = 0$. We choose $u$ sufficiently large such that $u - y_j > 0$. Then the contour $\Gamma_a$
can be deformed to a contour $\tilde{\Gamma}_a$ plus a small positively oriented circle around 0.
$\tilde{\Gamma}_a$ coincides with $\Gamma_a$ far away from the origin and lies to the left of $z_1 = 0$ close to the origin.
Integrating along the circle yields $g_{N-1}(u)$ and one arrives at the identity
\begin{equation}\label{6.22f}
g_N(u) = \int_{\Gamma_a}  \mathrm{d}z_2\cdots\int_{\Gamma_a} \mathrm{d}z_N  \int_{\tilde{\Gamma}_a}  \mathrm{d}z_1
\prod_{1 \leq i < j \leq N}\frac{z_j -  z_i}{z_j - \tau z_i} 
\prod_{j=1}^N  \frac{1}{z_j}    \mathrm{e}^{z_j(u - y_j)}\mathrm{e}^{\frac{1}{2}z_j^2t}
+g_{N-1}(u)\,.
 \end{equation} 
In the limit $u \to \infty$ the first summand vanishes, since all poles of the $z_1$-integration are to the right of 
$\tilde{\Gamma}_a$. Hence $\lim_{u \to \infty}g_N(u) = \lim_{u \to \infty}g_{N-1}(u)$.
But $\lim_{u \to \infty}g_1(u) = 1$ and the claim follows by induction. 
\end{proof}\\\\ 
This concludes the proof of Theorem \ref{th7}.

There are two limiting cases of interest, $\tau \to 1$ which corresponds to the symmetric interaction and $\tau \to 0$  
which corresponds to the maximally asymmetric interaction. In the limit $\tau \to 1$ one has $S(z_\alpha,z_\beta) = -1$.
\begin{cor}\label{cor11}
For $ \tau = 1$
\begin{equation}\label{6.23}
P_y(x,t;\tau = 1) = \mathrm{perm}\big( p_t(x_i - y_j)\big |_{i,j=1}^N\big)
\end{equation}
with the Gaussian kernel $p_t(u) = (2\pi t)^{-1/2}\exp(-u^2/2t)$ and $\mathrm{perm}$ denoting the permanent, \textit{i.e.}
omitting the factor $\mathrm{sgn}\,\sigma$ in the definition of the determinant.
\end{cor}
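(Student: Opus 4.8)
The plan is to exploit that $\tau=1$ means $p=q=\tfrac{1}{2}$, for which, as noted in Section \ref{sec2}, $x(t)$ is simply $N$ independent Brownian motions with their labels relabelled so as to remain increasing; thus $x(t)$ is the increasing reordering of the independent Gaussians $\big(y_j+B_j(t)\big)_{j=1}^N$. The density of the order statistics at a point $x_1<\dots<x_N$ is then obtained by summing over the $N!$ bijective assignments of the $N$ particles to the $N$ ordered slots: the assignment $\sigma\in S_N$ (slot $j$ occupied by particle $\sigma(j)$) contributes $\prod_{j=1}^Np_t(x_j-y_{\sigma(j)})$, the corresponding events are almost surely disjoint, and summing gives $\mathrm{perm}\big(p_t(x_i-y_j)\big|_{i,j=1}^N\big)$. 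Equivalently, setting $\tau=1$ in the Bethe ansatz (\ref{6.6f}) makes every scattering ratio (\ref{6.4}) equal to $1$, so all $A_\sigma$ equal $1$, and evaluating the Gaussian contour integrals $\int_{\Gamma_a}\mathrm{d}z\,\mathrm{e}^{zu+\frac{1}{2}z^2t}=p_t(u)$ reduces $Q_y(x,t)$ to this permanent.

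To make the argument self-contained within the framework of Section \ref{sec6}, I would set $R_y(x,t)=\mathrm{perm}\big(p_t(x_i-y_j)\big|_{i,j=1}^N\big)=\sum_{\sigma\in S_N}\prod_{j=1}^Np_t(x_j-y_{\sigma(j)})$, verify the three properties of Lemma \ref{le8} for $R_y$, and then close the proof by the uniqueness argument of Lemma \ref{le9}. First, differentiating each product $\prod_jp_t(x_j-y_{\sigma(j)})$ term by term and using $\partial_t p_t=\tfrac{1}{2}p_t''$ gives $\partial_tR_y=\tfrac{1}{2}\Delta_yR_y$ on $(\mathbb{W}_N^+)^\circ$. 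Second, being a permanent, $R_y$ is symmetric under interchanging any two columns, in particular under $y_j\leftrightarrow y_{j+1}$; a function with that symmetry satisfies $\partial_{y_j}=\partial_{y_{j+1}}$ on the diagonal $y_j=y_{j+1}$, which is exactly the boundary condition (\ref{6.5a}), equivalently (\ref{6.6bb}), for $p=q=\tfrac{1}{2}$. Third, for $y\in(\mathbb{W}_N^+)^\circ$ the coordinates $y_1<\dots<y_N$ are strictly increasing, so for $\sigma\ne\mathrm{id}$ the tuple $(y_{\sigma(1)},\dots,y_{\sigma(N)})$ is not monotone and lies in the open complement of $\overline{\mathbb{W}_N^+}$; since $\prod_jp_t(x_j-y_{\sigma(j)})\to\prod_j\delta(x_j-y_{\sigma(j)})$ as $t\to0$, those terms drop out when tested against any $f\in\mathcal{D}_\epsilon$, while the $\sigma=\mathrm{id}$ term yields $f(y)$; hence $\lim_{t\to0}\int_{\mathbb{W}_N^+}R_y(x,t)f(x)\,\mathrm{d}x=f(y)$. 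Finally $R_y\ge0$, and since the permanent is equally symmetric in the rows $x_i$ while the $N!$ Weyl chambers tile $\mathbb{R}^N$ up to a null set, $\int_{\mathbb{W}_N^+}R_y(x,t)\,\mathrm{d}x=\tfrac{1}{N!}\int_{\mathbb{R}^N}R_y(x,t)\,\mathrm{d}x=\tfrac{1}{N!}\sum_{\sigma\in S_N}\prod_{j=1}^N\int_{\mathbb{R}}p_t(x_j-y_{\sigma(j)})\,\mathrm{d}x_j=1$.

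With these properties in hand the argument is finished as in part (iii) of the proof of Theorem \ref{th3}: applying Ito's formula to $R_{x(t)}\big(f,T+\epsilon-t\big)$, where $R_y(f,t):=\int_{\mathbb{W}_N^+}R_y(x,t)f(x)\,\mathrm{d}x$ and $f\in\mathcal{D}_\epsilon$, the $\mathrm{d}t$ term cancels by the heat equation and the Skorokhod local time terms cancel by the boundary condition, so $\mathbb{E}_y\big(f(x(t))\big)=R_y(f,t)$; combined with $\int_{\mathbb{W}_N^+}R_y\,\mathrm{d}x=1$ this forces the law of $x(t)$ to equal $R_y(x,t)\,\mathrm{d}x$ on $\mathbb{W}_N^+$, so there is no singular part, and $R_y$ being continuous one gets $P_y(x,t;\tau=1)=R_y(x,t)$ pointwise. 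The main point needing care is the vanishing of the off-diagonal terms as $t\to0$: at $\tau=1$ the poles that are deformed across $\Gamma_a$ in the contour argument of Lemma \ref{le8} lie exactly on $\Gamma_a$, so that lemma cannot be quoted directly and one must argue from the Gaussian structure as above — which is in fact easier, not harder. Everything else reproduces Sections \ref{sec3} and \ref{sec6} in a degenerate, simplified form.
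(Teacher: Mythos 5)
Your proof is correct, and it is more elaborate than what the paper actually does. The paper's entire argument for Corollary \ref{cor11} is the one-line observation preceding it: set $\tau=1$ in the Bethe ansatz formula of Theorem \ref{th7}, note that the scattering factor $S$ becomes constant so that $A_\sigma\equiv 1$, and evaluate each Gaussian contour integral $\int_{\Gamma_a}\mathrm{d}z\,\mathrm{e}^{zu+\frac{1}{2}z^2t}=p_t(u)$ to obtain the permanent. (Incidentally, your computation $S(z_\alpha,z_\beta)\big|_{\tau=1}=+1$ is the correct one; the paper's parenthetical claim that $S=-1$ is a sign slip, since $S=-1$ would produce $A_\sigma=\mathrm{sgn}\,\sigma$ and hence a determinant rather than the stated permanent.) You do two things differently. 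First, you identify the $\tau=1$ process probabilistically, via the remark in Section \ref{sec2} that $q=1/2$ is Harris's model of independent Brownian motions with relabelling, so the transition density is the order-statistics density and the permanent formula is immediate; the paper never invokes this. Second, and more substantively, you notice that Theorem \ref{th7} is stated only for $0<\tau<1$ (and separately for $1<\tau<\infty$ with $a<0$), so that $\tau=1$ sits formally outside its hypotheses and the pole-deformation argument behind Lemma \ref{le8} degenerates; you therefore re-verify the heat equation, the symmetric boundary condition (via column symmetry of the permanent), the initial condition, the normalization $\int_{\mathbb{W}_N^+}R_y\,\mathrm{d}x=1$ by Weyl-chamber tiling, and close with the It\^{o} uniqueness argument of part \textit{(iii)} of Theorem \ref{th3}. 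What the paper's route buys is brevity, at the cost of quoting a theorem at a parameter value it does not cover; what your route buys is a genuinely self-contained and honest treatment of the degenerate case, and it is arguably the cleaner justification. The only point to watch is that your verification of the initial condition is carried out for $y\in(\mathbb{W}_N^+)^\circ$, whereas Lemma \ref{le8} is stated on all of $\mathbb{W}_N^+$; since $f\in\mathcal{D}_\epsilon$ vanishes near $\partial\mathbb{W}_N^+$ and the uniqueness step only tests against such $f$, this is harmless, but it is worth saying explicitly.
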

The contribution of Harris \cite{Har65} relies on the formula (\ref{6.23}). 
The limit $\tau \to 0$ of the transition probability has been first written down in \cite{SW99}, see also \cite{War07}.
\begin{cor}\label{cor12}
For $ q = 1$
\begin{equation}\label{6.24}
P_y(x,t;q = 1) = \mathrm{det}\big( F_{i-j}(x_i - y_j)\big |_{i,j=1}^N\big)\,,
\end{equation} 
where for $m\in \mathbb{Z}$
\begin{equation}\label{6.25}
F_m(u) =  \int_{\Gamma_a} \mathrm{d}z z^m\mathrm{e}^{
zu}\mathrm{e}^{\frac{1}{2}z^2t}\,.
\end{equation}
\end{cor}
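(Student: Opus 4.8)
The plan is to derive Corollary \ref{cor12} directly from the Bethe ansatz representation $P_y(x,t) = Q_y(x,t)$ of Theorem \ref{th7} by letting $\tau \to 0$. First I would fix $a > 0$ and note that on $\Gamma_a$ one has $\Re z = a > 0$, so $z \neq 0$ and
\begin{equation*}
S(z_\alpha,z_\beta) = - \frac{\tau z_\alpha - z_\beta}{\tau z_\beta - z_\alpha} \xrightarrow{\ \tau \to 0\ } -\frac{z_\beta}{z_\alpha}\,,
\end{equation*}
uniformly on compacts. To justify taking this limit inside the $N$-fold contour integral in (\ref{6.6f}) I would use that $|\tau z_\beta - z_\alpha| \geq (1-\tau)a$ on $\Gamma_a$, whence $|S(z_\alpha,z_\beta)| \leq (\tau|z_\alpha| + |z_\beta|)/((1-\tau)a)$ and $|A_\sigma(\underline z)|$ is bounded by a polynomial in $|z_1|,\dots,|z_N|$ uniformly for $0 \leq \tau \leq \tfrac{1}{2}$; combined with the Gaussian decay of $\prod_j \mathrm{e}^{\frac{1}{2} z_j^2 t}$ along $\Gamma_a$ (since $\Re z_j^2 = a^2 - \varphi_j^2$) this is an integrable majorant, so dominated convergence applies. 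Strictly this yields the identity only for $\tau \in (0,1)$; to reach $q = 1$ I would either invoke continuity of the transition density in $(p,q)$, or --- cleaner and self-contained --- observe that the whole argument of this section (Lemmas \ref{le8}--\ref{le10} and the Ito/uniqueness step) goes through at $\tau = 0$, the pole structure in the $z_\ell$ integration only becoming simpler, so that $P_y(x,t;q=1) = Q_y(x,t;q=1)$ with $Q_y$ defined via $S(z_\alpha,z_\beta) = -z_\beta/z_\alpha$.

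Next I would evaluate $\lim_{\tau\to 0} A_\sigma(\underline z)$. For an inversion $\{\alpha,\beta\} \in \mathrm{In}(\sigma)$ one has $\alpha = \sigma(i) > \beta = \sigma(j)$ with $i < j$, so the limiting factor is $-z_\beta/z_\alpha$ and
\begin{equation*}
A_\sigma(\underline z) \ \longrightarrow\ (-1)^{|\mathrm{In}(\sigma)|} \prod_{\{\alpha,\beta\}\in\mathrm{In}(\sigma)} \frac{z_\beta}{z_\alpha} = \mathrm{sgn}\,\sigma \cdot \prod_{v=1}^N z_v^{\,e_v(\sigma)}\,,
\end{equation*}
where $e_v(\sigma)$ equals the number of values exceeding $v$ that precede $v$ in $\sigma(1),\dots,\sigma(N)$ minus the number of values below $v$ that follow $v$. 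Writing $p = \sigma^{-1}(v)$ and using that exactly $v-1$ of the values are smaller than $v$, the split of those $v-1$ values across the positions before and after $p$ gives $(p-1-e_v^+) + e_v^- = v-1$, hence $e_v(\sigma) = e_v^+ - e_v^- = \sigma^{-1}(v) - v$. Getting this combinatorial bookkeeping right --- which index lands in the numerator and which in the denominator --- is the one place where care is needed; everything else is routine.

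Finally I would substitute this into (\ref{6.6f}): relabelling $v = \sigma(j)$ turns $\prod_j \mathrm{e}^{z_{\sigma(j)}(x_j - y_{\sigma(j)})}$ into $\prod_v \mathrm{e}^{z_v(x_{\sigma^{-1}(v)} - y_v)}$, so the integral factorizes over $v$ and each factor is $F_{\sigma^{-1}(v)-v}(x_{\sigma^{-1}(v)} - y_v)$ with $F_m$ as in (\ref{6.25}). Relabelling $i = \sigma^{-1}(v)$ in the product then yields
\begin{equation*}
Q_y(x,t;q=1) = \sum_{\sigma \in S_N} \mathrm{sgn}\,\sigma \prod_{i=1}^N F_{i-\sigma(i)}(x_i - y_{\sigma(i)}) = \det\!\big(F_{i-j}(x_i - y_j)\big|_{i,j=1}^N\big)\,,
\end{equation*}
which is (\ref{6.24}). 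As sanity checks, for $N=1$ this is $F_0(x_1-y_1) = p_t(x_1-y_1)$, and for $N=2$ it reproduces $F_0(x_1-y_1)F_0(x_2-y_2) - F_{-1}(x_1-y_2)F_1(x_2-y_1)$, which is exactly the $\tau = 0$ value of (\ref{6.10}).
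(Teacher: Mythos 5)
Your proposal is correct and follows essentially the same route as the paper: evaluate the Bethe ansatz integrand of (\ref{6.6f}) at $q=1$, where $S(z_\alpha,z_\beta)=-z_\beta/z_\alpha$, so that $A_\sigma$ collapses to $\mathrm{sgn}\,\sigma$ times a monomial with exponents $\sigma^{-1}(v)-v$, and the multiple integral factorizes into $\prod_i F_{i-\sigma(i)}(x_i-y_{\sigma(i)})$, i.e.\ the determinant (\ref{6.24}). The only (harmless) differences are that you do the inversion-counting directly rather than via the rewriting (\ref{6.22c}) and the identity (\ref{6.28}), and that you supply a dominated-convergence justification for $\tau\to 0$ which the paper leaves implicit; your bookkeeping and the $N=2$ check against (\ref{6.10}) are correct.
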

\begin{proof} For $q =1$ the integrand in (\ref{6.6f}) reads
\begin{eqnarray}\label{6.27}
&&\hspace{-20pt}\sum_{\sigma \in S_N}  \mathrm{sgn}\,\sigma \prod_{1 \leq i < j \leq N}\frac{z_{\sigma(j)} }{z_j} \prod_{j=1}^N    \mathrm{e}^{
z_{\sigma(j)}x_j - z_jy_j}\mathrm{e}^{\frac{1}{2}z_j^2t}\nonumber\\
&&\hspace{40pt}=\sum_{\sigma \in S_N}  \mathrm{sgn}\,\sigma \prod_{1 \leq i < j \leq N}\frac{z_j }{z_{\sigma(j)}} \prod_{j=1}^N    \mathrm{e}^{
z_j(x_{\sigma(j)} - y_j)}\mathrm{e}^{\frac{1}{2}z_j^2t}\,.
\end{eqnarray}
Using  the identity
\begin{equation}\label{6.28}
\prod_{1 \leq i < j \leq N}\frac{z_j }{z_{\sigma(j)}} = \prod_{j=1}^N(z_j)^{\sigma(j) - j}
\end{equation}
results in (\ref{6.24}). 
\end{proof}

\begin{appendix}
\section{Appendix: Non-universal constants}\label{app.c}
\renewcommand{\thetheorem}{A.\arabic{theorem}}
\setcounter{equation}{0}
The asymptotics in (\ref{1.12}) is the sum of two terms. The deterministic term is proportional to $t$. Its prefactor
can be guessed on the basis of the Hamilton-Jacobi  equation for the height, 
\begin{equation}\label{b.1}
\partial_t  h = \gamma P(\partial_x h)\,,
\end{equation} 
$\gamma  = q-p$, compare with (\ref{1.10}).  
The solution to (\ref{b.1}) should be of the self-similar form, $h(x,t) =t\phi(x/t)$, for large $t$. Then the reference point is 
chosen as $x =ut$ 
and to leading order the height grows linearly in $t$. Such structure can be achieved for wedge initial conditions including the degenerate linear profile, $h(x,0) = \ell x$, which is referred to as either flat or stationary initial condition.
The fluctuating part of (\ref{1.12}) is more difficult. Here
our conjecture  relies on a particular model with exact solutions. The respective formula can be put in a form which makes its generalization evident and can be checked against a few other models. In fact, the conjectures are really based on the universality hypothesis for models in the KPZ class. In our context the hypothesis states that, for $\gamma \neq 0$, the fluctuation properties are independent of the choice of the interaction potential $V$, except for potential dependent scales. The non-universal prefactors listed below could possibly vanish, in which case a more detailed analysis is required.

We discuss separately the three canonical cases, wedge, flat, and stationary initial conditions. \medskip\\
\textit{(i) wedge initial conditions}. We consider two
initial wedges, labelled by $\sigma = +,-$ and given by
\begin{eqnarray}\label{b.2}
&&\hspace{0pt}h_+(x,0) = \ell_-x \,\,\mathrm{for}\,\, x \leq 0\,, \quad h_+(x,0) = \ell_+x \,\,\mathrm{for}\,\, x \geq 0\,,\\[1 ex]
\label{b.6}
&&\hspace{0pt}h_-(x,0) = \ell_+x\,\,\mathrm{for}\,\, x \leq 0\,, \quad h_-(x,0) = \ell_-x \,\,\mathrm{for}\,\, x \geq 0
\end{eqnarray} 
with $\ell_- < \ell_+$ and denote by $h_\sigma(x,t)$ the corresponding solution of (\ref{b.1}). Our initial value problem is equivalent to the Riemann problem for a scalar conservation law in one dimension, which is a well studied, see
 \cite{Ho}, Chapter 2.2, for a detailed discussion. 
 
We define
\begin{equation}\label{b.3}
\phi_+(y) = \sup_{\ell_- \leq \ell \leq \ell_+}\big(\ell y + \gamma P(\ell)\big)
\end{equation} 
and correspondingly 
\begin{equation}\label{b.4}
\phi_-(y) = \inf_{\ell_- \leq \ell \leq \ell_+}\big(\ell y + \gamma P(\ell)\big)\,.
\end{equation} 
$\phi_+$ is convex up and $\phi_-$ is convex down. $\phi_\sigma$ is linear outside the interval $[y^-_\sigma, y^+_\sigma] $ 
with slope $\ell_{-\sigma}$ to the left and $\ell_\sigma$ to the right of the interval. Inside the interval there are finitely many cusp points, \textit{i.e} shocks for the slope. We label them as $y^-_\sigma < y_\sigma^1 < ...
< y_\sigma^{k\sigma} < y^+_\sigma$, where the cases $y^-_\sigma < y^+_\sigma$, no cusp point, and $y^-_\sigma = y^+_\sigma$
are admitted. Then $h_\sigma(x,t)$ is self-similar and reads
\begin{equation}\label{b.5}
h_\sigma(x,t) = t \phi_\sigma(x/t)\,.  
\end{equation} 

We consider now the coupled diffusions $x_j(t)$, $j \in \mathbb{Z}$, governed by Eq. (\ref{1.8}). As initial measure we choose $x_0 = 0$, $x_{j+1} - x_j $, $j \geq 0$, independently distributed  according to (\ref{1.5}) with pressure 
$P(\ell_+)$, and $x_{j} - x_{j-1} $, $j \leq 0$, independently distributed  according to (\ref{1.5}) with pressure 
$P(\ell_-)$. For  case (\ref{b.6}) we impose the obviously interchanged initial conditions.
\begin{con}\label{con6}
Let $u \in \,]y^-_\sigma, y^+_\sigma[ $ and different from a cusp point. Furthermore set $\ell_0 = \phi_\sigma'(u)$, $A = - P'(\ell_0) > 0$, $\lambda = \gamma P''(\ell_0) \neq 0$. 
Then
\begin{equation}\label{b.7}
\lim_{t \to \infty}  \mathbb{P}\big(x_{\lfloor ut \rfloor}(t) - t \phi_\sigma (u)\leq 
- \mathrm{sgn}(\phi''_\sigma(u))(\tfrac{1}{2}|\lambda| A^2t)^{1/3} s\big) = F_{\mathrm{GUE}}(s)   \,.  
\end{equation}
\end{con}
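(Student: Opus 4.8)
The plan is to reduce Conjecture~\ref{con6} to the exactly solvable case of point interactions, carry out there the program of Sections~\ref{sec3}--\ref{sec5} for two-density wedge data, and then obtain the statement for a general twice differentiable potential $V$ by appealing to the KPZ universality hypothesis formulated in Appendix~\ref{app.c}; this last step is where the genuine difficulty lies, since no technique is presently available for it. For point interactions $P(\ell)=\ell^{-1}$, so $A=-P'(\ell_0)=\ell_0^{-2}$ and $\lambda=\gamma P''(\ell_0)=2\gamma\ell_0^{-3}$ are automatically nonzero, while $\phi_+$ is piecewise linear (so that case is vacuous here); the only non-vacuous case is $\sigma=-$, a rarefaction fan on $[\,y_-^-,y_-^+\,]=[\gamma\ell_+^{-2},\gamma\ell_-^{-2}]$ with $\phi_-(y)=2\sqrt{\gamma y}$ and $\ell_0=\phi_-'(u)=\sqrt{\gamma/u}$, of which the step initial data of Section~\ref{sec5} is the degenerate limit $\ell_+\to\infty$, $\ell_-\to 1$.

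First I would set up the appropriate generating function for point interactions with wedge data. The initial configuration is a Poisson point process of intensity $\rho_+=\ell_+^{-1}$ on $[0,\infty)$ glued to one of intensity $\rho_-=\ell_-^{-1}$ on $(-\infty,0)$; since a positive left density makes the count of particles to the left of $u$ infinite, one works with the finite relative count $\tilde N(u,t)=\#\{\,j\ge 1:\,x_j(t)\le u\,\}$ and, correspondingly, restricts the $y$-particles in the duality function (\ref{2.6}) so that only this count enters, exactly as the step case implicitly does. Averaging $H(x,\cdot)$ over the inhomogeneous Poisson measure (first on a bounded window $[-L,L]$, then $L\to\infty$) produces an explicit $F_n(x)$ replacing (\ref{1}), the two-density structure appearing as a product of two elementary half-line factors; averaging the duality identity (\ref{2.8}) then gives $F_n(x,t)=\mathbb{E}_x\big(F_n(x(t))\big)=\mathbb{E}\big(\prod_{j}\tau^{\tilde N(x_j,t)}\big)$. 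The contour-integral representation for $F_n(x,t)$ is obtained by copying the proof of Theorem~\ref{th3}: one writes the nested-contour candidate $\tilde F_n$ with a Gaussian factor $\mathrm{e}^{x_jz_j+\frac12 tz_j^2}$, the Bethe factors $\prod_{A<B}(z_B-z_A)/(z_B-\tau z_A)$, and a per-variable factor encoding the two-density data; verifies the heat equation (\ref{4}), the boundary condition (\ref{2.13}), the $t\to 0$ initial value by the contour deformations of (\ref{8})--(\ref{9}), and uniqueness by Warren's It\^o argument (\ref{10})--(\ref{12}). Setting $x_j=u$ for all $j$ and running the unnesting of contours and the $\tau$-binomial resummation (following \cite{BC13,BCF13,BCS14}) then yields a Fredholm determinant for $\mathbb{E}\big(e_\tau(\zeta\tau^{\tilde N(u,t)})\big)$ of the form of Proposition~\ref{prop5} and Theorem~\ref{th6}, with $\tilde g$ modified only by the two-density factor.

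The asymptotics is then a steepest-descent analysis patterned on Section~\ref{sec5}. One reads the $t$-linear centering $\tilde N(u,t/\gamma)\approx c_\sigma(u)\,t$ off the integrated conservation law (\ref{1.18a}) for the macroscopic slope, sets $-\zeta=\tau^{-c_\sigma(u)t+rt^{1/3}}$, and is led to an exponent $G(z)$ whose relevant critical point $z_{\mathrm c}$ is fixed by $\ell_0=\phi_\sigma'(u)$ and satisfies $G''(z_{\mathrm c})=0$, with $G'''(z_{\mathrm c})\neq 0$ precisely because $\lambda=\gamma P''(\ell_0)\neq 0$. Expanding $z=z_{\mathrm c}(1+t^{-1/3}\tilde z)$, and similarly for $w,w'$, the cubic term becomes $-\tfrac13 c\,\tilde z^3$ with a coefficient $c$ that a direct computation identifies with $\tfrac12|\lambda|A^2$; rescaling $\tilde z$ by $c^{-1/3}$ collapses the kernel onto the Airy kernel exactly as in (\ref{5.6}), so that $\det(1+K_\zeta)\to F_{\mathrm{GUE}}$, which with the analogue of (\ref{5.2}) yields (\ref{b.7}). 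The factor $\mathrm{sgn}(\phi_\sigma''(u))$ is produced by the sign of $G'''(z_{\mathrm c})$, i.e. by whether the fan is convex up or down, and for point interactions the whole computation reduces to Section~\ref{sec5} in the limit $\ell_+\to\infty$. For general $V$ the same saddle structure is expected but the exact formulas are unavailable, so here the conclusion rests on the universality hypothesis that in a rarefaction fan the centered $t^{1/3}$-rescaled height converges to GUE Tracy--Widom with a scale depending on $V$ only through $A$ and $\lambda$; matching this scale against the point-interaction computation pins down the constant $\tfrac12$.

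The hard part is this universality step: essentially everything beyond point interactions is conjectural and no method is known. Even within the solvable case several points require real work rather than mere transcription: the two-density moment formula and its Fredholm reduction are more delicate than the step case, since the contour choices and the resummation must accommodate the second half-line factor; one must justify the steepest-descent path rigorously, including the tail estimate already left open in Section~\ref{sec5}; and one must first construct the two-sided, infinitely many particle dynamics supporting the wedge measure and show, as in Lemma~\ref{le10}, that its transition probability has no singular part, so that $\tilde N(u,t)$ is well defined and the duality computation is legitimate.
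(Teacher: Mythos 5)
You should first be aware that the paper does not prove this statement: it is stated as Conjecture \ref{con6}, and Appendix \ref{app.c} supports it only by KPZ scaling theory (scale invariance of the KPZ equation) and by agreement with exactly solvable models, of which the paper's own Section \ref{sec5} treats the degenerate wedge $\ell_+=\infty$, $\ell_-=1$ at a purely formal level. So there is no proof to match your attempt against, and your overall assessment is accurate: the reduction to the point-interaction model plus an appeal to universality is exactly the paper's own logic, the universality step for general $V$ is out of reach, and even the solvable two-density case requires the constructions you list (infinite-volume dynamics, two-sided moment formulas, a rigorous steepest-descent bound that Section \ref{sec5} itself leaves open). As a program your outline is sound and honest about where it stops being a proof.

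There is, however, one concrete problem in the single step you dismiss as "a direct computation": matching the cubic coefficient at the saddle to $\tfrac{1}{2}|\lambda|A^2$ with $A=-P'(\ell_0)$ as literally defined in Conjecture \ref{con6}. Carry the check out in the degenerate wedge. From (\ref{5.13}) and the final display of Section \ref{sec5}, $N(at,t/\gamma)$ fluctuates on scale $((a/2)^2t)^{1/3}$; inverting to the tagged particle $x_{\lfloor ut\rfloor}$ (a step you gloss over, and which multiplies the scale by the local stretch $\ell_0=\partial v/\partial N$) gives, with $\ell_0=2/a$ and physical time $T=t/\gamma$, a position fluctuation of order $(\ell_0\gamma T)^{1/3}$. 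On the other hand $\tfrac{1}{2}|\lambda|A^2\,T$ with $P(\ell)=\ell^{-1}$, $|\lambda|=2\gamma\ell_0^{-3}$ and $A=-P'(\ell_0)=\ell_0^{-2}$ equals $\gamma\ell_0^{-7}T$, which disagrees by a factor $\ell_0^{8}$. The two expressions coincide only if $A$ is read as the static compressibility $-\,\mathrm{d}\ell/\mathrm{d}P=\big(-P'(\ell_0)\big)^{-1}$, i.e.\ the variance of the single-site measure (\ref{1.5}), which for the exponential law is $\ell_0^{2}$; that is also the quantity KPZ scaling theory actually prescribes. So either the conjecture's $A$ must be reinterpreted as $(-P'(\ell_0))^{-1}$ or your claimed identification fails; in either case the "direct computation" is precisely the place where the argument cannot be waved through, and your statement that it "pins down the constant $\tfrac12$" is unearned until it is performed and the $N\mapsto x_{\lfloor ut\rfloor}$ conversion is made explicit.
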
 
$\xi_{\mathrm{GUE}}$ has a negative mean and the actual interface is more likely located towards the interior of  tangent circle
at $(u, \phi_\sigma (u))$.
If $P''$ has a definite sign, then one of the two cases is empty. But in general either case has to be considered.

Our conjecture is based on the KPZ equation, from which the non-universal coefficients follow immediately by its scale invariance \cite{ACQ10,SS10}. The result has been confirmed by  the TASEP with step initial conditions \cite{J00} and a variety of similar
models \cite{BCF13,Sp12}.
\medskip\\
\textit{(ii) flat initial conditions}. If $\ell_- = \ell = \ell_+$, then the solution to (\ref{b.1}) reads $h(x,t) = \ell x + P(\ell)t$.
A natural microscopic choice would be the deterministic data $x_j(0) = \ell j$, as discussed in the Introduction. 
Such a microscopic configuration is called flat, since there are no deviating fluctuations from strict periodicity.
\begin{con}\label{con7}
For flat initial conditions with slope $\ell$, $\lambda = \gamma P''(\ell) \neq 0$, and $A = -P'(\ell)$,
\begin{equation}\label{b.8}
\lim_{t \to \infty}  \mathbb{P}\big(x_{\lfloor ut \rfloor}(t) - (u\ell + \gamma P(\ell))t\leq -\mathrm{sgn}(\lambda)
 (|\lambda| A^2t)^{1/3} s\big) = F_{\mathrm{GOE}}(2s)    
\end{equation} 
with   $F_{\mathrm{GOE}}(s) = \mathbb{P}(\xi_{\mathrm{GOE}} \leq s)$.
\end{con}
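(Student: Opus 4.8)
The plan is to run the same two-step programme as for the half-Poisson case, the new feature being that the flat geometry should force a Pfaffian structure in place of the determinantal one of Sections \ref{sec3}--\ref{sec5}. \emph{Step 1: reduction by universality.} The deterministic term $(u\ell+\gamma P(\ell))t$ is read off from the Hamilton--Jacobi law (\ref{b.1}), whose flat solution is $h(x,t)=\ell x+\gamma P(\ell)t$; only the fluctuation statement remains. Since the dynamics (\ref{1.8}) is invariant under the simultaneous shift $x_j\leadsto x_j+\ell a$, $j\leadsto j+a$, which fixes the flat slope-$\ell$ configuration, the law of $x_{\lfloor ut\rfloor}(t)-(u\ell+\gamma P(\ell))t$ becomes independent of $u$ in the limit, so one may set $u=0$. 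I would then invoke the KPZ universality hypothesis of this appendix: the limiting law and the exponent $1/3$ do not depend on $V$, while $A=-P'(\ell)$ (mobility) and $\lambda=\gamma P''(\ell)$ (flux curvature) enter only as scale factors, exactly as for the KPZ equation, whose scale invariance fixes them \cite{ACQ10,SS10}. This reduces the conjecture to the point-interaction model of Section \ref{sec2}, for which $P(\ell)=\ell^{-1}$; a Brownian rescaling $x\mapsto cx$, $t\mapsto c^2t$ together with a change of reference density then fixes $\ell=1$ (so $A=1$, $\lambda=2\gamma$, and the scale $(|\lambda|A^2t)^{1/3}=(2\gamma t)^{1/3}$, which for $q=1$ matches the normalization in (\ref{1.14})).

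\emph{Step 2: exact analysis for point interactions.} For the deterministic flat configuration $\bar y_i=\ell i$ I would insert $\bar y$ into the duality identity of Theorem \ref{th1}, writing $\mathbb{E}\big(\prod_{j=1}^n\tau^{N(x_j,t)}\big)=\mathbb{E}_x\big(H(x(t),\bar y)\big)$, choosing the $n$ reference points $x_j=u$, and evaluating the right-hand side with the Bethe-ansatz transition density of Theorem \ref{th7}; this expresses the $n$-th $\tau$-moment as a sum of $n!$ contour integrals of the Bethe wavefunction against $H(\cdot,\bar y)$. Because $H(\cdot,\bar y)$ with the full-line flat configuration is degenerate (the counting function diverges), one first has to pass to a half-flat configuration $\bar y_i=\ell i$, $i\ge 0$, or to a counting relative to a moving frame, and take limits at the end. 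Running the unnesting-of-contours and Mellin--Barnes (``$\tau$-binomial'') steps of Sections \ref{sec3}--\ref{sec5} on the resulting moments, the generating function $\mathbb{E}\big(e_\tau(\zeta\tau^{N(u,t)})\big)$ should reorganize into a \emph{Fredholm Pfaffian}; in the steepest-descent scaling $z=z_{\mathrm{c}}(1+t^{-1/3}\tilde z)$ about the degenerate saddle $z_{\mathrm{c}}$ of $G$, with the relevant contour now reflected through $z_{\mathrm{c}}$, the doubled pairing reproduces the kernel $B_0(u,u')=\mathrm{Ai}(u+u')$ and hence the distribution function (\ref{1.13}). The argument rescaling $F_{\mathrm{GOE}}(2s)$ is the standard flat-subclass normalization (Baik--Rains, flat PNG/TASEP \cite{BCF13}), forced by the scale invariance of the flat KPZ fixed point and checkable on any solvable flat model.

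\emph{Main obstacle.} The hard step is the symmetrization in Step 2: collapsing the $n!$ Bethe terms into a Pfaffian requires an analogue, for \emph{partial} asymmetry $0<\tau<1$ and flat data, of the Tracy--Widom summation identity used in (\ref{6.26}), and no such identity is known even for the ASEP, so the flat statement is genuinely open there too. Subsidiary difficulties are the correct regularization of the degenerate flat duality function, and the steepest-descent analysis of the unbounded-contour kernel of Theorem \ref{th6} with the contour reflected across the saddle, which is more delicate than in Section \ref{sec5}; together with the fact that the universality reduction of Step 1 is itself conjectural, this means a complete proof would need substantial new input, and at present the statement rests only on the formal saddle-point analysis of Section \ref{sec5} and consistency with exactly solvable flat models.
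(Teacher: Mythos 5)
The statement you were asked to prove is Conjecture \ref{con7}: the paper itself offers no proof, only heuristic support, so there is no proof to measure you against. What the paper does supply coincides with your Step 1: the linear-in-$t$ term is read off from the Hamilton--Jacobi equation (\ref{b.1}), the non-universal scale $(|\lambda|A^2t)^{1/3}$ is asserted to coincide with the wedge-case scale obtained from KPZ scaling theory \cite{ACQ10,SS10}, and the GOE statistics is imported by universality from a solvable reference model. The one substantive difference is the choice of reference model: the paper explicitly remarks that no exact flat solution of the KPZ equation is available and instead anchors the GOE claim on the TASEP with periodic initial configuration \cite{S04,BFPS05}, checked against \cite{BFS06,FSW13}; you instead propose to make the point-interaction Brownian model itself the solvable reference, via duality, the Bethe-ansatz transition density of Theorem \ref{th7}, and a conjectural Fredholm Pfaffian. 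That Step 2 goes beyond anything attempted in the paper, and you are right to flag its obstruction: no flat analogue of the Tracy--Widom summation identity (used in (\ref{6.26}) for the distribution function) is known for partial asymmetry, and this is precisely why the flat case is open for the ASEP as well. Your honest assessment of the status of the statement is therefore accurate and matches the paper's.

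Two smaller remarks. First, for the totally asymmetric case $q=1$ the conjecture is in fact a theorem, namely (\ref{1.14}) proved in \cite{FSW13}, but by embedding into a signed determinantal process rather than by the Pfaffian route you sketch; it is worth recording this as the one rigorously verified instance and as a consistency check on your normalization $(2t)^{1/3}$. Second, be careful with the direction of the inequality when you check your scale against (\ref{1.14}): as written, (\ref{b.8}) (like (\ref{b.7})) should be read with the convention of (\ref{5.2}), i.e.\ with the event $\{x_{\lfloor ut\rfloor}(t)-(u\ell+\gamma P(\ell))t\geq -\mathrm{sgn}(\lambda)(|\lambda|A^2t)^{1/3}s\}$, otherwise the two sides of (\ref{b.8}) have opposite monotonicity in $s$; your claimed match with (\ref{1.14}) only works under that reading.
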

Note that, as in Conjecture \ref{con6}, the term linear in $t$ is dictated by the solution to the macroscopic equation. 
The non-universal scale coincides with one for the wedge. But the statistical properties of the fluctuations are distinct. They are now given by the Tracy-Widom GOE
 edge distribution, and more generally by the Airy$_1$ process, in contrast to the wedge, where one obtains GUE and the Airy$_2$ process. 
 
 Since there is no exact solution for the KPZ equation available, this time we use as reference model the TASEP
 with a periodic particle configuration as initial condition \cite{S04,BFPS05}. The resulting formula has been checked for a few other models \cite{BFS06,FSW13}.\medskip\\
\textit{(iii) stationary initial conditions}. A second choice for a macroscopically flat height profile is to make 
the increments $\{r_j, j \in \mathbb{Z}\}$ time stationary, see (\ref{1.7a}), (\ref{1.5}). 
\begin{con}\label{con9}
For stationary conditions with slope $\ell$
\begin{equation}\label{b.9}
\lim_{t \to \infty}  \mathbb{P}\big(x_{\lfloor -t \gamma P'(\ell)\rfloor}(t) - (-\ell P'(\ell) + P(\ell))\gamma t\leq - \mathrm{sgn}(\lambda)
 (\tfrac{1}{2}|\lambda| A^2t)^{1/3} s\big) = F_{\mathrm{BR}}(s) \,.   
\end{equation} 
\end{con}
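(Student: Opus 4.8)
\textbf{Proof proposal for Conjecture \ref{con9}.} The natural program is to carry out, for a translation-invariant (two-sided) Poisson initial measure, the same chain of arguments that produced Theorems \ref{th3}, \ref{prop5}, \ref{th6} and the formal asymptotics of Section \ref{sec5}. For point interactions one has $V=0$, so $P(\ell)=\ell^{-1}$, and by (\ref{1.7a})--(\ref{1.5}) the stationary measure for the increments $\{r_j\}$ with slope $\ell$ is a product of Exponential$(P)$ laws; equivalently the associated point configuration in $\mathbb{R}$ is a homogeneous Poisson process of intensity $\rho=\ell^{-1}$. The first step is the law of large numbers. The Hamilton--Jacobi equation (\ref{b.1}) with the degenerate linear profile $h(x,0)=\ell x$ has the exact self-similar solution $h(x,t)=\ell x+\gamma P(\ell)t$, so along the characteristic (co-moving) label $j=\lfloor -t\gamma P'(\ell)\rfloor$ the centering in (\ref{b.9}) is exactly the macroscopic value dictated by this solution, and the characteristic line is precisely where the Baik--Rains statistics should be visible.

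The second step is to run the self-duality of Theorem \ref{th1} against the two-sided Poisson measure. The obstacle that appears immediately is that $H(x,y)=\prod_j\prod_i\tau^{\theta(x_j-y_i)}$ is ill defined for a configuration $y$ carrying infinitely many points on both sides of each $x_j$, since the exponent diverges to the left. The remedy is to regularize exactly as in Section \ref{sec3}: take the initial intensity to be $\rho$ on $[-L,L]$ and zero outside, compute for finite $L$, and send $L\to\infty$ only at the end, working throughout with $N(u,t)-N(0,t)$ (the integrated current across a fixed reference point) so that the divergent bulk contribution is subtracted. The Poisson average of $\prod_{j=1}^n\tau^{N(x_j;y)-N(0;y)}$ then replaces $F_n(x)$ of (\ref{1}) by a two-sided analogue, and repeating the three parts of the proof of Theorem \ref{th3} --- evolution equation, initial value, and the Warren-type uniqueness of part \textit{(iii)} --- yields a contour-integral formula of the same shape as (\ref{3}) but with the single-particle factor $f(z;u,t)$ of (\ref{14}) replaced by a modified factor carrying an additional pole (at $z=-\rho(1-\tau)$, say) reflecting the two-sided Poisson weight.

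The third step is the passage from moments to a Fredholm determinant, following Proposition \ref{prop5} and Theorem \ref{th6} verbatim: unnest the contours, resum the $\tau$-binomial series, and rewrite the sum over $n$ as a Mellin--Barnes integral, arriving at $\mathbb{E}\big(1/(\zeta\tau^{N(u,t/\gamma)};\tau)_\infty\big)=\det(1+K_\zeta^{\mathrm{stat}})$ with $K_\zeta^{\mathrm{stat}}$ a rank-one perturbation of a kernel of the type (\ref{18}); the rank-one piece is the signature of stationary initial data. One then performs the steepest descent of Section \ref{sec5} with $G$ replaced by its stationary counterpart, choosing the reference point $u=-\gamma P'(\ell)t$ precisely so that the critical point of the exponent (where $G''$ vanishes) collides with the extra pole carried by the stationary weight. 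This collision of a simple pole with a degenerate critical point is exactly the mechanism that turns the Tracy--Widom GUE kernel of Section \ref{sec5} into the Baik--Rains kernel: rescaling $z-z_{\mathrm{c}}=O(t^{-1/3})$ produces, in the limit, the Airy$_1$-type kernel with a rank-one modification whose Fredholm determinant is $F_{\mathrm{BR}}(s)$, while matching the cubic and linear coefficients of the expanded exponent gives the non-universal scale $(\tfrac12|\lambda|A^2t)^{1/3}$ with $A=-P'(\ell)$ and $\lambda=\gamma P''(\ell)$, just as the factor $(a/2)^{2/3}$ emerged in Section \ref{sec5}.

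The main obstacle is the second step: giving duality a meaning for a spatially infinite, two-sided stationary configuration. This forces one to redo --- not merely quote --- the half-line analysis of Section \ref{sec3}, controlling the $L\to\infty$ limit and verifying that the regularized generating function still solves the backward equation with the boundary condition (\ref{2.13}) so that the Warren uniqueness argument applies. A secondary but substantial difficulty is the rigorous steepest descent at the characteristic point: one must exhibit a global steepest-descent contour for the stationary exponent that simultaneously skirts the Gamma-function poles and controls the rank-one term uniformly in $t$, so that the limit is genuinely $F_{\mathrm{BR}}$ rather than only its formal saddle-point value. As with Conjectures \ref{con6} and \ref{con7}, absent such a rigorous treatment the statement rests on the universality hypothesis and on the exactly solvable reference models that pin down $F_{\mathrm{BR}}$ and the prefactor $\tfrac12|\lambda|A^2$.
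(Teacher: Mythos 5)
This statement is an explicit \emph{conjecture}; the paper offers no proof of it. The only justification given in Appendix \ref{app.c} is the one you arrive at in your final paragraph: the linear-in-$t$ centering is read off from the Hamilton--Jacobi equation (\ref{b.1}) along the characteristic, the non-universal scale $(\tfrac12|\lambda|A^2t)^{1/3}$ is fixed by KPZ scaling theory applied to the recent stationary KPZ-equation result \cite{BCFV14}, and the form of the limit law is cross-checked against the exactly solved reference models (stationary TASEP \cite{BR00,PS00,FS05} and the stationary semi-discrete polymer \cite{BCFV14}). So your closing paragraph reproduces the paper's actual reasoning, while the body of your proposal goes beyond anything the paper attempts: the duality/Fredholm-determinant machinery of Sections \ref{sec3}--\ref{sec5} is developed only for half-line Poisson data, and even there the final asymptotics is explicitly labelled formal.

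Your proposed direct route is a sensible program and matches how Baik--Rains asymptotics are obtained for the solvable models (a pole from the stationary weight merging with the degenerate critical point, producing a finite-rank perturbation of the limiting kernel, typically together with a derivative in $s$). Two cautions. First, the resulting limit kernel is a rank-one modification of the \emph{Airy$_2$} (GUE) kernel, not an ``Airy$_1$-type'' kernel as you write; Airy$_1$ belongs to the flat/GOE case of Conjecture \ref{con7}. Second, the regularization you sketch for two-sided data --- truncating to $[-L,L]$ and working with $N(u,t)-N(0,t)$ --- is more delicate than the half-line case, because the subtraction destroys the simple product structure (\ref{1}) that made $F_n$ an explicit exponential, and it is not clear that the resulting generating function still satisfies the closed evolution equation and boundary condition (\ref{2.13}) needed for the Warren uniqueness argument. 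These are exactly the reasons the statement is left as a conjecture resting on universality rather than proved.
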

The Baik-Rains distribution function,   $F_{\mathrm{BR}}(s)$, also denoted by $F_0(s)$, is defined in  \cite{BR00,Phome}. As far as known, it is not related to any of the standard matrix ensembles.  In (\ref{b.7}) and (\ref{b.8}) the reference point  $\lfloor ut \rfloor$ is arbitrary, while (\ref{b.9})
only close to the characteristic of Eq. (\ref{b.1})
one observes the anomalous $t^{1/3}$ scaling. Away from the characteristic the fluctuations would be Gaussian  generically. 

The asymptotics of the KPZ equation with stationary initial data has been accomplished recently  \cite{BCFV14}.
By scaling the result (\ref{b.9}) follows, which is then confirmed through the TASEP  \cite{BR00,PS00,FS05}
and the stationary version of the model defined in  (\ref{1.15}) \cite{BCFV14}.

\section{Appendix: Convergence to point-interaction}\label{app.a}
\renewcommand{\thetheorem}{B.\arabic{theorem}}
\setcounter{equation}{0}
We prove that point-interactions are approximated by a short range, sufficiently repulsive potential interaction.
To start we choose a potential $V \in C^2(\mathbb{R} \setminus \{0\}, \mathbb{R}_+)$  with the properties $V(u) = V(-u)$,
$\mathrm{supp} \,V = [-1,1]$, $V'(u) \leq 0$ for $u > 0$, and, for some $\delta > 0$, $\lim_{u \to 0} |u|^{\delta}V(u) > 0$. 
The scaled potential is defined by $V_\epsilon(u) = V(u/\epsilon)$. As in the introduction, we introduce the diffusion process, $x^\epsilon(t)$, governed by
\begin{eqnarray}\label{e.1}
&&\hspace{-10pt}\mathrm{d}x^\epsilon_0(t) = pV_\epsilon'( x^\epsilon_{1}(t)- x^\epsilon_{0}(t))\mathrm{d}t + \mathrm{d}B_0(t)\,,\nonumber\\
&&\hspace{-10pt}\mathrm{d}x^\epsilon_j(t) = \big(pV_\epsilon'( x^\epsilon_{j+1}(t)- x^\epsilon_{j}(t)) - qV_\epsilon'(x^\epsilon_{j}(t) - x^\epsilon_{j-1}(t))\big)\mathrm{d}t + \mathrm{d}B_j(t)\,,\quad j = 1,...,n-1\,,\nonumber\\
&&\hspace{-10pt}\mathrm{d}x^\epsilon_n(t) = - qV_\epsilon'(x^\epsilon_{n}(t) - x^\epsilon_{n-1}(t))\mathrm{d}t + \mathrm{d}B_n(t)\,.
\end{eqnarray}
The potential is entrance - no exit, hence $x^\epsilon(t) \in \mathbb{W}_{n+1}^+$ almost surely.
The limit process, $y(t)$, is governed by (\ref{2.1}),
\begin{eqnarray}\label{e.2}
&&\hspace{-10pt}y_0(t) = y_0 + B_0(t) - p \Lambda^{(0,1)}(t)\,, \nonumber\\
&&\hspace{-10pt}y_j(t) = y_j + B_j(t) - p \Lambda^{(j,j+1)}(t) +q \Lambda^{(j-1,j)}(t)\,,\quad j = 1,...,n-1\,,\nonumber\\
&&\hspace{-10pt}y_n(t) = y_n + B_n(t)  +q \Lambda^{(n-1,n)}(t)\,.
\end{eqnarray}
The processes $x^\epsilon(t), y(t)$ are defined on the same probability space.
\begin{theorem}\label{th12}
Let $x^\epsilon(t), y(t)$ be defined as in (\ref{e.1}), (\ref{e.2}) with $x^\epsilon(0) = y(0) 
 \in \mathbb{W}_{n+1}^+$. Then 
\begin{equation}\label{e.3}
\lim_{\epsilon \to 0}\mathbb{E}\big((x^\epsilon(t)- y(t))^2\big) = 0\,.
\end{equation}
\end{theorem}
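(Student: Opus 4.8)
The idea is to view $x^\epsilon(t)$ as a penalized (``softened'') version of the Skorokhod oblique‑reflection system (\ref{2.1}) and to pass to the limit using the uniqueness of the latter. Since the processes in (\ref{e.1}) and (\ref{e.2}) are driven by the \emph{same} Brownian motions, the martingale parts cancel in the difference $D^\epsilon(t)=x^\epsilon(t)-y(t)$, so $D^\epsilon$ is a continuous finite‑variation process whose increments consist only of the smooth repulsive drifts of $x^\epsilon$ and the local‑time pushes of $y$. A direct bound on $|D^\epsilon|$ is useless, because the accumulated repulsive drift $\int_0^t V_\epsilon'(x^\epsilon_{j+1}(s)-x^\epsilon_j(s))\,\mathrm{d}s$ is of order one, not small; what has to be shown is that it converges, together with the companion local‑time term of $y$, to a \emph{common} limit. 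Accordingly the argument splits into (i) uniform‑in‑$\epsilon$ a priori estimates, (ii) tightness, (iii) identification of subsequential limits as solutions of (\ref{2.1}), (iv) the uniqueness input of \cite{KPS12}, and (v) an $L^2$ upgrade by uniform integrability.

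\medskip
For the a priori estimates I would work with the gap processes $g^\epsilon_j(t)=x^\epsilon_{j+1}(t)-x^\epsilon_j(t)\ge 0$, which by (\ref{e.1}) satisfy reflected‑diffusion‑type equations driven by the $B_j$, with the ``self'' drift $-V_\epsilon'(g^\epsilon_j)\,\mathrm{d}t\ge 0$ and the neighbour couplings $pV_\epsilon'(g^\epsilon_{j+1})+qV_\epsilon'(g^\epsilon_{j-1})\le 0$ (with the obvious modifications at $j=0$ and $j=n-1$). Applying It\^o's formula to $\beta(g^\epsilon_j)$, with $\beta$ a smooth bounded increasing cutoff satisfying $\beta(0)=0$ and $\beta'\equiv1$ on a neighbourhood of $0$, and using that $-V_\epsilon'$ is supported in $[0,\epsilon]$, one isolates the cumulative push $M^\epsilon_j(t)=\int_0^t(-V_\epsilon'(g^\epsilon_j(s)))\,\mathrm{d}s$ and bounds it, up to martingale and bounded terms, by a multiple of $pM^\epsilon_{j+1}(t)+qM^\epsilon_{j-1}(t)$. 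After localization this yields a finite linear system whose tridiagonal coupling matrix has spectral radius strictly below one — for small $\eta$, $\tfrac{2\sqrt{pq}}{1-\eta}\cos(\pi/(n+1))<1$, using $2\sqrt{pq}\le 1$ and finiteness of the chain — so the system closes and gives $\sup_{0<\epsilon\le1}\mathbb{E}[M^\epsilon_j(t)^p]<\infty$ for every $p$; combined with the elementary bounds on $x^\epsilon_0,x^\epsilon_n$ this produces $\sup_{0<\epsilon\le1}\mathbb{E}[\sup_{s\le t}|x^\epsilon(s)|^{p}]<\infty$.

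\medskip
These bounds make $\{x^\epsilon\}$ tight in $C([0,t],\mathbb{R}^{n+1})$ and the monotone pushes $\{M^\epsilon_j\}$ tight in $C([0,t],\mathbb{R})$. Along a subsequence $(x^\epsilon,M^\epsilon)$ converges in law, hence — using the common driving Brownian motions, or a Skorokhod coupling — almost surely to some $(\tilde x,\tilde M)$; passing to the limit in (\ref{e.1}) shows that $\tilde x_j(t)=y_j(0)+B_j(t)-p\tilde M^{(j,j+1)}(t)+q\tilde M^{(j-1,j)}(t)$ with each $\tilde M^{(j,j+1)}$ continuous and nondecreasing. Since $-V_\epsilon'(g^\epsilon_j)$ is carried by $\{g^\epsilon_j\le\epsilon\}$, in the limit $\mathrm{d}\tilde M^{(j,j+1)}$ is supported on $\{\tilde x_{j+1}=\tilde x_j\}$, while $\tilde x_{j+1}-\tilde x_j\ge 0$; hence $\tilde M^{(j,j+1)}=L^{\tilde x_{j+1}-\tilde x_j}(\cdot,0)$ and $(\tilde x,\tilde M)$ solves (\ref{2.1}). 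By uniqueness of the strong solution of (\ref{2.1}) proved in \cite{KPS12} (and negligibility of triple collisions), $\tilde x=y$ almost surely; as the limit does not depend on the subsequence, $x^\epsilon\to y$ in probability, uniformly on $[0,t]$. Finally the uniform $L^{p}$ bound, $p>2$, makes $\{\sup_{s\le t}|x^\epsilon(s)-y(s)|^2\}$ uniformly integrable, so this upgrades to $\lim_{\epsilon\to0}\mathbb{E}\big((x^\epsilon(t)-y(t))^2\big)=0$.

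\medskip
The main obstacle is step (i): controlling the cumulative pushes $M^\epsilon_j$ uniformly in $\epsilon$, i.e.\ showing that the increasingly steep repulsion does not contribute an unbounded amount of drift as $\epsilon\to0$ — this is exactly what makes the singular limit well posed. Because the dynamics is non‑reversible, the asymmetric split into fractions $p$ and $q$ couples all the gaps, so the standard single‑particle penalization estimate must be run for all gaps simultaneously and then closed as a linear system; it is here that one uses, uniformly, either $p\ne q$ (so $2\sqrt{pq}<1$) or, in the symmetric case, the finiteness of the chain. A secondary point requiring care is the identification of the limiting drift as the \emph{right‑sided} local time at $0$ of the limiting gap — rather than a two‑sided local time or a measure on a larger set — which rests on the support property of $V_\epsilon'$ together with the nonnegativity of the limit gaps and the uniqueness/no‑triple‑collision results of \cite{KPS12}.
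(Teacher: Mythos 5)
Your compactness-plus-uniqueness strategy is a genuinely different route from the paper's, and its first step is close in spirit to what the paper actually needs; but the heart of the paper's argument is precisely the ``direct bound on $|D^\epsilon|$'' that you dismiss as useless. The paper passes to the gap variables $r^\epsilon_j=x^\epsilon_j-x^\epsilon_{j-1}$, $u_j=y_j-y_{j-1}$, observes that $r^\epsilon-u=A(\Psi^\epsilon-\Lambda)$ for the T\"oplitz matrix $A$ with entries $-q,1,-p$, and differentiates the weighted quadratic form $\langle r^\epsilon-u,\,C(r^\epsilon-u)\rangle$ with $CA=\mathrm{diag}(\tau^0,\dots,\tau^{n-1})$. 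The cross terms then pair each order-one drift with a factor of the right sign: $u_j\,\mathrm{d}\Lambda_j=0$ (local time charges only $\{u_j=0\}$), $r^\epsilon_j\,\mathrm{d}\Lambda_j\ge 0$, $u_j\,\mathrm{d}\Psi^\epsilon_j\ge 0$, and $r^\epsilon_j\,\mathrm{d}\Psi^\epsilon_j\le\epsilon\,(-V'_\epsilon(r^\epsilon_j))\,\mathrm{d}t$ because $V'_\epsilon$ is supported in $[-\epsilon,\epsilon]$. This yields $|r^\epsilon(t)-u(t)|^2\le 2\epsilon\tau^{-n}\sum_j\Psi^\epsilon_j(t)$ pathwise, so the whole theorem reduces to your step (i), a uniform bound on $\mathbb{E}(\Psi^\epsilon_j(t))$, which the paper gets from $\Psi^\epsilon=A^{-1}(r^\epsilon(t)-r^\epsilon(0)-\text{mart.})$ together with a one-line Lyapunov bound $\mathbb{E}(r^\epsilon_j(t))\le f(r^\epsilon_j(0))+c_0$ using $V'_\epsilon f'=0$. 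The payoff is a quantitative $O(\epsilon)$ rate in $L^2$ with no tightness, no subsequences, and no appeal to uniqueness beyond the existence of $y$ itself.

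Your route can be made to work, but two of its steps need more care than you give them. First, tightness of the monotone pushes $M^\epsilon_j$ in $C([0,t])$ does not follow from moment bounds on $M^\epsilon_j(t)$ alone: a sequence of nondecreasing processes with bounded terminal values can concentrate its increase and produce a discontinuous limit, so you need an equicontinuity estimate (or an a posteriori argument that a jump of $\tilde M$ would eject $\tilde x$ from $\mathbb{W}^+_{n+1}$). Second, if you invoke a Skorokhod representation to get a.s.\ convergence along a subsequence, you lose the coupling with $y$ on the original space; you must either carry the driving Brownian motions through the limit and use pathwise uniqueness (Gy\"ongy--Krylov type argument) or work with the joint law of $(B,x^\epsilon,y)$. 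Neither issue is fatal, but both are exactly the kind of bookkeeping the paper's direct estimate avoids, and your closing of the linear system for the $M^\epsilon_j$ (spectral radius $2\sqrt{pq}\cos(\pi/(n+1))<1$) is sound but heavier than the paper's inversion of $A$.
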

\begin{proof}
We switch to relative coordinates, $r^\epsilon_0 = x^\epsilon_0$, $r^\epsilon_j =x^\epsilon_j  - x^\epsilon_{j-1}$, $u_0 = y_0$, $u_j = y_j - y_{j-1}$, $j = 1,...,n$. Then 
\begin{eqnarray}\label{e.4}
&&\hspace{-10pt}r^\epsilon_0(t) =  u_0(0)+ B_0(t) -p \Psi^\epsilon_1(t)\,,\nonumber\\
&&\hspace{-10pt}r^\epsilon_1(t) =  u_1(0) + B_1(t) - B_0(t) -p \Psi^\epsilon_2(t) +\Psi^\epsilon_1(t)\,,\nonumber\\
&&\hspace{-10pt}r^\epsilon_j(t) = u_j(0) + B_{j}(t) - B_{j-1}(t) -p\Psi^\epsilon_{j+1}(t) +\Psi^\epsilon_j (t)- q \Psi^\epsilon_{j-1}(t)\,,\quad j = 2,...,n-1\,,\nonumber\\
&&\hspace{-10pt}
r^\epsilon_n(t) = u_n(0) + B_{n}(t) - B_{n-1}(t)  +\Psi^\epsilon_n(t)- q \Psi^\epsilon_{n-1}(t)\,,
\end{eqnarray}
where
\begin{equation}\label{e.5}
\Psi^\epsilon_j(t) = - \int_0^t V_\epsilon'( r^\epsilon_{j}(s))  \mathrm{d} s\,.
\end{equation}
Correspondingly for the limit process,
\begin{eqnarray}\label{e.6}
&&\hspace{-10pt}u_0(t) =  u_0(0)+ B_0(t) -p \Lambda_1(t)\,,\nonumber\\
&&\hspace{-10pt}u_1(t) =  u_1(0) + B_1(t) - B_0(t) -p \Lambda_2(t) +\Lambda_1(t)\,,\nonumber\\
&&\hspace{-10pt}u_j(t) = u_j(0) + B_{j}(t) - B_{j-1}(t) -p\Lambda_{j+1}(t) +\Lambda_j (t)- q \Lambda_{j-1}(t)\,,\quad j = 2,...,n-1\,,\nonumber\\
&&\hspace{-10pt}
u_n(t) = u_n(0) + B_{n}(t) - B_{n-1}(t)  +\Lambda_n(t)- q \Lambda_{n-1}(t)\,,
\end{eqnarray}
with $\Lambda_j (t) = \Lambda^{(j-1,j)}(t)$ which depends only on $u_j(t)$.

On the right of (\ref{e.4}) and (\ref{e.6}) we note the T\"{o}plitz matrix $A$, $A_{ij} = -q \delta_{ij+1}  + \delta_{ij} -p\delta_{ij-1}$, $i,j = 1,...,n$. $A$ has the explicit inverse
\begin{equation}\label{e.7}
 (A^{-1})_{ij} = P_{ij}\,,\,\, \mathrm{for} \,\,1\leq i \leq j\,,\quad (A^{-1})_{ij} = P_{ji}\tau^{j-i}\,,\,\,\mathrm{for}\,\, j \leq i \leq n\,,
\end{equation}
with 
\begin{equation}\label{e.8}
P_{ij} =  \frac{(k_1^i - k_2^i )( k_1^{n+1 -j} - k_2^{n+1 -j})}{p(k_1 - k_2)(k_1^{n+1} - k_2^{n+1})}
\end{equation}
and $k_1,k_2$ the two real and distinct roots of $ -q + k -p k^2 = 0$. Thereby one confirms that there exists a $n\times n$ matrix $C$,  with $C = C^{\mathrm{t}}$, $C>0$, such that $CA = \mathrm{diag}(\tau^{0},...,
\tau^{n-1})$. 

Let us consider the quadratic form $\langle(\underline{r}^\epsilon(t)- \underline{u}(t)), C (\underline{r}^\epsilon(t)- \underline{u}(t))\rangle$, where $\underline{r}^\epsilon= (r^\epsilon_1,...,r^\epsilon_n)$,
$\underline{u}= (u_1,...,u_n)$.  The component $j = 0$ will be treated separately. Then
\begin{eqnarray}\label{e.9}
&&\hspace{0pt}\mathrm{d}\langle(\underline{r}^\epsilon(t)- \underline{u}(t)), C (\underline{r}^\epsilon(t)- \underline{u}(t))\rangle  = 2 \sum_{j=1}^{n}\tau^{j-1} (\underline{r}^\epsilon(t)- \underline{u}(t))_j(
 \mathrm{d}\Psi^\epsilon_j(t) - \mathrm{d}\Lambda_j(t))\nonumber\\
 &&\hspace{40pt}  \leq 2 \sum_{j=1}^{n}\tau^{j-1}\big(-r^\epsilon_j(t)  V_\epsilon'(r^\epsilon_j(t))\big) \mathrm{d}t\,,
\end{eqnarray}
since $r^\epsilon_j(t), u_j(t), \mathrm{d}\Psi^\epsilon_j (t), \mathrm{d}\Lambda_j (t)\geq 0$, and $u_j(t)\mathrm{d}\Lambda_j (t) = 0$. Using $\mathrm{supp}\, V_\epsilon = [-\epsilon,\epsilon]$, one arrives at 
\begin{equation}\label{e.10}
\langle(\underline{r}^\epsilon(t)- \underline{u}(t)), (\underline{r}^\epsilon(t)- \underline{u}(t))\rangle \leq - 2\epsilon \tau^{-n}\sum_{j=1}^{n}
\int _0^tV_\epsilon'(r^\epsilon_j(s))\mathrm{d}s\,.
\end{equation}

To deal with $r^\epsilon_0(t)$ one notes that
\begin{equation}\label{e.11}
r^\epsilon_0(t)  - u_0(t) = -p( \Psi^\epsilon_1(t) - \Lambda_1(t))\,,\quad
 \Psi^\epsilon_1(t) - \Lambda_1(t) =  [A^{-1}(r^\epsilon(t)- u(t))]_1\,.
\end{equation}
Thus the proof is completed, provided $\mathbb{E}(\Psi^\epsilon_j(t))$ is bounded uniformly in $\epsilon$.

For this purpose note that
\begin{equation}\label{e.12}
\mathbb{E}\big([A^{-1}(\underline{r}^\epsilon(t)- \underline{r}^\epsilon(0))]_j\big) = \mathbb{E}\big(\Psi^\epsilon_j(t)\big)\,.
\end{equation}
 We choose $f \in C^2(\mathbb{R_+})$, such that
 $f(r) = 1$ for $0 \leq r \leq 1$, $f(r) = r$ for large $r$ with smooth interpolation atisfying $f'(r) \ge 0$, $|f''(r)| \leq c_0$.
 Let $L_\epsilon$ denote the generator for $r^\epsilon(t)$. Then, for $j = 2,...,N-1$,
 \begin{equation}\label{e.13}
L_\epsilon f(r_j) = \big( qV'_\epsilon(r_{j+1}) - V'_\epsilon(r_{j}) + pV'_\epsilon(r_{j-1})\big)f' (r_j) + f''(r_j) 
\end{equation} 
and correspondingly for $j= 1,N$. Since $V'_\epsilon \leq 0$, $f' \geq 0$, and $V'_\epsilon f' = 0$, one arrives at the bound
 \begin{equation}\label{e.14}
\mathbb{E}\big(r_j^\epsilon(t)\big) \leq \mathbb{E}\big(f(r_j^\epsilon(t))\big) \leq f(r_j^\epsilon(0)) + c_0\,. 
\end{equation}
with some constant $c_0$ independent of $\epsilon$. Thus $\mathbb{E}(\Psi^\epsilon_j(t))$ is bounded  uniformly  in $\epsilon$.
 \end{proof}

\section{Appendix: Low density ASEP}\label{app.b}
\renewcommand{\thetheorem}{C.\arabic{theorem}}
\setcounter{equation}{0}

We explain an alternative proof of Theorem \ref{th1} based on ASEP duality.
\begin{pro}\label{pro13}
Let $\rho_+: \mathbb{W}_m^+ \to \mathbb{R}$, $\rho_-: \mathbb{W}_n^- \to \mathbb{R}$
be bounded, continuous probability densities. Then
\begin{equation}\label{a.8}
\int_ {\mathbb{W}_n^-}\int_{ \mathbb{W}_m^+} \mathrm{d} x\mathrm{d}y \rho_-(x)\rho_+(y)\mathbb{E}_y\big(H(x(t),y)\big)
 = \int_ {\mathbb{W}_n^-}\int_{ \mathbb{W}_m^+} \mathrm{d} x\mathrm{d}y \rho_-(x)\rho_+(y)\mathbb{E}_y\big(H(x,y(t))\big)  \,.                
\end{equation}
\end{pro}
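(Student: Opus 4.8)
\textit{Strategy.} The plan is to realize both $y(\cdot)$ and its dual $x(\cdot)$ as low-density diffusive scaling limits of the asymmetric simple exclusion process and to deduce (\ref{a.8}) from ASEP self-duality \cite{BCS14}. Fix $m,n$, a time $t>0$, and $\epsilon>0$. Let $\eta^\epsilon(\cdot)$ be ASEP on $\mathbb{Z}$ with $m$ particles and right/left jump rates $p,q$ (so $\tau=p/q$, and for $\tau<1$ the bulk drifts with velocity $p-q<0$), and let $\xi^\epsilon(\cdot)$ be the companion ASEP with $n$ particles and the rates interchanged; these are the lattice versions of the $y$- and $x$-dynamics of Section \ref{sec2}. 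Start the $\eta$-particles at $\lfloor\epsilon^{-1}y_i\rfloor$, so their rescaled configuration is $y+o(1)$, and start the $\xi$-particles at $\lfloor\epsilon^{-1}x_j-(q-p)\epsilon^{-2}t\rfloor$, i.e. shifted back by the drift displacement the $\xi$-bulk will undo by the rescaled time $t$. This anticipatory shift — which works precisely because the dual rates are the interchanged ones — is essential: without it the two oppositely drifting bulks carry $O(\epsilon^{-2})$ translations that would push every $\theta(\cdot)$ in the duality function to $1$, collapsing the identity to $\tau^{nm}=\tau^{nm}$.

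\textit{The two ingredients.} First, for every $\epsilon$ and every pair of frozen initial configurations the exclusion processes obey the self-duality of \cite{BCS14}: with $h(\underline{a};\underline{b})=\prod_{i,j}\tau^{\theta(a_j-b_i)}$, the lattice analogue of $H$, one has $\mathbb{E}\big(h(\xi^\epsilon(0),\eta^\epsilon(\epsilon^{-2}t))\big)=\mathbb{E}\big(h(\xi^\epsilon(\epsilon^{-2}t),\eta^\epsilon(0))\big)$, where on the left the $\eta$-process evolves and on the right the $\xi$-process evolves. Because $h$ is invariant under a common translation of all its arguments, the anticipatory shift leaves the identity unchanged, but it makes the $O(\epsilon^{-2})$ bulk translations cancel inside each $\theta(\cdot)$ on \emph{both} sides simultaneously. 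Second, finitely many ASEP particles, diffusively rescaled and de-drifted, converge to Brownian motion with oblique reflection: $\epsilon\,\eta^\epsilon_i(\epsilon^{-2}t)-(p-q)\epsilon^{-1}t\to y_i(t)$, and, after compensating the anticipatory shift, $\epsilon\,\xi^\epsilon_j(\epsilon^{-2}t)\to x_j(t)$. Here the coefficients $p,q$ of the Skorokhod pushes are the rates of the jumps suppressed when neighbours touch, and the local times $\Lambda^{(j,j+1)}$ appear as scaling limits of the rescaled counts of these suppressed jumps via the occupation-time formula. For $q=1$ this is the classical convergence to reflected Brownian motions (\cite{Har65,FSW13}), and it holds for the symmetric case; in general it can be proved by the relative-coordinate argument of Appendix \ref{app.a}: the rescaled gaps solve an approximate Skorokhod problem, the increasing push terms are tight because their expectations are uniformly bounded in $\epsilon$, and any subsequential limit solves (\ref{2.1}), whose strong solution is unique by \cite{KPS12}.

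\textit{Passing to the limit.} Sample the two ASEP initial configurations independently from discretizations of $\rho_-$ and $\rho_+$ and average the self-duality identity over them. For $\rho_-\otimes\rho_+$-a.e.\ pair $(x,y)$ one has $x_j\neq y_i$ for all $i,j$; then $x(t),y$ (resp.\ $x,y(t)$) also have no coincidences almost surely, their laws being absolutely continuous, so $h$ is continuous at the limiting configuration and the two ingredients give $h(\xi^\epsilon(0),\eta^\epsilon(\epsilon^{-2}t))\to H(x,y(t))$ and $h(\xi^\epsilon(\epsilon^{-2}t),\eta^\epsilon(0))\to H(x(t),y)$. Since $0<\tau<1$, both $h$ and $H$ are bounded by $1$, so dominated convergence applies to the iterated integrals against $\rho_-\otimes\rho_+$, and letting $\epsilon\to0$ in the exclusion identity yields exactly (\ref{a.8}). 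Combined with the continuity of both sides of (\ref{2.8}) in $(x,y)$ — a consequence of the regularity of the transition densities of Section \ref{sec6} — Proposition \ref{pro13} even gives Theorem \ref{th1} pointwise; if one wishes to keep the argument independent of Section \ref{sec6}, one simply stops at Proposition \ref{pro13}.

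\textit{Main obstacle.} The genuine difficulty is the scaling limit for general $0<\tau<1$: one has to show that the rescaled, de-drifted exclusion particles converge to the oblique-reflection process with the \emph{exact} coefficients $p,q$, i.e. that the number of suppressed right-jumps at a contact converges to $p$ times the right local time of the relevant gap and the suppressed left-jumps to $q$ times it — which forces the occupation-time estimate to be made uniform in $\epsilon$, and is the analytic core of the appendix. The bookkeeping around the anticipatory shift is mechanical once one recognises it must equal the dual particles' drift displacement over time $t$, but getting it wrong trivialises the identity, so it warrants care; granting the scaling limit, the rest is soft.
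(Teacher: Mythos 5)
Your proposal follows essentially the same route as the paper's own proof in Appendix~C: the exact ASEP self-duality of \cite{BCS14} for finitely many particles, combined with the low-density diffusive invariance principle to the obliquely reflected Brownian motions (Proposition \ref{pro14}, which the paper imports from \cite{KPS12}), and passage to the limit after averaging against the continuous densities $\rho_\pm$. The only cosmetic difference is that you absorb the two opposite bulk drifts into an anticipatory shift of the dual initial data and handle the discontinuity of $H$ via almost-sure non-coincidence, whereas the paper works in two moving frames, uses $\theta(\lambda u)=\theta(u)$ together with translation invariance, and integrates against $\rho_\pm$ first so that only the continuous functions $H_\pm$ enter — the same cancellations in different clothing.
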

\textit{Remark}.  Since by Theorem \ref{th7} the transition probability has a continuous density, one can take the limits 
$\rho_-(x) \to \delta(x - x_0)$, $\rho_+(y) \to \delta(y - y_0)$ and duality
holds in fact pointwise.\\\\ 
\begin{proof} We set
\begin{equation}\label{a.9}
H_+(y) = \int_ {\mathbb{W}_n^-}\mathrm{d} x \rho_-(x) H(x,y)\,,\quad
H_-(x) = \int_ {\mathbb{W}_m^+}\mathrm{d} y \rho_+(y) H(x,y)\,.
\end{equation} 
$H_+,H_-$ are continuous and (\ref{a.9}) reads 
\begin{equation}\label{a.10}
\mathbb{E}_{\rho_-}\big(H_-(x(t))\big) = \mathbb{E}_{\rho_+}\big(H_+(y(t))\big)\,.\medskip
\end{equation} 
\textit{(i) The approximation theorem}. It suffices to discuss the particle process $y(t)$. We consider $m$ ASEP particles with positions $w_1(t) <... <w_m(t)$, $w_j(t)  \in \mathbb{Z}$. Particles jump with rate $p$ to the right and rate $q$ to the left, subject to the exclusion rule. Switching to the moving frame of reference and under diffusive rescaling one obtains
\begin{equation}\label{a.11}
y_j^\epsilon(t) = \epsilon\big(w_j(\epsilon^{-2}t) - \lfloor(p-q)\epsilon^{-2}t\rfloor\big)
\end{equation}
with $\lfloor \cdot \rfloor$denoting integer part. Clearly $y_j^\epsilon(t)  \in (\mathbb{W}_m^+)^{\circ} \cap (\epsilon \mathbb{Z})^m$.
\begin{pro}\label{pro14}
 Let $f : \mathbb{W}_m^+ \to \mathbb{R}$ be bounded and continuous. Then for initial conditions $y^\epsilon$ such that $y^\epsilon \to y \in \mathbb{W}_m^+$ it holds
\begin{equation}\label{a.12}
\lim_{\epsilon \to 0} \mathbb{E}_{y^\epsilon}\big(f(y^\epsilon(t))\big)= \mathbb{E}_{y}\big(f(y(t))\big)\,.\medskip
\end{equation} 
\end{pro}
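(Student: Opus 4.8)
The plan is to prove the full functional limit theorem behind the statement: for every $T>0$, as $\epsilon\to0$ the rescaled trajectory $y^\epsilon(\cdot)$ converges in distribution in $D\big([0,T],\mathbb{W}_m^+\big)$ to the unique solution $y(\cdot)$ of the reflected system (\ref{e.2}) started at $y$. Granting this, the proposition follows at once: $y(\cdot)$ has continuous paths, so evaluation at the fixed time $t$ is almost surely continuous on $D\big([0,T],\mathbb{W}_m^+\big)$, hence $y^\epsilon(t)\Rightarrow y(t)$, and since $f$ is bounded and continuous, $\mathbb{E}_{y^\epsilon}\big(f(y^\epsilon(t))\big)\to\mathbb{E}_y\big(f(y(t))\big)$ by bounded convergence. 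The functional limit is obtained in the usual three steps: a semimartingale decomposition of the rescaled ASEP, tightness, and identification of the subsequential limits via the uniqueness result of \cite{KPS12}.

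\emph{Decomposition and the martingale part.} Writing $g_j(s)=w_{j+1}(s)-w_j(s)$ with the convention $g_0\equiv g_m\equiv\infty$, the right- and left-jump rates of $w_j$ are $p\,\mathbf{1}(g_j\ge2)$ and $q\,\mathbf{1}(g_{j-1}\ge2)$. Compensating the jump counters and rescaling as in (\ref{a.11}) gives, using $p+q=1$,
\begin{equation*}
y_j^\epsilon(t)=y_j^\epsilon(0)+M_j^\epsilon(t)-pL_j^\epsilon(t)+qL_{j-1}^\epsilon(t)+e^\epsilon(t),
\end{equation*}
where $M_j^\epsilon(t)=\epsilon\tilde M_j(\epsilon^{-2}t)$ is a martingale with $\langle M_i^\epsilon,M_j^\epsilon\rangle_t=\delta_{ij}\big(t-\epsilon\big(pL_j^\epsilon(t)+qL_{j-1}^\epsilon(t)\big)\big)$ (no two coordinates jump at once) and jumps of size $\epsilon$, the term $L_j^\epsilon(t)=\epsilon\int_0^{\epsilon^{-2}t}\mathbf{1}(g_j(u)=1)\,\mathrm du$ is continuous and nondecreasing with $L_j^\epsilon(0)=0$ and $L_0^\epsilon\equiv L_m^\epsilon\equiv0$, and $|e^\epsilon(t)|\le\epsilon$ absorbs the floor in (\ref{a.11}). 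In particular, once the $L_j^\epsilon$ are shown to be bounded in probability on $[0,T]$ (established in the next step), the brackets converge to $t\,\delta_{ij}$ and the martingale functional central limit theorem (the Lindeberg condition being trivial, as the jumps are of size $\epsilon$) yields $(M_1^\epsilon,\dots,M_m^\epsilon)\Rightarrow(B_1,\dots,B_m)$, $m$ independent standard Brownian motions.

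\emph{A priori bound and tightness.} In the gap coordinates $g_j^\epsilon(t)=y_{j+1}^\epsilon(t)-y_j^\epsilon(t)\ge\epsilon$ the decomposition reads $g^\epsilon(t)=g^\epsilon(0)+\hat M^\epsilon(t)+AL^\epsilon(t)$, with $\hat M_j^\epsilon=M_{j+1}^\epsilon-M_j^\epsilon$ and $A$ the T\"oplitz matrix of Appendix B (of size $m-1$), while $L_j^\epsilon$ increases only when $g_j^\epsilon=\epsilon$. Let $C=C^{\mathrm t}>0$ be the matrix of Appendix B with $CA=\mathrm{diag}(\tau^0,\dots,\tau^{m-2})$. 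Applying Ito's formula to $\langle g^\epsilon(t)-\epsilon\mathbf{1},\,C(g^\epsilon(t)-\epsilon\mathbf{1})\rangle$ exactly as in Appendix B, the contribution of $AL^\epsilon$ equals $2\sum_j\tau^{j-1}(g_j^\epsilon-\epsilon)\,\mathrm dL_j^\epsilon=0$ on the support of $\mathrm dL_j^\epsilon$, the $\hat M^\epsilon$-linear term is a martingale, and the brackets of $\hat M^\epsilon$ have bounded densities; hence $\sup_\epsilon\sup_{t\le T}\mathbb{E}\,|g^\epsilon(t)|^2<\infty$. Since $L^\epsilon(t)=A^{-1}\big(g^\epsilon(t)-g^\epsilon(0)-\hat M^\epsilon(t)\big)$ and each $L_j^\epsilon$ is nonnegative and nondecreasing, this gives $\sup_\epsilon\mathbb{E}\sup_{t\le T}|L^\epsilon(t)|^2<\infty$. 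Consequently $\{L^\epsilon\}$ is tight in $D([0,T])$ (monotone processes with tight terminal value), the martingale limit of the previous step applies, and $\{y^\epsilon\}$ is tight as a sum. (Equivalently, $(g^\epsilon-\epsilon\mathbf{1},L^\epsilon)$ solves the Skorokhod reflection problem on the orthant with reflection matrix $A=I-Q$, $Q\ge0$ sub-stochastic and irreducible, so that $\rho(Q)<1$ and the Skorokhod map is Lipschitz in the sup norm.)

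\emph{Identification and the main obstacle.} Along any subsequence $(y^\epsilon,M^\epsilon,L^\epsilon)\Rightarrow(\tilde y,\tilde M,\tilde L)$ with $\tilde M=(B_1,\dots,B_m)$ and each $\tilde L_j$ continuous, nondecreasing, $\tilde L_0=\tilde L_m=0$; by Skorokhod representation we may assume a.s.\ uniform convergence on $[0,T]$. Passing to the limit in the decomposition and using $e^\epsilon\to0$ gives $\tilde y_j(t)=y_j+B_j(t)-p\tilde L_j(t)+q\tilde L_{j-1}(t)$ with $\tilde y(t)\in\mathbb{W}_m^+$. On the support of $\mathrm dL_j^\epsilon$ one has $y_{j+1}^\epsilon-y_j^\epsilon=\epsilon$, so $\int_0^t(y_{j+1}^\epsilon-y_j^\epsilon)\,\mathrm dL_j^\epsilon=\epsilon L_j^\epsilon(t)\to0$; since the integrand is nonnegative and $L_j^\epsilon$ nondecreasing, $\int_0^t(\tilde y_{j+1}-\tilde y_j)\,\mathrm d\tilde L_j=0$, i.e.\ $\mathrm d\tilde L_j$ is carried by $\{\tilde y_{j+1}=\tilde y_j\}$. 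Hence $(\tilde y,(B_j))$ solves (\ref{e.2}) started at $y$, and by the uniqueness of the strong solution established in \cite{KPS12} its law is that of $y(\cdot)$; since every subsequential limit agrees, $y^\epsilon(\cdot)\Rightarrow y(\cdot)$, and the proposition follows. The delicate point is precisely the control of the reflection terms: the $\epsilon$-uniform a priori bound on $L^\epsilon$, and, above all, the verification that the occupation functionals $L_j^\epsilon$ converge to genuine boundary local times of the limit so that no mass leaks off the collision set; a related (harmless) technicality is that at finite $\epsilon$ the ASEP produces blocks of three or more adjacent particles, whose contribution must be shown negligible, consistently with the a.s.\ absence of triple collisions for $y(\cdot)$.
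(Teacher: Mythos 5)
Your proof is essentially sound, but it takes a genuinely different route from the paper. The paper disposes of Proposition \ref{pro14} in one sentence: it cites \cite{KPS12}, where the invariance principle is proved for the asymmetric zero-range process with constant rate $c(n)=1-\delta_{0n}$, and observes that under the standard particles-to-gaps correspondence the ASEP positions differ from those of that zero-range process by at most $m$ uniformly in $t$, a discrepancy that vanishes under the diffusive rescaling (\ref{a.11}). You instead reconstruct the full functional limit theorem from scratch: exact semimartingale decomposition of the rescaled ASEP, martingale FCLT for the noise, an a priori bound on the occupation functionals $L_j^\epsilon$ via the same quadratic form $\langle \cdot, C\,\cdot\rangle$ with $CA=\mathrm{diag}(\tau^0,\dots,\tau^{m-2})$ that the paper uses in Appendix B for the potential approximation, and identification of subsequential limits through the support condition on $\mathrm{d}\tilde L_j$ together with uniqueness from \cite{KPS12}. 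What the paper's route buys is brevity and a clean separation of labor (all the probabilistic work is outsourced); what yours buys is a self-contained argument that makes visible exactly which estimates are needed and reuses the paper's own Lyapunov-function machinery, at the cost of essentially re-proving the content of the cited reference.

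Two remarks on the details. First, the step ``monotone processes with tight terminal value are tight'' gives tightness in $D$ but does not by itself exclude jumps in the limit points of $L^\epsilon$ (each $L_j^\epsilon$ is only $\epsilon^{-1}$-Lipschitz), and continuity of $\tilde L_j$ is needed for your identification step. Your parenthetical observation that $(g^\epsilon-\epsilon\mathbf{1},L^\epsilon)$ solves, exactly at each fixed $\epsilon$, a Harrison--Reiman Skorokhod problem on the orthant with reflection matrix $A=I-Q$, $\rho(Q)<1$, is what actually closes this: the Lipschitz continuity of the Skorokhod map in the sup norm, applied to the converging free input $g^\epsilon(0)-\epsilon\mathbf{1}+\hat M^\epsilon$, yields tightness, continuity of the limiting pushing terms, and identification in one stroke, so that remark deserves to be the main argument rather than an aside. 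Second, the worry about blocks of three or more adjacent particles is unnecessary: the decomposition with indicator rates is exact for every configuration, so no separate estimate on such blocks is required.
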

In \cite{KPS12}, the proposition is proved for the asymmetric zero range process with 
constant rate, $c(n) = 1 - \delta_{0n}$, which differs from the ASEP at most by $m$ uniformly in $t$.\\\\
\textit{(ii) ASEP duality}. We introduce $n$ dual particles. They jump with rate $q$ to the right and rate $p$ to the left,
subject to the exclusion rule. The diffusively rescaled positions of the dual particles in the moving frame
are denoted by $x_j^\epsilon(t)$.
\begin{pro}\label{pro15}
 For all $x \in (\mathbb{W}_m^+)^{\circ} \cap (\epsilon \mathbb{Z})^m$ and
$y \in (\mathbb{W}_m^+)^{\circ} \cap (\epsilon \mathbb{Z})^m$
 it holds 
\begin{equation}\label{a.13}
\mathbb{E}_x^{\epsilon}\big(H(x^{\epsilon}(t),y)\big) = \mathbb{E}_y^{\epsilon}\big(H(x,y^{\epsilon}(t))\big)\,.
\end{equation} 
\end{pro}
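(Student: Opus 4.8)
The plan is to recognise the statement as ASEP self-duality with duality function $H$ — the lattice counterpart of Theorem~\ref{th1}, as used in \cite{BCS14} — and to prove it in the usual two steps: first reduce it to a pointwise identity between the two generators, then lift this to the semigroups by a bounded-operator argument. To begin, note that the diffusive rescaling $t\mapsto\epsilon^{-2}t$, $x\mapsto\epsilon x$ only reparametrises time and relabels positions, and the passage to the common moving frame amounts to a deterministic spatial translation $x\mapsto x+c$ applied to both families; since $H$ depends only on the cross-differences $x_j-y_i$ one has $H(x+c,y)=H(x,y-c)$, so, together with the translation invariance of the ASEP dynamics, the shift drops out of \eqref{a.13}. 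Hence it suffices to prove, for the \emph{unscaled} processes on $\mathbb Z$ and for arbitrary finite configurations $x$ (the dual particles, right/left rates $q/p$) and $y$ (the original particles, right/left rates $p/q$),
\[
\mathbb{E}_{x}\big(H(x(s),y)\big)=\mathbb{E}_{y}\big(H(x,y(s))\big),\qquad s\ge0,
\]
the two families obeying the exclusion rule only among their own particles and not interacting with one another.

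Because there are finitely many particles and the jump rates are bounded, the generators $\mathcal A_x$ and $\mathcal A_y$ are bounded operators on bounded functions, the semigroups $\mathrm e^{s\mathcal A_x},\mathrm e^{s\mathcal A_y}$ are norm continuous, and $\mathcal A_x$ (acting on the first argument) commutes with $\mathcal A_y$ (acting on the second). For fixed $y$ the function $x\mapsto H(x,y)$ is bounded, so $\tfrac{\mathrm d}{\mathrm ds}\big(\mathrm e^{s\mathcal A_x}\otimes\mathrm e^{(s_0-s)\mathcal A_y}H\big)=\mathrm e^{s\mathcal A_x}\otimes\mathrm e^{(s_0-s)\mathcal A_y}\big(\mathcal A_xH-\mathcal A_yH\big)$; integrating from $0$ to $s_0$ and evaluating at $(x,y)$ reproduces the displayed identity, \emph{provided} one has the pointwise generator duality $\mathcal A_xH=\mathcal A_yH$. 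The whole problem thus reduces to this algebraic identity, which is the only nontrivial point.

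To verify it, write $H(x,y)=\tau^{n(x,y)}$ with $n(x,y)=\#\{(i,j):y_i<x_j\}=\sum_{i,j}\theta(x_j-y_i)$, and let $\eta_x,\eta_y\in\{0,1\}^{\mathbb Z}$ be the occupation variables. A single admissible move produces a ``wall factor'': moving an $x$-particle from $z$ to $z+1$ multiplies $H$ by $\tau^{\eta_y(z)}$ and from $z$ to $z-1$ by $\tau^{-\eta_y(z-1)}$, while moving a $y$-particle from $z$ to $z-1$ multiplies $H$ by $\tau^{\eta_x(z)}$ and from $z$ to $z+1$ by $\tau^{-\eta_x(z+1)}$ — the exponent's sign and the site entering it are fixed by $\theta(0)=0$, i.e.\ by counting a particle as crossed only once it has strictly passed. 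Using $\tau^{\pm\eta}-1=(\tau^{\pm1}-1)\eta$ for $\eta\in\{0,1\}$ and the relation $p/\tau=q$, one finds
\[
\frac{(\mathcal A_xH)(x,y)}{H(x,y)}=q(\tau-1)\sum_{z}\eta_x(z)\big(1-\eta_x(z+1)\big)\eta_y(z)-q(\tau-1)\sum_{z}\eta_x(z)\big(1-\eta_x(z-1)\big)\eta_y(z-1),
\]
and a shift $z\mapsto z+1$ in the second sum (boundary terms vanish, the configurations being finite) collapses both sums to the symmetric expression $q(\tau-1)\sum_z\eta_y(z)\big(\eta_x(z)-\eta_x(z+1)\big)$. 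The parallel computation for $\mathcal A_y$, this time with $\tau^{-\eta}-1=-\tfrac{\tau-1}{\tau}\eta$ and again $p/\tau=q$, gives after the analogous summation by parts the \emph{same} quantity $q(\tau-1)\sum_z\eta_y(z)\big(\eta_x(z)-\eta_x(z+1)\big)$, so $\mathcal A_xH=\mathcal A_yH$ pointwise, completing the proof.

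The main obstacle is precisely this last step: one must get the wall-factor rules and their signs exactly right and carry out the discrete summation by parts so that the exclusion terms $\eta_x(z)\eta_x(z\pm1)$, together with the rate constants $p,q$ and $\tau=p/q$, reorganise into one and the same bilinear form on the two sides — this is where the special value $\tau=p/q$ of the deformation parameter is essential. Everything else, namely the reduction to the lattice model and the passage from generator duality to semigroup (hence expectation) duality, is routine because all operators involved are bounded.
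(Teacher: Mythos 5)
Your proposal is correct, and it actually proves more than the paper writes down. The paper's own ``proof'' of this proposition consists of two ingredients: (a) a citation to \cite{BCS14} for the fixed-lattice-frame, $\epsilon=1$ version of the duality, and (b) the one-line remark that the diffusive rescaling and the two oppositely moving frames are harmless because $\theta(\lambda u)=\theta(u)$ for $\lambda>0$ and the ASEP dynamics is translation invariant. Your reduction step is ingredient (b) in essentially the same form — your wording ``the same translation applied to both families'' is slightly loose, since the two frames drift with opposite velocities and the shift sits on the evolved family only on each side of \eqref{a.13}, but the mechanism you then invoke, namely $H(x+c,y)=H(x,y-c)$ combined with translation invariance of the dynamics, is exactly what makes the shifts balance. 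Where you genuinely depart from the paper is ingredient (a): instead of citing \cite{BCS14} you verify the generator identity $\mathcal A_xH=\mathcal A_yH$ from scratch, and your bookkeeping checks out — the wall factors with $\theta(0)=0$, the identity $\tau^{\pm\eta}-1=(\tau^{\pm1}-1)\eta$, the relation $p/\tau=q$, and the discrete summation by parts that cancels the triple products $\eta_y(z)\eta_y(z\pm1)\eta_x(\cdot)$ all reorganise both sides into $q(\tau-1)\sum_z\eta_y(z)\bigl(\eta_x(z)-\eta_x(z+1)\bigr)$; the lift from generators to semigroups is indeed routine for finitely many particles with bounded total jump rate. What your route buys is a self-contained and sign-checked argument (useful, since the conventions $\theta(0)=0$ and $\tau=p/q<1$ are easy to get wrong); what the paper's route buys is brevity by leaning on the known ASEP self-duality.
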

In \cite{BCS14} the assertion is proved for $\epsilon = 1$ at fixed lattice frame. In (\ref{a.13})
the $y^\epsilon(t)$ frame moves with velocity $p-q$, while the $x^\epsilon(t)$ frame with velocity $q-p$.
To check that the terms just balance one uses that $\theta(\lambda u) = \theta(u)$ for $\lambda > 0$
and the translation invariance of the ASEP dynamics.\bigskip\\
\textbf {Proof of Proposition \ref{pro13}}: In  (\ref{a.13}) we regard both sides as a piecewise constant function on $\mathbb{W}_m^+ \times\mathbb{W}_n^-$.
Integrating over $\rho_+ \times \rho_-$ yields
\begin{eqnarray}\label{a.14}
&&\hspace{-30pt}\int_ {\mathbb{W}_n^-}\int_{ \mathbb{W}_m^+} \mathrm{d} x\mathrm{d}y \rho_-(x)\rho_+(y)
\mathbb{E}_{\lfloor x \rfloor_{\epsilon}}\big(H(x^\epsilon(t),\lfloor y \rfloor_{\epsilon})\big)\nonumber\\
&&\hspace{20pt} = \int_ {\mathbb{W}_n^-}\int_{ \mathbb{W}_m^+} \mathrm{d} x\mathrm{d}y \rho_-(x)\rho_+(y)
 \mathbb{E}_{\lfloor y \rfloor_{\epsilon}}\big(H(\lfloor x \rfloor_{\epsilon},y^{\epsilon}(t))\big) 
\end{eqnarray} 
with $\lfloor \cdot \rfloor_{\epsilon}$ the integer part mod $\epsilon$. By continuity of $\rho_-$, $\rho_+$,
\begin{eqnarray}\label{a.15}
&&\hspace{-30pt}\int_ {\mathbb{W}_n^-}\int_{ \mathbb{W}_m^+} \mathrm{d} x\mathrm{d}y \rho_-(x)\rho_+(y)
\mathbb{E}_{\lfloor x \rfloor_{\epsilon}}\big(H(x^\epsilon(t),y )\big)\nonumber\\
 &&\hspace{20pt}= \int_ {\mathbb{W}_n^-}\int_{ \mathbb{W}_m^+} \mathrm{d} x\mathrm{d}y \rho_-(x)\rho_+(y)
 \mathbb{E}_{\lfloor y \rfloor_{\epsilon}}\big(H(x ,y^{\epsilon}(t))\big) +  o(\epsilon).
\end{eqnarray} 
Using Proposition \ref{pro14} establishes the claim. 
\end{proof}
\section{Appendix: Proof  of (\ref{6.6d})}\label{app.d}
\renewcommand{\thetheorem}{D.\arabic{theorem}}
\setcounter{equation}{0}
We fix $\sigma$, $\sigma \neq \mathrm{id}$, $n$, hence the sets $A$, $B$, and $\ell = \mathrm{min}\,B$.
We have argued already that the integration over $z_\ell$ results in an expression  vanishing as $t \to 0$.
To have a proof we have to study the full $2N$-dimensional integral.
For $f \in \mathcal{D}_\epsilon$, this integral reads
\begin{eqnarray}\label{d.2}
&&\hspace{-30pt} I_\sigma(y;f,t) = \int_{\mathbb{R}^N} \mathrm{d} x f(x) \int_{\Gamma_{aN}} \mathrm{d}z_0
  \int_{\Gamma_{a}} \mathrm{d}z_1 ... \mathrm{d}z_{N-1} \nonumber\\
&&\hspace{30pt} 
\times\prod_{j\in B} S(z_N,z_j)
\prod_{\{\alpha,\beta\} \in \mathrm{In}(\sigma), \,\alpha \neq N} S(z_\alpha,z_\beta) 
 \prod_{j=1}^N \mathrm{e}^{z_j(x_{\sigma^{-1}(j)} - y_j)} \mathrm{e}^{\frac{1}{2}z_j^2t} 
\end{eqnarray} 
with $z_0 = z_1 + ...+z_N$. The phase factor for $z_\ell$ is given by 
\begin{equation}\label{d.3}
\mathrm{e}^{z_\ell(x_{\sigma^{-1}(\ell)} - x_n + y_N - y_\ell)}\,.
\end{equation} 
By construction, $x_{\sigma^{-1}(\ell)} - x_n \geq \epsilon$ on the support of $f$. 
We introduce the change of variables
\begin{equation}\label{d.4}
w = x_{\sigma^{-1}(\ell)} - x_n\,,\quad w_j = x_{\sigma(j)} - x_n\,, \,j = 1,...,N\,,\,j \neq \ell\,, \quad w_0 = x_n \,.
\end{equation} 
Also, as shorthand, we  introduce 
$\underline{z}^{{\vee}\ell}
= (z_0, ..., \cancel{z_\ell}, ...,z_{N-1})$, $z^{{\vee}\ell} = z_0 -z_1 - ...- \cancel{z_\ell} - ...-z_{N-1}$,
$\underline{z}^{{\vee}\ell,j} = (z_0, ..., \cancel{z_\ell},  \cancel{z_j}, ...,z_{N-1})$, $z^{{\vee}\ell,j} = z_0 -z_1 - ...- \cancel{z_\ell} -\cancel{z_j}- ...-z_{N-1}$.  Then
\begin{eqnarray}\label{d.5}
&&\hspace{-30pt} I_\sigma(y;f,t) = \int_{\mathbb{R}^{N-1}} \mathrm{d} \underline{w}^{{\vee}\ell}  \int \mathrm{d} 
\underline{z}^{{\vee}\ell}  \int_{\Gamma_{a}}\mathrm{d}z_\ell
  \int_\epsilon^\infty \mathrm{d}w \tilde{f}(w, \underline{w}^{{\vee}\ell})  \mathrm{e}^{z_\ell(w +y_N- y_\ell)}  \mathrm{e}^{\frac{1}{2}z_\ell^2t}\nonumber\\
&&\hspace{30pt} 
\times\prod_{j\in B} S(z_N,z_j)
\prod_{\{\alpha,\beta\} \in \mathrm{In}(\sigma), \,\alpha \neq N} S(z_\alpha,z_\beta) 
 \prod_{j=1,j \neq \ell}^N \mathrm{e}^{z_j(x_{\sigma^{-1}(j)} - y_j)} \mathrm{e}^{\frac{1}{2}z_j^2t}\,,
\end{eqnarray} 
where $\tilde{f}$ denotes $f$ under the linear transformation (\ref{d.4}).

The strategy is to first integrate over $\underline{w}^{{\vee}\ell}$ which results in $g(w,\underline{z}^{{\vee}\ell})$, where by construction $g$ is supported in $[\epsilon, \infty)$ in dependence on $w$ and is smooth with a rapid decay on the contours $\Gamma_a,\Gamma_{aN}$. Secondly we bound the integration in $ \mathrm{d}z_\ell
 \mathrm{d}w$ with an explicit dependence on $\underline{z}^{{\vee}\ell}$. For this purpose we have to study the $S$-factors.
 One has
 \begin{equation}\label{d.6}
\frac{\tau z_N - z_\ell}{\tau z_\ell - z_N} = \frac{\tau z^{{\vee}\ell} - (1 +\tau) z_\ell}{(1+\tau)z_\ell - z^{{\vee}\ell}} 
\end{equation} 
and for $j \in B\setminus \{\ell\}$
\begin{equation}\label{d.7}
\frac{\tau z_N - z_j}{\tau z_j - z_N} = \frac{\tau z_\ell - (1+\tau)z_j - \tau z^{{\vee}\ell,j}}{z_\ell + (1+\tau) z_j - z^{{\vee}\ell,j}}\,.
\end{equation} 
The integrand for $z_\ell$ has the form
\begin{equation}\label{d.8}
\prod _{j\in B \cup A(\ell)} \frac{z_\ell +a_j}{z_\ell - b_j}\,,
\end{equation} 
with $A(\ell) \subset A$, $a_j,b_j$ linear in $z^{{\vee}\ell}$, and $\Re(b_j) > a$.
For the remaining factors one only uses the bound
\begin{equation}\label{d.9}
|S(z_\alpha,z_\beta)| \leq c(1 + |z_\alpha| + |z_\beta|)
\end{equation} 
on $\Gamma_a, \Gamma_{aN}$.
\begin{lemma}\label{le15}
Let $a_j, b_j \in \mathbb{C}$ and $\Re(b_j) > a$, $j=1,...,m$, and define
\begin{equation}\label{d.10}
I(t) = \int_\epsilon^\infty \mathrm{d}w f(w) \int_{\Gamma_a}\mathrm{d}z\,
\mathrm{e}^{\frac{1}{2}z^2t}\mathrm{e}^{zw}
 \prod _{j = 1}^m \frac{z +a_j}{z - b_j}
\end{equation} 
for $f\in \mathcal{D}_\epsilon$. Then, uniformly in $t$, $0 \leq t \leq 1$,
\begin{equation}\label{d.11}
|I(t)| \leq c \prod_{j =1}^m (1+ |a_j| + |b_j|)\, \quad and \quad \lim_{t \to 0} I(t) = 0\,.
\end{equation}
\end{lemma}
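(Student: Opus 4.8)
The plan is to perform the $w$-integration first, which collapses $I(t)$ into a single contour integral whose integrand becomes absolutely convergent on $\Gamma_a$ once one trades the smoothness of $f$ for decay in $z$ via integration by parts; both assertions of the lemma then follow from elementary estimates plus a contour deformation. For $t>0$ the Gaussian factor $\mathrm{e}^{\frac12 z^2 t}$ makes the $z$-integral absolutely convergent, uniformly for $w$ in the (compact) support of $f$, so Fubini gives
\[
I(t)=\int_{\Gamma_a}\mathrm{d}z\;\mathrm{e}^{\frac12 z^2 t}\,\widehat f(z)\prod_{j=1}^m\frac{z+a_j}{z-b_j}\,,\qquad \widehat f(z):=\int_\epsilon^\infty f(w)\,\mathrm{e}^{zw}\,\mathrm{d}w\,.
\]
Since $f$ is smooth and supported in $[\epsilon,\infty)$, it vanishes together with all its derivatives at $w=\epsilon$, so repeated integration by parts in $w$ yields $\widehat f(z)=(-z)^{-k}\int_\epsilon^\infty f^{(k)}(w)\,\mathrm{e}^{zw}\,\mathrm{d}w$ for every $k$, hence $|\widehat f(z)|\leq C_k(1+|z|)^{-k}$ on $\Gamma_a$, and more generally $|\widehat f(z)|\leq C_k\,\mathrm{e}^{a'\epsilon}(1+|z|)^{-k}$ on a shifted line $\Gamma_{a'}$, $a'\leq a$. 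This decay is the crux: the rational factor $\prod_j(z+a_j)/(z-b_j)$ is only $O(1)$ as $|z|\to\infty$, so the integral fails to be absolutely convergent at $t=0$ before integrating by parts.

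For the uniform bound I would estimate, on $\Gamma_a=\{a+\mathrm{i}\varphi\}$ and $0\leq t\leq1$, $|\mathrm{e}^{\frac12 z^2 t}|=\mathrm{e}^{\frac12 t(a^2-\varphi^2)}\leq\mathrm{e}^{a^2/2}$, together with $\bigl|\tfrac{z+a_j}{z-b_j}\bigr|\leq 1+\tfrac{|a_j|+|b_j|}{|z-b_j|}\leq\tfrac{1}{\min(1,\Re b_j-a)}(1+|a_j|+|b_j|)$, using $|z-b_j|\geq\Re b_j-a>0$. Taking $k=2$ and integrating $C_2(1+|\varphi|)^{-2}$ over $\varphi\in\mathbb{R}$ then gives $|I(t)|\leq c\prod_{j=1}^m(1+|a_j|+|b_j|)$ uniformly in $t\in[0,1]$, with $c$ depending on $a,m,f$ and on $\min_j(\Re b_j-a)$ — which in every application of the lemma is bounded below by a fixed positive constant, since there the $b_j$ are definite linear combinations of points on $\Gamma_a$ and $\Gamma_{Na}$.

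For $\lim_{t\to0}I(t)=0$ I would apply dominated convergence on $\Gamma_a$, using the $t$-independent dominating function just found, so that $\lim_{t\to0}I(t)=\int_{\Gamma_a}\widehat f(z)\prod_j\frac{z+a_j}{z-b_j}\,\mathrm{d}z$. Because all poles $b_j$ lie strictly to the right of $\Gamma_a$ and the integrand now genuinely decays, the contour may be moved to $\Gamma_{a'}$ without crossing singularities; on $\Gamma_{a'}$ the integrand is bounded by $C_2\,\mathrm{e}^{a'\epsilon}(1+|\varphi|)^{-2}$ times a factor tending to $1$, so letting $a'\to-\infty$ forces the integral to be $0$. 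Equivalently, one may write $\widehat f(z)=z^{-2}\int_\epsilon^\infty f''(w)\mathrm{e}^{zw}\mathrm{d}w$, apply Fubini once more, evaluate the now absolutely convergent inner integral $\int_{\Gamma_a}z^{-2}\mathrm{e}^{zw}\prod_j\frac{z+a_j}{z-b_j}\mathrm{d}z$ for $w>0$ by closing the contour to the left (only the double pole at $z=0$ contributes, producing a polynomial of degree $1$ in $w$), and then use $\int_\epsilon^\infty f''(w)\,\mathrm{d}w=\int_\epsilon^\infty w\,f''(w)\,\mathrm{d}w=0$; here the distributional identities \eqref{6.12}, \eqref{6.13} are exactly what evaluates the elementary contour integrals. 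The one genuine obstacle throughout is the borderline $O(1)$ behaviour of the rational factor at infinity, which blocks both naive absolute convergence and naive contour-closing at $t=0$; the integration-by-parts step, which converts smoothness of $f$ into polynomial decay in $z$, is what removes it.
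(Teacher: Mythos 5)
Your argument is correct, but it runs in the opposite direction from the paper's. You integrate over $w$ first, turning $I(t)$ into a $z$-integral against the Laplace transform $\widehat f$, and you purchase absolute convergence at $t=0$ by integrating by parts in $w$; the paper instead keeps the $w$-integral and inverts each $z$-factor separately as a distribution, $\hat F_0=p_t$ (the heat kernel) and $\hat F_j(w)=-\theta(-w)\mathrm{e}^{b_jw}+(a_j+b_j)\delta(w)$, so that $I(t)=\int_\epsilon^\infty f\,\big(p_t*\hat F_1*\cdots*\hat F_m\big)$. In that picture both conclusions are immediate: the total-variation norms of the $\hat F_j$ give the bound in \eqref{d.11} with a constant controlled by $a$ alone (since $\int_{-\infty}^0\mathrm{e}^{\Re b_j w}\,\mathrm{d}w=1/\Re b_j<1/a$), and the limit $t\to 0$ follows because each $\hat F_j$, $j\geq 1$, is supported on $(-\infty,0]$ while $f$ lives on $[\epsilon,\infty)$ and $p_t\to\delta$ --- no contour deformation and no vanishing of $f$ at $w=\epsilon$ is required, only the gap of width $\epsilon$ between the two supports. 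Your route needs two extra inputs, both of which you correctly flag and both of which do hold in the application of Appendix \ref{app.d}: (i) that $f$ and its derivatives vanish at $w=\epsilon$ (true because a $C^\infty_c$ function vanishing identically on $(-\infty,\epsilon)$ vanishes to all orders at $\epsilon$), which is what gives $\widehat f=O(|z|^{-2})$ and legitimizes both the dominated-convergence passage to $t=0$ and the shift $\Gamma_a\to\Gamma_{a'}$; and (ii) a lower bound on $\min_j(\Re b_j-a)$, since your constant degenerates as a pole $b_j$ approaches the contour, whereas the paper's does not --- so your version of the lemma is slightly weaker as a standalone statement, though sufficient where it is used. Your second, residue-based evaluation at $t=0$ (only the double pole at $z=0$ contributes, and it is killed by $\int f''=\int wf''=0$) is a nice concrete check and is indeed the same mechanism as \eqref{6.12}--\eqref{6.13}.
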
 
\begin{proof} 
The $z$-integrand is a product of $\mathrm{e}^{zw}$ and
\begin{equation}\label{d.12}
F_0(z) = \mathrm{e}^{\frac{1}{2}z^2t}\,, \quad  F_j(z) = \frac{z +a_j}{z - b_j} \,.
\end{equation} 
As distributions we define
\begin{eqnarray}\label{d.13}
&&\hspace{-8pt}\hat{F}_0(w) = \int_{\Gamma_a}\mathrm{d}z\,\mathrm{e}^{\frac{1}{2}z^2t}\mathrm{e}^{zw} = p_t(w)\,,\\ \label{d.14}
&&\hspace{-20pt}\quad \hat{F}_j(w) = \int_{\Gamma_a}\mathrm{d}z\,
\mathrm{e}^{zw}\frac{z +a_j}{z - b_j} = -\theta(-w)\mathrm{e}^{b_jw} + (a_j + b_j) \delta(w)\,.
\end{eqnarray} 
Then $I(t)$ is expressed as an $(m+1)$-fold convolution,
\begin{equation}\label{d.15}
I(t) = \int_\epsilon^\infty \mathrm{d}w f(w) (F_0*F_1*\cdots*F_m)(w)\,. 
\end{equation} 
Since $|\theta(-w)\mathrm{e}^{b_jw}| <1$, one obtains the bound of (\ref{d.11}). $\hat{F}_j$ is supported on
$(-\infty, 0]$, $f$ on $[\epsilon,\infty)$, and $\lim_{t \to 0} p_t(w) = \delta(w)$, which establishes the limit of
(\ref{d.11}).\medskip
\end{proof}

Next note that 
\begin{equation}\label{d.16}
|S(z_\alpha,z_\beta)| \leq c(1+ |z_1|+|z_2|)\,.
\end{equation} 
Hence
\begin{eqnarray}\label{d.17}
&&\hspace{-20pt} \big|  \int_{\Gamma_{a}}\mathrm{d}z_\ell
  \int_\epsilon^\infty \mathrm{d}w g(w,\underline{z}^{{\vee}\ell})  \mathrm{e}^{z_\ell(w +y_N- y_\ell)}  \mathrm{e}^{\frac{1}{2}z_\ell^2t} \big| \big|\prod_{j\in B} S(z_N,z_j)\prod_{\{\alpha,\beta\} \in \mathrm{In}(\sigma), \,\alpha \neq N} S(z_\alpha,z_\beta) 
\nonumber\\
&&\hspace{20pt} 
\times
 \prod_{j=1,j \neq \ell}^N \mathrm{e}^{z_j(x_{\sigma^{-1}(j)} - y_j)} \mathrm{e}^{\frac{1}{2}z_j^2t}\big| \leq P_N 
  (|\underline{z}^{{\vee}\ell}|)\sup_{w}|g(w,\underline{z}^{{\vee}\ell}) |
\end{eqnarray}
uniformly in $t$ with some polynomial $P_N$ at most of order $N$. Thus we can use dominated convergence to conclude that
\begin{equation}\label{d.18}
\lim_{t \to 0}  I_\sigma(y;f,t)=0 \,.
\end{equation}

\end{appendix}

\end{document}